\pgfplotsset{compat=1.17}
\newtheorem{theorem}{Theorem}[section]
\newtheorem{lemma}[theorem]{Lemma}
\newtheorem{definition}[theorem]{Definition}
\newtheorem{observation}[theorem]{Observation}
\newtheorem{remark}[theorem]{Remark}
\newtheorem{example}[theorem]{Example}
\newcommand{\calE}{\mathcal{E}}
\newcommand{\envy}{\ensuremath{\mathsf{envy}}}
\newcommand{\tenvy}{\ensuremath{\mathsf{TotalEnvy}}}
\newcommand{\maxenvy}{\ensuremath{\mathsf{MaxEnvy}}}
\newcommand{\numenvy}{\ensuremath{\mathsf{Envy}}}
\title{Fair Societies: Algorithms for House Allocations}
\author{
  Hadi Hosseini \\
  Pennsylvania State University, USA \\
  \texttt{hadi@psu.edu}
  \and
  Sanjukta Roy \\
  Indian Statistical Institute, Kolkata \\
  \texttt{sanjukta@isical.ac.in}
  \and
  Aditi Sethia \\
  Indian Institute of Science, Bangalore \\
  \texttt{aditisethia@iisc.ac.in}
}
\date{}
\begin{document}

\maketitle 

\begin{abstract}

\emph{House Allocations} concern with matchings involving one-sided preferences, where \emph{houses} serve as a proxy encoding valuable indivisible resources  (e.g. organs, course seats, subsidized public housing units) to be allocated among the agents. Every agent must receive exactly one resource.  
We study algorithmic approaches towards ensuring fairness in such settings. Minimizing the number of envious agents is known to be NP-complete \citep{KMS2021}. We present two tractable approaches to deal with the computational hardness. 
When the agents are presented with an initial allocation of houses, we aim to refine this allocation by reallocating a bounded number of houses to reduce the number of envious agents. We show an efficient algorithm when the agents express preference for a bounded number of houses. Next, we consider single peaked preference domain and present a polynomial time algorithm for finding an allocation that minimize the number of envious agents. We further extend it to satisfy Pareto efficiency. Our former algorithm works for other measures of envy such as total envy, or maximum envy, with
suitable modifications. Finally, we present
an empirical analysis recording the fairness-welfare trade-off
of our algorithms.

\end{abstract}

\section{Introduction}

The problem of house allocation comprises of a set $H$ of $m$ houses to be allocated to a set $N$ of $n$ agents with preferences (cardinal utilities or rankings) such that every agent gets exactly one house. Such one-sided matching problems appear in a wide range of domains, from economic housing markets \citep{SHAPLEY197423} and logistical tasks such as dormitory or course-seat allocations among students \citep{budish2011combinatorial} to critical healthcare domains like allocating donor kidneys among patients \citep{CARAGIANNIS2015}. The rapid integration of automated decision-making processes has made the question of fairness in one-sided matching problems more relevant than ever. 

The gold standard of fairness is \emph{envy-freeness}, where no agent envies the houses allocated to any other agent. In particular, they prefer their allocated house more than any other allocated house. \citet{GSV2019} demonstrated that if such an envy-free allocation exists, it can be computed efficiently. However, such allocations may not even exist, making it imperative to  minimize the unavoidable envy. A few objectives of interest that are studied in the literature (see \citet{Erel2021,KMS2021,Madathil2025}) are as follows: minimizing the sum of envy experienced by all the agents ($\tenvy$), the number of envious agents ($\numenvy$), and the maximum envy experienced by any agent ($\maxenvy$). Interestingly,  the latter two are known to be NP-hard\footnote{If $m\leq n$, the problems are easy (Hosseini et al. \citeyear{Hosseini2024}).} to compute even for binary valuations \citep{KMS2021} and for weak ordinal preferences \citep{Madathil2025}, while the complexity of the former is not known in general.

\subsection{Our Contributions}
 In this work, we focus on minimizing the number of envious agents ($\numenvy$). 
 The computational question of minimizing the number of envious agents is notoriously hard. It is hard to approximate to within a factor of $n^{1-\gamma}$ for any constant $\gamma>0$   where $n$ is the number of agents \citep{KMS2021}, and the exact computation is W[1]-hard with respect to the minimum number of envious agents, even for binary preferences of length three or ordinal preferences \citep{Madathil2025}.
 Towards tractability, we ask: ``given an allocation, is it possible to modify it in order to move closer to a fair allocation minimizing $\numenvy$?" 
\citet{Hosseini2024} proposed to focus on finding fair solutions within the space of all efficient solutions; these are primarily tractable, however, they can be far from optimal fair solutions.
To reach closer to an optimal allocation, we propose an approach that focuses on refining a given allocation via reallocating houses to achieve  fairness, by sequentially expanding the search space. The  benefits are (i) to the best of our knowledge, it is the first tractable approach in the general setting (ii) it allows control over how much reallocation is required or desirable.
 
 


\paragraph{Fixed Parameter Tractability.} Given an allocation, we propose a generic refinement framework via reallocations that enables computation of fairer allocations. 
To this end, we design a \emph{fixed-parameter tractable (FPT)}\footnote{An FPT algorithm with respect to a parameter $\ell$ runs in time $f(\ell) (n+m)^{O(1)}$ for a computable function $f$.} algorithm for the following decision problem: given an allocation $\hat{A}$ and integers $k$ and $q$, can $\hat{A}$ be refined to an allocation $A$ such that $\numenvy$ reduces by $k$ with at most $q$ reallocations (\Cref{thm:fpt_enviousagents}). The parameter is $q+d$, where $d$ is the maximum degree of the associated preference graph. It is relevant to note that our framework is measure-oblivious, and works for any measure of envy -- $\numenvy$, $\tenvy$ or $\maxenvy$ (Theorem \ref{thm:tenvy_maxenvy}).



In a housing market with a large number of houses (i.e., $m$ is large), it is likely that $d$ is small as it is infeasible for an agent to express its preference over all the houses. Agents' approvals of houses are also restricted by locations. A house in upstate wouldn’t be approved by all those who want to live in Manhattan, while a house in Manhattan won’t be liked by someone looking for a peaceful, cheaper, and bigger residence. 
Additionally, certain markets, such as Singapore’s Housing Development Board, also impose quota constraints, wherein each housing project must hold a certain percentage of every major ethnic group \citep{Benabbou18,benabbou2020price}. Consequently, a Chinese applicant may rank a flat in an 87\% Chinese neighborhood lower, recognizing that the ethnic quota renders her ineligible for it. Also, implementing a large number of reallocations (i.e., large value of $q$) can be practically challenging, thereby, justifying the choice of our parameters.

%

\paragraph{Polynomial Time Algorithms.}
Given that minimizing \numenvy{} is hard even for weak ordinal rankings, in search for tractability, we explore single-peaked and single-dipped preferences. These constitute an important preference domain in decision-making problems, which not only model many real-world settings including house allocations and matching markets \citep{BADE201981,beynier2021swap,tamura2023object} but also serve as a tractable realm for many hard problems, albeit not always \citep{10.1145/1562814.1562832}.
In many cases agents' housing preferences are influenced by availability of facilities. For example, elderly individuals may prioritize houses near a hospital, while couples with children might prefer houses closer to a school. The value they assign to a house decreases as its distance from their preferred location increases, producing single peaked preferences. We show that for single-peaked/dipped preferences, an allocation with minimum $\numenvy$ can be found in polynomial time (\Cref{thm:singlepeaks,thm:single-dipped}). Focusing on efficiency, we observe that, although, a Pareto optimal (PO) allocation may not be 
compatible with minimizing \numenvy{}, we can decide the existence in polynomial time (\Cref{thm:singlepeakedPO}). 
We also present structural properties for single-peaked and single-dipped preferences.

\paragraph{Experiments.} Our empirical analysis is done on synthetic data with cardinal preferences. It shows the effects of reallocations on welfare loss and $\numenvy$, and how quickly we can converge (based on the values of $q$) to an optimal allocation given various initial allocations with welfare guarantees such as Nash (geometric mean of agents' utilities) or egalitarian (minimum utility of any agent) welfare. (See Section \ref{sec:cardinal_prelims} for formal definitions.) We average over $100$ instances with $6$ agents and $11$ houses and record (i) the loss in welfare and (ii) the decrease in $\numenvy$ as we increment the number of reallocations  $q$ from $1$ to $n$. We also implement our algorithms for single-peaked/dipped preferences and observe that the welfare loss due to minimizing $\numenvy$ is insignificant when cardinal preferences conform to these structures.


\subsection{Additional Related Work}
House allocations have been studied since early 1970s with various models: existing tenants \citep{SHAPLEY197423}, and new applicants  \citep{HZ1979,ABDULKADIROGLU1999233}. The concept of fairness in house allocations is more recent and was first explored by \citet{BCGHLMW2019,KMS2021,Madathil2025,Hosseini2024}. 
\citet{Erel2021} looked into a relaxed variant where each agent receives at most one house and developed an efficient algorithm to find an envy-free matching of maximum cardinality under binary utilities. \citet{Shende2020StrategyproofAE} examined the interplay between envy-freeness and strategy-proofness. Minimization of various envy measures across all edges in an underlying graph on agents has also been looked at \citep{HPSVV23,10.5555/3635637.3662936,dey2025}. \citet{CHOO2024107103} discussed envy-free house allocations in relation to subsidies.
Adjusting a given allocation for achieving fairness was studied by \citet{He19,friedman2015dynamic,friedman2017controlled} for online fair division settings. 

Single-peaked preferences were first formalized by \citet{2d319a4c-5488-3eef-ab04-29e140a016b3}. A significant literature in social choice has also focused on characterizing single-peaked preferences \citep{e9a63ef1-5488-3dda-8922-497535728793, Elkind2020, PUPPE201855} and they have been studied for domains like voting and electorates \citep{conitzer2007eliciting, 10.1145/1562814.1562832,sprumont1991division} among others. For more details, we refer the reader to the survey of preference restrictions in social choice \citep{elkind2022preference}.

\section{Preliminaries}
\label{sec:prelims}
We denote the set of integers $\{1, 2, \ldots t\}$ by $[t]$. An instance of the house allocation problem with ordinal preferences is given by $\mathcal{I} = (N, H, \succeq)$, where $N$ is a set of $n$ agents, $H$ is a set of $m$ houses, and $\succeq = \{\succeq_{i}|i\in N\}$ is the ranking profile with $\succeq_i$ being the ranking of agent $i$ over (possibly, a subset of ) houses $H$. We assume $m>n$. We use $h\succeq_i h'$ to denote that agent $i$ prefers houses $h$ and $h'$ equally, otherwise we use $h\succ h'$.  

\paragraph{Allocations.}
An allocation $A: N \rightarrow H$ is an injective mapping from the set $N$ of agents to the set $H$ of houses where each agent gets exactly one house. The house allocated to agent $i$ under the allocation $A$ is denoted by $A(i)$. For two allocations $A$ and $A'$, we use $A \Delta  A'$ to denote their symmetric difference.

\paragraph{Fairness.}
    Given an allocation $A$, we say that agent $i$ \emph{envies} agent $j$ if $i$ ranks $j's$ house better than its own house, i.e., $A(j)\succ_i A(i)$. The pairwise envy is defined as $\envy_{i,j} (A):= \mathbb{I}[A(j)\succ_i A(i)]$ where $\mathbb{I}$ denotes the indicator function which is one if the condition is satisfied and zero otherwise. The amount of envy experienced by $i$ is given by $\envy_i(A) : = \sum_{j \neq i} \envy_{i, j}(A)$.
   %
%
An allocation $A$ is \textit{envy-free} (EF) if for every agent $i \in N$ we have 
$\envy_{i} (A) = 0$.
Given an allocation $A$, we denote by $\calE(A)$ the set of all envious agents.
That is, $\calE(A) = \{ i\in N| \envy_{i} (A) > 0\}$. 
We consider the problem of minimizing the number of envious agents ($\numenvy$) in an allocation $A$. Formally, $\numenvy(A) := |\calE(A)|$. The total envy of an allocation $A$ is defined as the sum of the amount of envy experienced by all the agents: $\tenvy(A):= \sum_{i \in \calE(A)} \envy_i(A)$. 
Likewise, the maximum envy of an allocation $A$ is the maximum envy experienced by any agent: $\maxenvy(A):= \max_{i \in \calE(A)} \envy_{i}(A)$. 



\paragraph{Efficiency.} An allocation $A$ is said to be \textit{Pareto dominated} by another allocation $A'$ if at least one agent gets a strictly better house and no agent gets worse-off under $A'$. A Pareto optimal allocation is not Pareto dominated by any allocation. All the missing proofs are deferred to the appendix.

\section{Refining Fairness: A Parameterized Algorithm }
\label{sec:fpt}



\citet{KMS2021} showed that we cannot design a polynomial time algorithm to find a minimum \numenvy{} allocation for binary preferences, unless P = NP. Furthermore, it cannot have an FPT algorithm with respect to the number of envious agents $k$ even for weak ordinal preferences, 
\cite{Madathil2025}. We design a FPT algorithm that, given an initial allocation $\hat{A}$, computes the ``best'' allocation that is at most $q$ reallocations away from $\hat{A}$, parameterized by $q$ and the degree of associated preference graph. 
We first state the main result of this section.

\begin{theorem}
\label{thm:fpt_enviousagents}
    Given an instance $\mathcal{I} = (N, H, \succeq)$ of house allocation, a complete allocation $\hat{A}$, and two positive integers $k$ and $q$, deciding if there is an allocation $A$ such that $\numenvy(A) \leq \numenvy(\hat{A})-k$ and $|A~\Delta~\hat{A}| \leq q$ admits an algorithm that runs in time $O^*(3^{3q(d+1)})$ (randomized) or $O^*(8^{qd\log(3q(d+1)})$ (deterministic), where $d$ is the maximum degree of any vertex in the associated preference graph $G$.
\end{theorem}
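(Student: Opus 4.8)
The plan is to exploit the \emph{locality} of reallocations: although envy is a global property, the \emph{change} in $\numenvy$ caused by passing from $\hat A$ to $A$ is governed by a bounded-size region of the preference graph $G$. Write $S = \{i : A(i) \neq \hat A(i)\}$ for the set of reallocated agents; since $|A~\Delta~\hat A| \leq q$ we have $|S| = O(q)$. The only houses whose allocation status differs between $\hat A$ and $A$ are the vacated houses $\hat A(S)$ and the newly occupied houses $A(S)$, at most $O(q)$ of them. An agent $i \notin S$ keeps $A(i) = \hat A(i)$, so its envy can flip only if one of these $O(q)$ changed houses is ranked by $i$ above $\hat A(i)$; as each house is ranked by at most $d$ agents, the set of agents whose envy status can possibly change is confined to a \emph{relevant set} $R$ of size $O(qd)$ surrounding $S$. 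First I would prove this locality formally and show that, given the restriction of $A$ to $R$, the quantity $\numenvy(A) - \numenvy(\hat A)$ can be read off locally; the identical bookkeeping computes $\tenvy$ or $\maxenvy$, which is what makes the framework measure-oblivious (\Cref{thm:tenvy_maxenvy}).

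The algorithmic engine is \emph{color coding}. I would randomly color the vertices of $G$ from a palette of constant size so that, since the relevant set has $O(q(d+1))$ vertices, with probability at least $3^{-O(q(d+1))}$ the vertices of the sought set $R$ receive a canonical coloring that (i) marks the candidate reallocated agents together with their candidate vacated and newly occupied houses, and (ii) separates $R$ from the rest of the graph. Conditioned on a good coloring, the search collapses to a bounded colored instance: the symmetric difference $A~\Delta~\hat A$ is a disjoint union of $\hat A$/$A$-alternating paths and cycles of total length at most $q$, and we must select such an alternating structure on the colored vertices that is feasible as a reallocation and drives $\numenvy$ down by $k$. Repeating the random coloring $3^{O(q(d+1))}$ times yields the randomized running time, and standard derandomization via perfect hash families or universal sets replaces the trials by an explicit family of colorings, contributing the additional $\log(3q(d+1))$ factor in the exponent of the deterministic bound.

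It then remains to solve the colored, bounded instance in polynomial time. Here I would set up a dynamic program (equivalently, a small assignment computation) over the alternating paths and cycles, whose state records the reallocated houses committed so far and, for each affected bystander $i \in R \setminus S$, enough information to fix its final envy status. The delicate point is that a bystander $i$ is envious under $A$ iff \emph{some} house it ranks above $\hat A(i)$ is allocated; hence curing $i$'s envy requires that \emph{all} of its (up to $d$) higher-ranked allocated houses be vacated, so the state must track the joint status of these houses rather than a single witness. This is precisely where the parameter $d$ enters the state space.

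I expect the main obstacle to be twofold. First, one must establish that the relevant set is genuinely of size $O(qd)$ and that no change of envy can occur outside $R$ -- the subtlety being vacated versus newly occupied houses, and agents who remain envious for reasons untouched by the reallocation. Second, one must prove that the colored subproblem is solvable in polynomial time despite the entanglement between reallocated agents competing injectively for the same houses and the induced flips in bystander envy; the alternating-path/cycle decomposition is what makes a tractable dynamic program possible. Once these are in place, the running-time accounting for the color coding and its derandomization is routine and yields \Cref{thm:fpt_enviousagents}.
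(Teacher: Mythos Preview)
Your high-level plan---locality plus color coding---matches the paper, but the final step diverges in a way that leaves a real gap. The paper does \emph{not} solve the colored subproblem by a dynamic program over alternating paths and cycles. Instead, it introduces the notion of a \emph{minimal improvement set}: a maximal collection of $\hat A$-alternating paths/cycles whose contributions to the envy drop are \emph{not} additive over any nontrivial partition (\Cref{def:improvementset}). The crucial structural lemma (\Cref{lem:S_connected}) is that every minimal improvement set induces a \emph{connected} subgraph of $G$. The coloring (of vertices \emph{and} edges, with three colors) is designed so that, after deleting blue edges, each minimal improvement set sits in its own connected component. Distinct components are then independent by construction (dependent ones are filtered out), so the final combination is a \emph{knapsack} over components with profit $r_C$ and weight $n_C$, not a DP over paths.

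Your DP proposal runs into exactly the phenomenon illustrated in \Cref{ex:TEnotminE}: the envy drop is not additive over individual alternating paths, so a DP that commits to paths one at a time with a state recording only ``reallocated houses so far'' cannot correctly aggregate the envy change without tracking, for every potentially affected bystander, the joint allocation status of \emph{all} houses it ranks above $\hat A(i)$. You note this entanglement but do not say how the state stays polynomial; naively it is exponential in the region size. The paper sidesteps this entirely by pushing the non-additivity inside each component (where $r_C$ is computed directly as $\numenvy(\hat A)-\numenvy(\hat A\oplus C)$) and making the across-component combination additive, hence knapsack-solvable. To close the gap you would need either to rediscover the minimal-improvement-set decomposition and its connectivity, or to argue that brute force over the $O(qd)$-vertex region already fits the stated bound (it does, since $(qd)^{O(q)}$ is dominated by $3^{O(qd)}$), in which case the ``polynomial-time DP'' claim should be dropped.
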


\paragraph{Intuition and Challenges.}
Our algorithm starts with an allocation $\hat{A}$. 
To find an allocation with reduced $\numenvy$ from an initial allocation, arguably, the most employed technique is to iteratively eliminate envy cycles. An envy cycle is a directed cycle of agents where each agent envies its out-neighbor. 
To eliminate a cycle, each agent gives her house to the previous agent (who envies her) on the cycle. 
$\numenvy$ can be further reduced by unallocating some  houses and reallocating their occupants to currently unallocated houses. 
These reallocations can be systematically done using $\hat{A}$-alternating paths. An $\hat{A}$-alternating path starts at an unallocated house, continues with a sequence of allocated agent-house pairs, and ends at an allocated house. Thus, changing the allocation along an alternating path unallocates one house and reallocates all the agents on the path. In order to make an envious agent envy-free, we may need to unallocate multiple houses simultaneously (see \Cref{ex:TEnotminE}).
The challenge is to identify the paths and the subset of houses to be allocated (a minimal improvement set as in \Cref{def:improvementset}).
If we need to identify multiple paths that reallocate some $q \in [n]$ agents on the paths,
a brute-force algorithm takes time $m^{O(q)}$ to find the agents and another $O(q^d)$ to identify the houses on the paths, where $d$ is the maximum number of houses in an agent's preference list. We design an algorithm that finds the minimum $\numenvy$ that can be achieved by at most $q$ reallocations from $\hat{A}$, and runs in time $3^{O(qd)}$ which is significantly better when $m$ is large (note $m>n$, more discussion in \Cref{rem:localsearchtime}).
If a few agents require reallocation in $\hat{A}$, then we can converge to an optimal allocation faster (for small $q$). 

\paragraph{Notations.}
The {\em preference graph} $G$ is the bipartite graph between agents and houses, i.e., $G=(N\cup H,E)$ where $E =\{(i,h): \text{agent $i$ ranks house $h$}\}$. 
Let $A$ be an allocation. Then $A$ is a matching in $G$. An \emph{$A$-alternating path} $P$  is a path in $G$ whose edges alternate between allocated and non-allocated edges. We call it an \emph{$A$-alternating cycle} if it starts and ends at the same vertex. 
Then, for an $A$-alternating path/cycle $P$, we define $A\oplus P$ as the allocation obtained from $A$ by removing the edges that appear in both  $A$ and $P$, and adding the edges in $P$ that do not appear in $A$. If $X = \{P_1, \ldots P_s\}$ is a collection of $s$ alternating paths and cycles, then $A \oplus X = A \oplus P_1 \oplus \ldots \oplus P_s$. Suppose that $X$ induces a component $C$ in $G$. Then, we slightly abuse the notation to write $A \oplus C$ instead of $A \oplus X$.
On the other hand, given two allocations $A$ and $\hat{A}$, the symmetric difference, denoted by $A~\Delta~\hat{A}$ is 
a set of $\hat{A}$-alternating path(s) and cycle(s) in $G$ (that are also $A$-alternating path(s) and cycle(s), respectively in $G$). See Example \ref{ex:TEnotminE}. Observe that if $A$ is a complete allocation, then $A \oplus P$ is also a complete allocation where $P$ is a $A$-alternating path/cycle starting at an unallocated house and ending at an allocated house in $A$.



\begin{example}\label{ex:TEnotminE} Consider the following instance with $5$ agents and $8$ houses.
        The allocation $\hat{A} = \{(i_z,h_z)| z \in [5]\}$  is a complete allocation. Note that $\numenvy(\hat{A}) = 5$. There are no envy cycles. Three $\hat{A}$-alternating paths  are $P_1= (h_8,i_1,h_1,i_2,h_2)$, $P_2= (h_7,i_3,h_3,i_4,h_4)$, and $P_3= (h_6,i_5,h_5)$. 
    Furthermore, for each $P_z$, $z \in [3]$, it holds that $\numenvy(\hat{A} \oplus P_z)=\numenvy(\hat{A}) =5$. Let $A = \hat{A} \oplus P_1\oplus P_2 \oplus P_3$. Then, $A\Delta\hat{A}= \{P_1,P_2,P_3\}$, and $A = \{(i_1,h_8),(i_2,h_1),(i_3,h_7), (i_4,h_3), (i_5,h_6)\}$. Surprisingly, $\numenvy(A)= 0$.
    \small{\begin{align*}
        i_1~:~  & h_5 \succeq h_2\succeq h_4 \succ h_8 \succ h_1 \\
        i_2~:~ & h_5 \succeq h_4\succ h_2 \succ h_1 \succ h_8 \\
        i_3~:~ & h_5 \succeq h_2\succ h_4 \succ h_7 \succ h_3 \\
        i_4~:~ & h_5 \succeq h_2\succ h_4 \succ h_3 \succ h_7 \\
        i_5~:~ & h_2 \succeq h_4\succ h_5 \succ h_6 \succ h_1 
    \end{align*}}
    
\end{example}

Therefore, we conclude that envy does not change additively over the alternating paths. That is,
\begin{equation*}
\numenvy(A) < \numenvy(\hat{A}) + \displaystyle\sum_{i=1}^3  \numenvy\text{-drop}(\hat{A},P_i).
\end{equation*}
where $\numenvy\text{-drop}(\hat{A},P_i)=\numenvy(\hat{A}) -\numenvy(\hat{A} \oplus P_i)$ denotes the decrease in number of envious agents in $\hat{A}\oplus P_i$ compared to $\hat{A}$.

\paragraph{Overview of \Cref{alg:minEnvylocal}.} Our algorithm works in two phases. In the first phase, we identify a class $\mathcal{C}$ of subsets of $\hat{A}$-alternating paths and cycles (via randomized coloring or universal set family) such that we can reach the desired allocation using some subsets from $\mathcal{C}$. In the second phase, we begin by deleting the components in $\mathcal{C}$ that are not \emph{
feasible} (Definition \ref{def:feasible_comps}). Then, we identify \emph{minimal improvement sets}, (\Cref{def:improvementset}) of $\mathcal{C}$ such that the number of envious agents decreases by $k$ and no more than $q$ agents' allocations are changed. Towards this, let $C \in \mathcal{C}$ be a feasible component. Then, $C$ contains a set of $\hat{A}$-alternating paths and cycles.
We denote the decrease in the number of envious agents   $\numenvy(\hat{A})-\numenvy(\hat{A}\oplus C)$ as $r_C$. Note that $r_C$ can be negative if the number of envious agents in $\hat{A}\oplus C$ is more than that of $\hat{A}$. 
Additionally, $n_C$ denotes the number of reallocated agents in $\hat{A}\oplus C$. Finally, we solve a knapsack on $\mathcal{C}$ using $r_C$ as profit and $n_C$ as cost to obtain a desired allocation.

\begin{algorithm}[t]
\caption{Reducing $\numenvy$ by with at most $q$ reallocations}
\label{alg:minEnvylocal}
\begin{algorithmic}[1]
\Require $(N, H, \succeq)$, an allocation $\hat{A}$, $k,q \in \mathbb{Z}_{\geq 0}$
\Ensure An allocation $A$ with $\numenvy(A) =\numenvy(\hat{A}) -k$ and $|A~\Delta~\hat{A}|\leq q$.


\State Construct a bipartite graph $G = (N \cup H, E)$ where $(i,h) \in E$ if $i$  ranks $h$ for $i \in N$ and $h \in H$.

\State Let $\chi$ be a coloring of the vertices and edges of $G$ with three colors red, green, and blue uniformly at random.

\State Let $E_B =\{e\in E | \chi (e) = blue\}$, and $\mathcal{C}$ be the connected components of $G -E_B$.

\For{$C \in \mathcal{C}$}
\State if $C$ is \emph{not feasible}, then delete $C$
\EndFor

\For{$C \in \mathcal{C}$}
\State $r_C:= \numenvy(\hat{A}) - \numenvy(\hat{A} \oplus C)$
\State $n_C:=$ number of agents in $C$  
\EndFor

\State Using knapsack algorithm find a subset $X$ of $\mathcal{C}$ such that $\sum_{C \in X} r_C \geq k$ and $\sum_{C \in X} n_C \leq q$ 
\State \Return $A = \hat{A} \oplus X$
\end{algorithmic}
\end{algorithm}

We begin with some  definitions.
\begin{definition}[Dependent Set] Let $\hat{A}$ be a complete allocation. A subset $ T = \{T_1, T_2, \ldots T_t\}$ of $\hat{A}$-alternating paths/cycles  is said to be \textit{dependent} if $ \numenvy(\hat{A} \oplus T_{1} \oplus \ldots \oplus T_{t}) \neq \numenvy(\hat{A}) - \sum_{\ell \in [t]} (\numenvy(\hat{A}) -\numenvy(\hat{A} \oplus T_\ell))$. 
\end{definition}

\begin{definition}
[Minimal Improvement Set]\label{def:improvementset}
Let $T$ be a set of pairwise disjoint  $\hat{A}$-alternating paths/cycles.
A subset $S \subseteq T$ is an \emph{improvement set for $\hat{A}$} if  there exists a positive integer $k$ such that for each subset $S' \subseteq T\setminus S$ and some integer $k'$, it holds that (i) $\numenvy(\hat{A} \oplus S) = \numenvy(\hat{A})-k$, (ii) $\numenvy(\hat{A} \oplus S') = \numenvy(\hat{A}) -k'$, and (iii) $\numenvy(\hat{A} \oplus S \oplus S') = \numenvy(\hat{A}) - (k+k')$.
Further, subset $S \subseteq T$ is a \emph{minimal improvement set} if no subset of $S$ is an improvement set.
\end{definition}


In \Cref{ex:TEnotminE}, set $\{P_1,P_2,P_3\}$ is the only improvement set and thus it is minimal. In Observation \ref{obs:minimalIS}-Lemma \ref{lem:Sisfeasible}, we prove properties of a minimal improvement set that will be helpful in our proof.
Observation \ref{obs:minimalIS}
 follows from the definition.
\begin{observation}
\label{obs:minimalIS}
    Let $S$ be a minimal improvement set for $\hat{A}$. Then,  every path/cycle in $S$ is dependent on at least one other path/cycle in $S$. 
\end{observation}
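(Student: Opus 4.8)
The plan is to prove the contrapositive, showing that if some path or cycle $P \in S$ is \emph{independent} of every other member of $S$ (i.e.\ not dependent on any $T_\ell \in S \setminus \{P\}$), then $S$ fails to be \emph{minimal}. The key observation driving this is the following claim about additivity: if a path/cycle $P$ is independent of the rest of $S$, then toggling $P$ contributes its \numenvy-drop additively when combined with $S \setminus \{P\}$, and therefore $S \setminus \{P\}$ should itself witness an improvement set, contradicting minimality of $S$.

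\medskip

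First I would unpack the definitions on the two relevant pieces. Write $S' := S \setminus \{P\}$. By the definition of an improvement set, $S$ satisfies conditions (i)--(iii) of \Cref{def:improvementset}; in particular there is an integer $k$ with $\numenvy(\hat{A} \oplus S) = \numenvy(\hat{A}) - k$, and the ``no interference'' condition (iii) holds against every subset of $T \setminus S$. The assumption that $P$ is not dependent on the remaining paths/cycles of $S$ means, by the Dependent Set definition applied to the collection $\{P\} \cup S'$, that the \numenvy-drop decomposes additively:
\begin{equation*}
\numenvy(\hat{A} \oplus S) \;=\; \numenvy(\hat{A}) - \bigl[(\numenvy(\hat{A}) - \numenvy(\hat{A} \oplus P)) + (\numenvy(\hat{A}) - \numenvy(\hat{A} \oplus S'))\bigr].
\end{equation*}
The goal is to leverage this to show $S'$ is itself an improvement set, so that $S$ has a proper subset (indeed $S'$ or a subset thereof) that is an improvement set, contradicting minimality. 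The crucial consistency check is that the independence and additivity carried by $S$ against $T \setminus S$ must be inherited by $S'$ against the larger complement $T \setminus S' = (T \setminus S) \cup \{P\}$.

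\medskip

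The step I expect to be the main obstacle is exactly this inheritance argument: verifying condition (iii) for $S'$ against \emph{every} $S'' \subseteq T \setminus S'$, including those $S''$ that contain $P$. For $S'' \subseteq T \setminus S$ (not containing $P$) the required additivity is immediate from $S$ being an improvement set, since adding $P$ to both sides does not disturb the relation when $P$ is independent. The delicate case is $S''$ containing $P$, say $S'' = \{P\} \cup S''_0$ with $S''_0 \subseteq T \setminus S$; here I would argue that because $P$ is independent of the union $S' \cup S''_0$ of disjoint alternating paths/cycles (all drawn from the pairwise-disjoint family $T$), the \numenvy-drops again split additively, and then combine this with the improvement-set property of $S$ applied to $S''_0$. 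The technical heart is therefore to establish a robust additivity/independence lemma: independence of $P$ from a family is preserved under enlarging that family by disjoint paths/cycles, which should follow from the fact that disjoint $\hat{A}$-alternating paths/cycles affect disjoint sets of agents and hence their individual envy contributions compose predictably. Once that lemma is in hand, conditions (i) and (ii) for $S'$ are routine bookkeeping with the integer $k$ reduced by $P$'s drop, and the desired contradiction with minimality of $S$ follows, completing the contrapositive.
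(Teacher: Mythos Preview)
Your proposal contains a genuine gap. The paper treats this observation as immediate (``follows from the definition''), but more importantly, the argument you sketch does not go through.

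The core issue is the jump from \emph{pairwise} independence to \emph{joint} independence. Your contrapositive hypothesis is that $P$ is not dependent on any \emph{single} $T_\ell \in S\setminus\{P\}$, i.e.\ each two-element set $\{P,T_\ell\}$ is not a dependent set. From this you immediately write
\[
\numenvy(\hat{A}\oplus S)=\numenvy(\hat{A})-\bigl[(\numenvy(\hat{A})-\numenvy(\hat{A}\oplus P))+(\numenvy(\hat{A})-\numenvy(\hat{A}\oplus S'))\bigr],
\]
which asserts that $P$ is independent of $S'=S\setminus\{P\}$ \emph{as a whole}. That does not follow: pairwise additivity of the $\numenvy$-drop does not imply additivity against the union. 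This is exactly the phenomenon highlighted by \Cref{ex:TEnotminE}, where each individual path has zero $\numenvy$-drop yet the three together drop $\numenvy$ by five.

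Your attempted repair, the ``independence lemma'' that enlarging a family by disjoint paths/cycles preserves independence because ``disjoint $\hat{A}$-alternating paths/cycles affect disjoint sets of agents and hence their individual envy contributions compose predictably,'' rests on a false premise. Disjoint alternating paths reallocate disjoint sets of agents, but an agent's envy status depends on what \emph{other} agents hold. So toggling a path disjoint from $P$ can change the envy of agents on $P$ (again, \Cref{ex:TEnotminE}). Consequently neither the lemma nor the inheritance of condition~(iii) for $S'$ against subsets $S''\ni P$ is established, and the contradiction with minimality is not reached. If you want a contrapositive along these lines, you would need to assume $P$ is independent of $S\setminus\{P\}$ \emph{as a set}, not merely of each element; with that stronger hypothesis the argument is closer to what the paper has in mind when it says the claim is definitional.
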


We show that if $\numenvy(\hat{A})$  is not minimum, then an improvement set exists for the allocation $\hat{A}$.


\begin{restatable}{lemma}{lemimproveset}\label{lem:improveset}
Let $A$ and $\hat{A}$ be two allocations. Then, $\numenvy(\hat{A}) - k =  \numenvy(A)$ for some $k>0$ if and only if $S$ is an improvement set, where $S$ is the set of alternating paths and cycles in $A~\Delta~\hat{A}$.
\end{restatable}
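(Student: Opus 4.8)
The plan is to prove both directions of the biconditional by carefully unpacking the definition of an improvement set (\Cref{def:improvementset}) relative to the ground set $T$ of all the pairwise-disjoint $\hat{A}$-alternating paths and cycles that constitute $A~\Delta~\hat{A}$ together with any other disjoint alternating structures we might need. The key observation is that $A = \hat{A} \oplus S$ where $S$ is precisely the collection of alternating paths and cycles in $A~\Delta~\hat{A}$, so the statement $\numenvy(\hat{A}) - k = \numenvy(A)$ is literally condition (i) of the improvement-set definition with this particular $S$.

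For the \emph{reverse} direction (assuming $S$ is an improvement set), I would simply invoke condition (i) of \Cref{def:improvementset}: there exists a positive integer $k$ with $\numenvy(\hat{A} \oplus S) = \numenvy(\hat{A}) - k$. Since $A = \hat{A} \oplus S$, this immediately gives $\numenvy(A) = \numenvy(\hat{A}) - k$ with $k > 0$, as required. This direction is essentially definitional and should take only a line or two.

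The \emph{forward} direction is where the real work lies. Here I assume $\numenvy(A) = \numenvy(\hat{A}) - k$ for some $k > 0$ and must verify that $S$ (the alternating paths/cycles in $A~\Delta~\hat{A}$) satisfies all three conditions (i)--(iii) of the definition. Condition (i) holds immediately since $A = \hat{A} \oplus S$ and $k > 0$. The subtlety is conditions (ii) and (iii), which quantify over \emph{every} subset $S' \subseteq T \setminus S$. The natural choice is to take $T = S$ itself (i.e., the ground set is exactly the paths/cycles of the symmetric difference), so that $T \setminus S = \emptyset$ and the only candidate is $S' = \emptyset$. Then condition (ii) reads $\numenvy(\hat{A} \oplus \emptyset) = \numenvy(\hat{A}) = \numenvy(\hat{A}) - k'$ forcing $k' = 0$, and condition (iii) reads $\numenvy(\hat{A} \oplus S \oplus \emptyset) = \numenvy(\hat{A}) - (k + 0)$, which is just condition (i) again. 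Thus with $T = S$ all three conditions are trivially met.

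The main obstacle I anticipate is pinning down the intended quantification: whether $T$ is an arbitrary fixed ground set containing $S$, or whether we are free to choose $T$. If the lemma is meant to hold for $S$ regarded as an improvement set \emph{within its own self}, the empty-$S'$ argument above suffices. If instead $T$ must contain additional paths/cycles disjoint from $S$, then conditions (ii)--(iii) become genuine additivity requirements, and I would need to argue that the drops in $\numenvy$ add up across disjoint alternating structures that do not interact (i.e., are not dependent on one another), leveraging the fact that disjoint components of $G - E_B$ modify the allocation on disjoint sets of agents and houses, so their effects on the envy relation decompose. I would therefore frame the forward direction by first reducing to the canonical ground set $T = S$, noting that \Cref{def:improvementset} only requires \emph{existence} of a suitable $T$ for which $S$ qualifies, and then confirm that this minimal choice discharges (ii) and (iii) vacuously. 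The remaining care is to ensure $k$ is the same integer throughout, which follows because $A = \hat{A}\oplus S$ fixes it.
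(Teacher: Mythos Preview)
Your proposal is correct and follows essentially the same approach as the paper: in the forward direction the paper also takes the ground set to be the symmetric difference itself (so that the only admissible $S'$ is $\emptyset$, rendering conditions (ii)--(iii) vacuous), and in the reverse direction it likewise reads off $k>0$ directly from condition~(i) together with $A=\hat{A}\oplus S$. Your write-up is in fact more careful than the paper's, which contains a vestigial reference to ``$S_1=T$ and $S_2=\emptyset$'' from an earlier formulation of the definition.
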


\begin{proof} Suppose $\numenvy(\hat{A}) - k =  \numenvy(A)$ for some $k>0$. 
Then, we have $\hat{A} \oplus T  = A$. 
Thus, we get $\numenvy(\hat{A} \oplus T) = \numenvy(A) = \numenvy(\hat{A}) - k$. Also, note that $T$ is nice by setting $S_1 =T $ and $S_2 = \emptyset$ in the definition.
Therefore, $T$  is an improvement set. 
On the other hand, suppose $T$ is an improvement set. Then, we have $\numenvy(A) < \numenvy(\hat{A})$ as $\hat{A} \oplus T= A $. Therefore, we get  $\numenvy(A) - k = \numenvy(\hat{A})$ for some $k > 0$.
\end{proof}


Next, we prove some properties of an improvement set relating it to the preference graph $G$.
\begin{restatable}{lemma}{sisconnected}
\label{lem:S_connected}
    For any minimal improvement set $S$ for an allocation $\hat{A}$, the graph $G[V(S)]$ is connected.\footnote{$G[V(S)]$ is the preference graph on $V(S)$ (vertices of $S$).}
 \end{restatable}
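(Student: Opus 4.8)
The plan is to prove the contrapositive-flavored structural fact by arguing that if $G[V(S)]$ were disconnected, then $S$ would decompose into pieces whose envy contributions are independent, contradicting minimality. Concretely, I would first observe that since $S$ is a set of pairwise disjoint $\hat{A}$-alternating paths and cycles, its vertex set $V(S)$ partitions naturally according to the connected components of $G[V(S)]$. Suppose for contradiction that $G[V(S)]$ has at least two connected components; write $V(S) = V_1 \sqcup V_2$ where $V_1$ is one connected component (a union of some of the paths/cycles of $S$) and $V_2 = V(S)\setminus V_1$ collects the rest. Let $S_1\subseteq S$ be the paths/cycles lying entirely in $V_1$ and $S_2 = S\setminus S_1$ those in $V_2$; because the paths/cycles are disjoint and $V_1$ is a union of components, each element of $S$ lies wholly in one side, so $S = S_1\sqcup S_2$ is a genuine partition with both parts nonempty.

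The key step is to show that this separation forces envy to be additive across $S_1$ and $S_2$, i.e.\ $\numenvy(\hat{A}\oplus S) = \numenvy(\hat{A}) - (k_1+k_2)$ where $k_j = \numenvy(\hat{A}) - \numenvy(\hat{A}\oplus S_j)$. The reason is locality of the envy measure: whether an agent $i$ is envious under an allocation depends only on the house $i$ receives and the \emph{set} of houses handed out to others, and more sharply, the change in $i$'s envy status when we apply $S$ versus $S_1$ alone is governed by the houses that change hands near $i$ in the preference graph. Since $S_1$ and $S_2$ touch disjoint components of $G[V(S)]$, no agent reallocated by $S_1$ has an edge in $G$ to any house reallocated by $S_2$ (and vice versa), so applying $S_2$ cannot alter the envy status of any agent whose allocation was determined by $S_1$, nor of any agent untouched by $S$. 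I would make this precise by checking, agent by agent, that $\envy_i(\hat{A}\oplus S) > 0 \iff \envy_i(\hat{A}\oplus S_1) > 0$ for $i$ incident to $V_1$, and symmetrically for $V_2$, using that $i$'s ranked houses all lie in its own component.

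Given this additivity, I would then invoke the definition of improvement set directly: additivity across the nontrivial partition $S = S_1 \sqcup S_2$ is exactly conditions (i)--(iii) of \Cref{def:improvementset} witnessing that $S_1$ (with complement piece $S_2 \subseteq S\setminus S_1$) is itself an improvement set for $\hat{A}$. Since $S_1$ is a proper nonempty subset of $S$, this contradicts the minimality of $S$ (a minimal improvement set has no proper subset that is an improvement set). Hence $G[V(S)]$ must be connected.

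I expect the main obstacle to be the locality argument establishing additivity --- specifically, arguing cleanly that reallocating houses in one component $V_2$ leaves the envy status of every agent associated with $V_1$ unchanged. The subtlety is that an agent's envy depends on comparing her own house against \emph{all} other allocated houses, so a priori a change anywhere could flip an envy bit; the crux is that the houses an agent $i$ actually ranks (equivalently, her neighbors in $G$) all sit within her component, so houses swapped in a disjoint component are ones $i$ does not rank and hence never envies. I would need to handle this carefully for agents not in $V(S)$ as well (their house is unchanged, but a house they rank could in principle move), again leaning on the component structure of $G[V(S)]$ to confine the effect, and I would state explicitly the locality property of $\numenvy$ that makes the decomposition valid.
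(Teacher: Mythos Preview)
Your approach is essentially the paper's: assume $G[V(S)]$ disconnects as $V_1\sqcup V_2$, split $S=S_1\sqcup S_2$ accordingly, argue that the absence of crossing preference edges makes the $\numenvy$ drop additive across the two pieces, and conclude that a proper subset of $S$ is already an improvement set, contradicting minimality. The paper frames it as ``for any bipartition $S_1\cup S_2=S$ there must be an edge from an agent in $V(S_1)$ to a house in $V(S_2)$,'' but the argument is the same; your component-first phrasing is if anything a bit cleaner because it gives the no-crossing-edge property in both directions at once.

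One remark on the obstacle you single out. Your suggested fix for agents $j\notin V(S)$---``leaning on the component structure of $G[V(S)]$''---cannot work as stated: such a $j$ is not a vertex of $G[V(S)]$ at all, so the component decomposition places no constraint on $j$'s preference edges, and $j$ may rank houses in both $V_1$ and $V_2$. In that situation $j$'s envy bit need not behave additively (for example, $j$ envy-free in $\hat A$ but envious in each of $\hat A\oplus S_1$, $\hat A\oplus S_2$, and $\hat A\oplus S$ contributes $+1$ once, not twice). The paper's proof likewise only tracks agents in $V(S_1)$ and $V(S_2)$ and does not address outside agents, so you are not missing anything relative to the paper; just be aware that the locality claim, as you have sketched it, does not close cleanly for those agents.
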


 \begin{proof} 
 If the set $S$ contains exactly one alternating path / cycle, then we are done. Otherwise, suppose $S = \{T_1, T_2, \ldots, T_s\}$. Since $S$ is minimal, every path/cycle $T_i \in S$ is dependent on at least another path/cycle $T_j \in S$ by \Cref{obs:minimalIS}.

 Now consider two subsets $S_1$ and $S_2$ in $S$ such that $S_1 \cup S_2 = S$. Without loss of generality, we will show that there exists an agent $i \in V(S_1)$ and a house $h \in V(S_2)$ such that $(i, h) \in E(G)$. 
 Suppose not. Then, for each agent $i \in V(S_1)$, we have that $v_i(h) = 0 $ for each house $ h \in V(S_2)$. Then, no agent that appears in $S_1$ values any house that appears in $S_2$. Thus, for each the agents in $S_1$, there is no change in envy due to the allocation or un-allocation of an house in $S_2$. Similarly, we can say that there is no agent in $S_2$ whose envy changes due to allocation or un-allocation of a house in $S_1$. Therefore, let $\numenvy(\hat{A} \oplus S_1) - \numenvy(\hat{A}) = k_1$ and $\numenvy(\hat{A} \oplus S_2) - \numenvy(\hat{A})=k_2$. Then,  we have $\numenvy(\hat{A} \oplus S_1 \oplus S_2) - \numenvy(\hat{A}) = k_1+k_2$. 
 But, by definition, this implies that $S_1$ 
 and $S_2$ are nice subset pairs, contradicting minimality of $S$. 
Hence, we get that $G[V(S)]$ is connected. 
 \end{proof}


\paragraph{Separation of Paths / Cycles.}
A coloring $\chi:V(G) \cup E(G) \rightarrow \{red, green, blue\}$ is a \emph{good} coloring if the following events hold true.

\begin{enumerate}
    \item \label{item:event1} Each agent $i$, house $h$, and the edge $(i,h)$ that appears in a path or cycle in $A~\Delta~\hat{A}$ is colored red. That is, if $(i, h)$ is in $T$, it is colored red. We have, $\chi(i) = \chi(h) = \chi(i,h)=red ~\forall~ i, h, (i, h) \in E(T)$.
    
    \item \label{item:event2} Let $S$ be a minimal improvement set. Then each edge $(i,h)$ in $G$ incident to two vertices of $S$ such that $(i,h) \notin T$ is colored green. That is, $\chi(i, h) = green ~\forall~(i, h) \notin T, i \in S, h \in S$.

    \item \label{item:event3} Let $S$ be a minimal improvement set. Each edge $(i,h)$ in $G$ where either agent $i$ or house $h$ is in $S$ but not both $i$ and $h$ are in $S$, is colored blue. That is, 
    $\chi(i, h) = blue ~\forall~ i \in S, h \notin S ~or ~ i \notin S, h \in S$.
The edges in $G$ that is colored blue is denoted by $E_B$.    
\end{enumerate}

\begin{restatable}{lemma}{probgoodcoloring}
\label{lem:probcoloring}
  Given a coloring $\chi$, the probability that $\chi$ is a good coloring is at least $3^{-3q(d+1)}$.
\end{restatable}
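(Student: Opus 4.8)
The plan is to exploit that $\chi$ colors each vertex and each edge of $G$ \emph{independently and uniformly} at random over the three colors, so that the event ``$\chi$ is good'' is exactly the intersection of a family of events of the form ``element $x$ receives a prescribed color,'' one for each vertex or edge whose color is forced by the three conditions. The first thing I would check is that these forced assignments are mutually consistent, i.e.\ no element is required to take two distinct colors; otherwise the probability would be $0$ rather than a clean power of $1/3$. This holds because $T = A~\Delta~\hat{A}$ is a disjoint union of $\hat{A}$-alternating paths and cycles (the symmetric difference of two matchings), and $S$ is a sub-collection of these paths/cycles. Consequently every edge of $T$ has both endpoints inside a single path/cycle, so it can never be a boundary edge of $S$, and the three conditions partition the forced edges into (i) edges of $T$ (red), (ii) non-$T$ edges with both endpoints in $V(S)$ (green), and (iii) edges with exactly one endpoint in $V(S)$ (blue); vertices are forced only by the first condition. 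Since these are constraints on distinct elements, by independence $\Pr[\chi \text{ good}] = 3^{-N}$, where $N$ is the total number of forced vertices and edges.

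It then remains to prove $N \le 3q(d+1)$, which I would do by a per-vertex charging argument against $V(T)$ (note $V(S)\subseteq V(T)$). The key counting is the bound $|V(T)| \le 3q$. Since the target allocation $A$ differs from $\hat{A}$ in at most $q$ reallocations, the paths and cycles of $T$ together involve at most $q$ agents. Each cycle uses as many houses as agents, while each path uses exactly one more house than agents; moreover the number of paths is at most the number of agents, because every $\hat{A}$-alternating path from an unallocated house to an allocated house contains at least one agent and the paths are vertex-disjoint. Hence $T$ uses at most $2q$ houses, giving $|V(T)| \le q + 2q = 3q$.

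Finally I would observe that every forced element is either a vertex of $V(T)$ or an edge incident to at least one vertex of $V(T)$: the red constraints sit on vertices and edges of $T$ (all incident to $V(T)$), and the green and blue constraints sit on edges touching $V(S)\subseteq V(T)$. Charging each forced vertex to itself and each forced edge to one of its endpoints in $V(T)$, each $v\in V(T)$ absorbs one charge for itself plus at most $\deg_G(v)\le d$ charges from incident edges, i.e.\ at most $d+1$ forced elements. Summing over $V(T)$ yields $N \le (d+1)\,|V(T)| \le 3q(d+1)$, and therefore $\Pr[\chi \text{ good}] = 3^{-N} \ge 3^{-3q(d+1)}$, as claimed. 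The only delicate points are the consistency check in the first paragraph (so that the probability is genuinely $3^{-N}$) and the house count $|V(T)|\le 3q$; both reduce to the structural fact that $A~\Delta~\hat{A}$ decomposes into vertex-disjoint alternating paths and cycles, each path carrying exactly one extra, initially unallocated house.
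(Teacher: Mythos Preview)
Your proof is correct and lands on the same bound $3^{-3q(d+1)}$ as the paper, but the execution differs. The paper argues event-by-event, claiming $q$ agents, $q$ houses and $q$ red edges for condition~(1), then $qd$ green edges and $2qd$ blue edges, summing to $3q(1+d)$; these individual counts are a bit loose (with $q$ reallocated agents there are $2q$ edges in $T$ and up to $2q$ houses, not $q$ of each), though the final total still comes out right. Your charging argument sidesteps this by first proving the sharp bound $|V(T)|\le 3q$ (via the path-versus-cycle house count) and then observing that every forced element is either a vertex of $V(T)$ or an edge touching one, so each vertex absorbs at most $d+1$ charges. This is cleaner and makes the bound robust to exactly how the forced elements split among the three color classes. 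You also make explicit the consistency check---that no element is forced to two different colors---which the paper leaves implicit; your observation that every edge of $T$ lies entirely inside one alternating component of $T$, hence can never be a boundary edge of $S$, is exactly what is needed here.
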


 \begin{proof}
Note that since $|A ~\Delta~ \hat{A}| \leq q$, there are at most $q$
agents in $A$ such that $A(i) \neq \hat{A}(i)$. Then, event (\ref{item:event1}) requires the $q$ agent vertices, $q$ house vertices, and at most $q$ edges between them that appear in $T$ to be colored red, with a probability of $\left(\nicefrac{1}{3}\right)^{3q}$. For the event (\ref{item:event2}), we have at most $2q$ vertices in $S$ and every vertex has degree at most $d$. This gives us at most $qd$ edges which are required to be colored green. This occurs with a
probability of $\left(\nicefrac{1}{3}\right)^{qd}$. Finally, the event (\ref{item:event3})  requires that all the edges (at most $d$) incident to each of the $2q$ vertices in $T$ are colored blue.  This occurs with a probability of $\left(\nicefrac{1}{3}\right)^{2qd}$. Therefore, the probability that all the three events hold is $\left(\nicefrac{1}{3}\right)^{3q + 3qd} = 3^{-3q(1+d)}$.
\end{proof}


Thus, we get a good coloring with high probability if we repeat the above algorithm $3^{3q(d+1)}$ times.

\begin{definition} [Feasible Components] \label{def:feasible_comps} We say that a component $C \in \mathcal{C}$ is \emph{not feasible} if any of the following holds true. (1) There exists a green or blue vertex in $C$. (2) The graph induced by the red vertices and edges in $C$ is not a disjoint union of $\hat{A}$-alternating paths/cycles. (3) There exists an edge in $G$ between two vertices of $C$ that is colored blue. (4) component $C$ is dependent, that is, there exists another component $C' \in \mathcal{C}$ such that  $\numenvy(\hat{A} \oplus C) - \numenvy(\hat{A}) = k_1$, $\numenvy(\hat{A} \oplus C') - \numenvy(\hat{A})=k_2$, and $\numenvy(\hat{A} \oplus C \oplus C') - \numenvy(\hat{A}) < k_1+k_2$. 
A component $C$ is \emph{feasible} otherwise.
\end{definition}

Equipped with the definitions of good coloring and feasible component, we finally prove that each minimal improvement set in $A~\Delta~\hat{A}$ is a feasible connected component in $\mathcal{C}$.

 \begin{restatable}{lemma}{Sisfeasible}     
 \label{lem:Sisfeasible}
     Let $S$ be any minimal improvement set in $A~\Delta~\hat{A}$. Then, in a good coloring, the graph $G[V(S)]$ is a single component in $\mathcal{C}$. Moreover, it is a feasible component. 
 \end{restatable}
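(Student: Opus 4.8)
The plan is to prove the two claims of \Cref{lem:Sisfeasible} in sequence: first that $G[V(S)]$ forms a single connected component of $\mathcal{C}$ in a good coloring, and then that this component is feasible per \Cref{def:feasible_comps}.

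\textbf{Step 1: $G[V(S)]$ is a single component of $\mathcal{C}$.} Recall that $\mathcal{C}$ is defined as the connected components of $G - E_B$, where $E_B$ is the set of blue edges. I would argue this in two directions. On the one hand, $G[V(S)]$ is connected by \Cref{lem:S_connected}. Moreover, in a good coloring every vertex and every internal edge of $S$ (those in $T$) is red by event (\ref{item:event1}), and every edge of $G$ joining two vertices of $S$ but not lying in $T$ is green by event (\ref{item:event2}); in particular, no edge \emph{inside} $V(S)$ is blue, so none of these edges are removed when we delete $E_B$. Hence $G[V(S)]$ stays connected in $G - E_B$. On the other hand, I must show $V(S)$ is not merged with any vertex outside $S$. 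By event (\ref{item:event3}), every edge of $G$ with exactly one endpoint in $V(S)$ is colored blue, hence lies in $E_B$ and is deleted. Therefore $V(S)$ has no edges to the rest of $G$ in $G - E_B$, so it is exactly one connected component of $\mathcal{C}$.

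\textbf{Step 2: the component $C := G[V(S)]$ is feasible.} I would verify that none of the four disqualifying conditions of \Cref{def:feasible_comps} holds. Condition (1) fails because every vertex of $C = V(S)$ is red by event (\ref{item:event1}), so there is no green or blue vertex. For condition (2), the red edges inside $C$ are precisely the edges of $T$ (the paths/cycles of $S$) by events (\ref{item:event1}) and (\ref{item:event2}): event (\ref{item:event1}) colors all $T$-edges red, while event (\ref{item:event2}) colors every non-$T$ edge inside $V(S)$ green, so no extraneous edge is red. Thus the subgraph induced by red vertices and edges is exactly the disjoint union of the $\hat{A}$-alternating paths/cycles comprising $S$, and condition (2) does not hold. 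Condition (3) fails since, as established in Step 1, no edge between two vertices of $C$ is blue.

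\textbf{Step 3 (the main obstacle): ruling out dependence, condition (4).} This is where I expect the real work to lie, since the other three conditions follow almost directly from the coloring events, whereas condition (4) is about the combinatorial behavior of envy across components. I would argue that $C = V(S)$ cannot be dependent on any other $C' \in \mathcal{C}$. The key point is that $S$ is an \emph{improvement set}, so by \Cref{def:improvementset} it contributes a fixed drop $\numenvy(\hat{A}) - \numenvy(\hat{A}\oplus S) = k$, and for every disjoint collection $S'$ of $\hat{A}$-alternating paths/cycles in $T\setminus S$ we have $\numenvy(\hat{A}\oplus S \oplus S') = \numenvy(\hat{A}) - (k+k')$, i.e.\ the envy drop is exactly additive. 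Any candidate $C' \in \mathcal{C}$ is vertex-disjoint from $C$ (distinct components), and its red edges form a union of $\hat{A}$-alternating paths/cycles disjoint from those of $S$; these play the role of $S'$. Since no agent in $V(C')$ is adjacent to a house in $V(S)$ and vice versa (all such cross edges are blue and hence absent, mirroring the adjacency-and-additivity argument used in the proof of \Cref{lem:S_connected}), the envy change is additive, so $\numenvy(\hat{A}\oplus C \oplus C') - \numenvy(\hat{A}) = k_1 + k_2$, contradicting the strict inequality required for dependence in condition (4). Hence $C$ is feasible, completing the proof.
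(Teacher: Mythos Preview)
Your Steps 1 and 2 are correct and essentially match the paper's proof, which likewise uses \Cref{lem:S_connected} together with the three good-coloring events to show $G[V(S)]$ is a single component of $\mathcal{C}$ and then verifies conditions (1)--(3) of \Cref{def:feasible_comps} directly from those events.

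Step 3, however, contains a genuine error in your second argument. You claim that ``no agent in $V(C')$ is adjacent to a house in $V(S)$'' because all cross edges are blue ``and hence absent.'' But blue edges are absent only from $G-E_B$; they are still edges of the preference graph $G$, and it is adjacency in $G$ that governs envy. An agent $j\in V(C')$ may well have a (blue) edge to a house $h\in V(S)$, so reallocating $h$ via $\hat{A}\oplus C$ can change whether $j$ is envious. The additivity step in the proof of \Cref{lem:S_connected} that you are mirroring required there to be \emph{no} edge of $G$ between the two parts, not merely no non-blue edge, so it does not transfer here. Your first argument in Step 3---appealing directly to $S$ being an improvement set---is precisely the line the paper takes (it writes only ``since $C$ is a minimal nice set, it does not satisfy (4)''). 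But your justification that $C'$ ``plays the role of $S'$'' is incomplete: \Cref{def:improvementset} guarantees additivity only for $S'\subseteq T\setminus S$, i.e., collections of paths and cycles drawn from $A\,\Delta\,\hat{A}$, whereas condition (4) quantifies over an arbitrary component $C'\in\mathcal{C}$ whose red edges need not lie in $T$ at all. The paper itself does not spell out how this gap is bridged; it simply asserts the conclusion.
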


  \begin{proof} Suppose $\chi$ is a good coloring. We first show that $G[V(S)]$ is a connected component in $G -E_B$. Then, the edge ($i, h$) between any two vertices in $S$ is colored red if $(i, h) \in E(T)$ and is colored green otherwise. Hence, between any two red-colored paths in $S$, there is a green-colored edge in $E(G) \setminus E_B$. (\Cref{lem:S_connected} ensures that such an edge exists in $G$ and the good coloring $\chi$ ensures that it is colored green.) Therefore, the subset $S$ is a single component in $\mathcal{C}$.
Now we show that the component $C=G[V(S)]$ is feasible by showing it does not satisfy any of the conditions (1) - (4) in the definition of not feasible.
 Note that under any good coloring $\chi$, we have that $C$ does not satisfy (1) as every vertex of $S$ is red. Component $C$ does not satisfy (2) since $S \subseteq T$ and the graph induced by the red vertices and edges in $S$ is a set of $\hat{A}$-alternating path(s)/cycle(s), and its does not satisfy (3) as every edge in $E(G) \cap E(T)$ is colored red and every edge in  $E(G) \setminus E(T)$ is colored green, and no edge in $E(G)$ between two vertices of $C$ is colored blue. Finally, since $C$ is a minimal nice set, it does not satisfy (4). Therefore, by definition, we have that $C$ is a feasible component of $\mathcal{C}$.
 \end{proof}

  We are now ready to argue the correctness of \Cref{thm:fpt_enviousagents}. 

\begin{proof}[Proof Sketch of \Cref{thm:fpt_enviousagents}] Suppose that we have a good coloring $\chi$ for the graph $G$. Then, \Cref{lem:Sisfeasible} ensures that every minimal improvement set is a connected and feasible component in $\mathcal{C}$. We show that once we have the feasible components in $\mathcal{C}$, we need to choose a collection $\mathcal{C}' \subseteq \mathcal{C}$ of components such that $\sum_{C \in \mathcal{C}'} r_C \geq k$ and $\sum_{C \in \mathcal{C}'} n_C \leq q$. 
Note that this is exactly the classical knapsack problem with $q$ as the maximum admissible weight and $k$ as the minimum required profit. In particular, the input to the knapsack problem is $\langle \mathcal{C}, r_1, r_2,\ldots, r_{|\mathcal{C}|}, n_1, n_2, \ldots n_{|\mathcal{C}|}, q, k  \rangle$ and goal is to decide if there is a subset $X \subseteq \mathcal{C}$ such that $\sum_{C \in X} n_C \leq q$ and $\sum_{C \in X}r_C \geq k$. We defer the proof of equivalence to the appendix.
 The knapsack problem can be solved in time $O(|\mathcal{C}|q) = O(nq)$~\citep{DBLP:books/daglib/0010031}. 
For the randomized algorithm, a good coloring is obtained with high probability by repeating the step $3^{3q(d+1)})$ times.
Thus, total time taken is $O((n+m)^2 \cdot 3^{3q(d+1)} )$. 
A derandomization of \Cref{alg:minEnvylocal} and its run time analysis is presented in \Cref{sec:derandomization}.
\end{proof}

We now show that Algorithm \ref{alg:minEnvylocal} is oblivious to the measure of envy under consideration. It can be suitably adapted for total envy or maximum envy of an allocation. In fact, it works for ``\textit{cardinal preferences}'' as well by only modifying Line 1 of \Cref{alg:minEnvylocal} to define the edge set of preference graph $G$ as follows: $(i,h) \in E$ if agent $i$ has a non-zero, positive value for house $h$ for $i \in N$ and $h \in H$. 

\begin{theorem}\label{thm:tenvy_maxenvy}
Given an instance $\mathcal{I} = (N, H, V)$ of house allocation, a complete allocation $\hat{A}$, and two positive integers $k$ and $q$, deciding if there is an allocation $A$ such that $\tenvy(A) \leq \tenvy(\hat{A})-k$ (or $\maxenvy(A) \leq \maxenvy(\hat{A})-k$) and $|A~\Delta~\hat{A}| \leq q$ admits a fixed-parameter tractable algorithm parameterized by $q$ and $d$, where $d$ is the maximum degree of ant vertex in the associated preference graph $G$.   
\end{theorem}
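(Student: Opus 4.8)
The plan is to show that the entire machinery developed for $\numenvy$ in \Cref{thm:fpt_enviousagents} is genuinely \emph{measure-oblivious}, and then instantiate it for $\tenvy$ and $\maxenvy$. The key observation is that nowhere in the correctness argument (\Cref{lem:improveset} through \Cref{lem:Sisfeasible}) did we use any property of $\numenvy$ beyond the fact that it is a real-valued function of the allocation. Concretely, I would first introduce an abstract envy measure $\mu \in \{\numenvy, \tenvy, \maxenvy\}$ and redefine, verbatim with $\mu$ in place of $\numenvy$, the notions of \emph{dependent set}, \emph{(minimal) improvement set}, and \emph{feasible component}, where now $\mu\text{-drop}(\hat{A},C) := \mu(\hat{A}) - \mu(\hat{A}\oplus C)$ plays the role of $r_C$. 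The only structural input these definitions need is that changing the allocation along an $\hat{A}$-alternating path or cycle is a well-defined local operation on $G$, which is independent of $\mu$.

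Next I would re-examine the three crunch lemmas. \Cref{lem:S_connected} (connectivity of $G[V(S)]$ for a minimal improvement set) is the one place where a genuine argument about the measure is used: the proof shows that if $V(S_1)$ and $V(S_2)$ share no edge in $G$, then no agent in $S_1$ values any house in $S_2$ and vice versa, so the effects of $\hat{A}\oplus S_1$ and $\hat{A}\oplus S_2$ are on disjoint sets of agents. I would point out that under this separation, $\tenvy$ and $\maxenvy$ also decompose correctly: since the envy of every agent is determined solely by its own allocated house and the houses allocated to the (value-positive) others, agents touched only by $S_1$ have their per-agent envy unchanged by $S_2$. For $\tenvy$ the total is a sum over agents, so it splits additively as $k_1 + k_2$; for $\maxenvy$ the argument is essentially identical because the set of per-agent envy values, and hence their maximum, is unaffected across the two sides. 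Thus the separation contradicts minimality for any of the three measures, and \Cref{lem:S_connected} holds for $\mu$ with no change. Crucially, the coloring lemma \Cref{lem:probcoloring} is purely combinatorial (it counts vertices and edges to be colored and never mentions envy), so it transfers unchanged, giving the same success probability $3^{-3q(d+1)}$ and hence the same repetition count; likewise the derandomization via universal sets is measure-independent.

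With the combinatorial scaffolding intact, the final step is the knapsack reduction. Here I would highlight the one real distinction between the additive measures and $\maxenvy$. For $\numenvy$ and $\tenvy$, both the objective and the notion of feasibility are additive over disjoint feasible components, so the reduction to classical knapsack in the proof sketch of \Cref{thm:fpt_enviousagents} applies verbatim with profits $r_C := \tenvy(\hat{A}) - \tenvy(\hat{A}\oplus C)$ and costs $n_C$, solvable in time $O(|\mathcal{C}|q) = O(nq)$. The hard part, and the step I expect to require the most care, is $\maxenvy$: the maximum is \emph{not} additive, so one cannot simply sum $r_C$ across chosen components. Instead I would reduce differently — observe that lowering $\maxenvy$ by $k$ means driving every agent whose envy exceeds $\maxenvy(\hat{A})-k$ below that threshold, which is a coverage/feasibility condition rather than a sum. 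The cleanest route is to guess the target threshold $t := \maxenvy(\hat{A})-k$ (there are at most $n$ relevant values of per-agent envy, so only polynomially many thresholds to try), mark each feasible component $C$ as admissible only if it does not create any agent with envy exceeding $t$, and then select admissible components covering all currently-over-threshold agents while respecting $\sum n_C \leq q$. Because the good coloring guarantees that the optimal solution appears as a disjoint union of feasible components, and because the number of over-threshold agents is at most $2q$ (only reallocated agents can change envy), this becomes a bounded covering problem that is again polynomial-time solvable; combined with the $3^{3q(d+1)}$ repetitions (or its derandomized universal-set analogue) this yields the claimed FPT running time parameterized by $q$ and $d$. The detailed verification that $\maxenvy$ decomposes correctly under the separation argument, and the precise covering formulation, are the pieces I would write out most carefully, deferring the routine additive cases to a remark that they follow the $\numenvy$ proof line by line.
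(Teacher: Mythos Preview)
For $\tenvy$ your proposal coincides with the paper's proof: both simply substitute $r_C := \tenvy(\hat{A}) - \tenvy(\hat{A}\oplus C)$ in Step~8 of \Cref{alg:minEnvylocal}, observe that the definitions of dependent set and (minimal) improvement set, and hence \Cref{lem:improveset}--\Cref{lem:Sisfeasible}, go through verbatim, and re-run the knapsack.

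For $\maxenvy$ the two diverge. The paper treats $\maxenvy$ exactly the same way---it sets $r_C := \maxenvy(\hat{A}) - \maxenvy(\hat{A}\oplus C)$ and asserts without further comment that the rest of the argument carries over. You, by contrast, flag the non-additivity of the maximum and replace the knapsack step by a threshold-plus-covering formulation. Your concern is well-founded: two components $C_1,C_2$ with no edge between them in $G$ can still interact through the global maximum (e.g.\ $C_1$ brings the unique max-envy agent from $10$ down to $3$ while an unrelated $C_2$ brings a different agent from $8$ to $2$; the individual drops are $2$ and $0$ but the joint drop is $7$), so the paper's one-line extension is at best incomplete and your threshold route is the more honest one.

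Two technical points in your $\maxenvy$ sketch need repair. First, the very same example shows that \Cref{lem:S_connected} does \emph{not} hold ``with no change'' under the $\maxenvy$-drop definition of minimal improvement set: $\{C_1,C_2\}$ above is a minimal improvement set yet disconnected in $G$. Your threshold reformulation actually rescues connectivity---since ``$\envy_i \le t$ for all $i$'' is a per-agent and hence local condition---but you must state the improvement-set notion for that formulation rather than for the drop, and redo \Cref{lem:S_connected} accordingly. Second, the claim that ``the number of over-threshold agents is at most $2q$ (only reallocated agents can change envy)'' is false: a non-reallocated agent's envy changes whenever a house it values switches allocation status at a path endpoint. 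The correct bound is $O(qd)$ (at most $q$ reallocated agents plus at most $2qd$ neighbours of the $\le 2q$ path-endpoint houses), which still suffices for the FPT running time in $q+d$.
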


\section{Restricted Domain: Efficient Algorithms}
\label{sec:restricted}

In this section, we present efficient algorithms for minimizing $\numenvy$ when the rankings over $H$ are complete, strict, and single-peaked/dipped. 

\subsection{Single-Peaked Preferences}

We first present the definition. We say $h$ is a peak house for agent $i$, denoted as $peak(i)$ if $h$ is the first-ranked house for $i$. That is, $h \succ_i h' $ for each $ h' \neq h$.



\begin{definition}[Single-Peaked Preferences]
A preference ranking $\succ_i$ is \emph{single-peaked} with
respect to an ordering~$\rhd$ of houses $H$ if for every pair of houses $h,h' \in H$, we have that if $h \rhd h' \rhd~peak(i)$ or $peak(i) \rhd h' \rhd h$, then $h' \succ_i h$. A preference profile $\succ$ is single-peaked if there exists an ordering $\rhd$ over $H$ such that $\succ_i$ is single-peaked with respect to $\rhd$ for every agent $i \in N$ (see \Cref{fig:spdefinitions}). 

\end{definition}

Intuitively, in single-peaked preferences, as an agent moves away from his favorite house $peak(i)$ in the ordering $\rhd$ in any direction, left or right, the houses become less preferable for her. 

We now state the main result of this section, define the notations, and structural properties that will be helpful in the proof.

\begin{restatable}{theorem}{singlepeaks}
\label{thm:singlepeaks}
    Given a single-peaked instance $\mathcal{I} = (N, H, \succ, \rhd)$, minimizing the number of envious agents admits a polynomial time algorithm with run time $O(|H|^2)$.
\end{restatable}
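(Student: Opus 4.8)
The plan is to reduce the problem to a purely combinatorial selection question and then solve it by a left-to-right dynamic program along the axis $\rhd$. Fix the ordering $\rhd$ and think of the houses as points on a line. The first step is a structural observation that follows directly from single-peakedness: for any set $S$ of allocated houses, the most preferred house of agent $i$ inside $S$ (call it $i$'s \emph{favorite in $S$}) is $peak(i)$ whenever $peak(i)\in S$, and otherwise it is the $\succ_i$-better of the two houses of $S$ that immediately flank $peak(i)$ on its left and on its right. Indeed, any allocated house lying further from $peak(i)$ on a fixed side is strictly worse than the flanking house on that side, so only the two flanking houses can be optimal. An agent is envy-free under an allocation $A$ exactly when $A(i)$ is its favorite in the image $S=A(N)$.

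Using this, I would show that minimizing $\numenvy$ is equivalent to maximizing, over all size-$n$ subsets $S\subseteq H$, the number of \emph{distinct} favorite houses they induce. Given $S$, each agent has a unique favorite, so the agents partition according to their favorite house; picking one representative per occupied favorite house yields a set of simultaneously envy-free agents, and the remaining agents can be matched to the remaining houses of $S$ arbitrarily (they will be the envious ones). Hence the minimum number of envious agents equals $n$ minus the maximum number of distinct favorites over all $S$ with $|S|=n$. It is worth noting that this objective genuinely requires leaving houses unallocated: if two agents share a peak $h$ but prefer opposite sides, deleting $h$ from $S$ and allocating one neighbouring house on each side can make both envy-free, which allocating $h$ cannot.

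The selection of $S$ is then handled by a dynamic program that scans the houses in the order $\rhd$ and decides, for each house, whether to allocate it. The key enabling fact is that whether an allocated house $g$ is a favorite is a \emph{local} predicate: it depends only on $g$, on the previous allocated house, and on the next allocated house, through whether some agent whose peak lies in the adjoining gap prefers $g$ to the opposite endpoint (or some agent peaks exactly at $g$). I would therefore keep as state the last allocated house together with a single bit recording whether it has already been certified a favorite by its peak or by its left gap, and (to respect $|S|=n$) the number of houses allocated so far. When the next allocated house $j$ is chosen, the gap between the previous allocated house $p$ and $j$ is revealed, which both finalizes the status of $p$ (favorite iff its bit is set or the gap right-supports $p$) and initializes the bit of $j$ (peak at $j$, or the gap left-supports $j$). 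The gap-support predicates are precomputed once; here single-peakedness helps again, because on each side of a peak preference is monotone in distance, so as an endpoint moves away the set of agents preferring it changes predictably.

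The main obstacle I expect is twofold. First, the favorite-is-a-flanking-house characterization and the reduction to counting distinct favorites must be proved carefully, keeping in mind that single-peakedness constrains only same-side comparisons while the comparison of a left house against a right house is unconstrained per agent; this is exactly the freedom exploited by the deletion example above, and it must be respected throughout. Second, the status of a house depends on both of its allocated neighbours and the allocation must use exactly $n$ houses, which a naive implementation of the scan turns into an extra factor. Bringing the running time down to the claimed $O(|H|^2)$ is where the structural properties of single-peaked profiles do the real work: I would establish a non-crossing (peak-monotone) property of some optimal allocation so that the dynamic program needs only few states per house and the support predicates can be tabulated within budget, and then verify that the exactly-$n$ constraint can be enforced without inflating the complexity.
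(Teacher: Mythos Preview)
Your approach is genuinely different from the paper's, and the key reduction is correct: for a fixed image set $S$ with $|S|=n$, an agent is envy-free iff she receives her favorite house in $S$, and by single-peakedness that favorite is either $peak(i)$ (if in $S$) or one of the two $S$-houses flanking $peak(i)$; hence the maximum number of envy-free agents for that $S$ is exactly the number of distinct favorites, and one maximizes this over all $n$-subsets. The paper does \emph{not} take this route. It instead proves structural lemmas about shared peaks---at most two agents in $base(h)$ can be envy-free (\Cref{claim:atmosttwo}), individual peaks may be allocated non-wastefully in some optimum (\Cref{lemma:individualpeaks}), and $k$ pairwise overlapping spans contribute at most $k+1$ envy-free agents (\Cref{lem:atmostk+1})---and then runs a greedy procedure that resolves shared peaks in increasing order of span size, with a case analysis against an optimum to certify correctness. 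Your reformulation is cleaner conceptually; the paper's greedy is more direct once the structural lemmas are in hand.

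Where your proposal has a genuine gap is the running time. As you yourself flag, your DP must carry the number of houses allocated so far to enforce $|S|=n$, which already gives $\Theta(mn)$ states and $\Theta(m^2 n)$ time rather than the claimed $O(|H|^2)$. You suggest removing this factor via an unspecified ``non-crossing / peak-monotone'' property and an unspecified argument that the exactly-$n$ constraint is free, but neither is supplied, and it is not clear what a non-crossing statement would even eliminate here: the obstruction is the cardinality counter, not any crossing. Moreover, the objective (number of distinct favorites) is \emph{not} monotone in $|S|$---inserting a house into a gap can strip both endpoints of their only support while creating just one new favorite---so one cannot simply drop the $|S|=n$ constraint and pad afterwards without an additional argument. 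In short, your plan yields a polynomial-time algorithm but not the stated $O(|H|^2)$ bound; the paper's greedy sidesteps this because its structural lemmas reduce the decision to which spans to resolve, and that is settled in a single sorted pass.
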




\paragraph{Notations.}
 For a house $h$, we denote the set of agents who prefer it to all other houses as $base(h)$. That is, $base(h) = \{ i \in N~|~ h \succ_i h' \text{ for all } h' \neq h\} = \{i \in N~|~ h = peak(i)\}$. Suppose that the set of rankings $\succ$ are single-peaked with respect to the ordering $\vartriangleright$ over the houses $(h_1 \vartriangleright h_2 \vartriangleright \ldots \vartriangleright h_m)$. The interval $[h_i, h_j)$ denotes the set of houses $h_i \vartriangleright \ldots \vartriangleright h_{j-1}$. We say that a house $h$ is a \emph{shared peak} if it is the most preferred house of more than one agent, that is, $|base(h)|>1$. Otherwise, if $base(h) = 1$, we say it is an \emph{individual peak}. We say that a house $h$ is \emph{non-wastefully} allocated if it is allocated to an agent in $base(h)$, otherwise it is allocated wastefully. We define the \emph{span} of a peak house $h$, denoted by $span(h)$, as the set of houses that are identically ranked by {\em all} the agents in $base(h)$, starting from their first ranked house. If a house $h$ is an individual peak, then we say $span(h) = \emptyset$.

\begin{example}
\label{ex:spdefinitions}
Consider the instance in \Cref{fig:spdefinitions} with four agents and the following rankings. 

\begin{table}[!h]
    \footnotesize
    \centering
    \begin{tabular}{ll}
    $i_1:$ & $\mathbf{h_2} \succ h_1 \succ h_3 \succ h_4 \succ h_5 \succ h_6 \succ h_7 $  \\
    $i_2:$ & $h_4 \succ h_5 \succ h_6 \succ \mathbf{h_3} \succ h_2 \succ h_1 \succ h_7$ \\
    $i_3:$ & $h_4 \succ h_5 \succ h_6 \succ h_3 \succ h_7 \succ h_2 \succ \mathbf{h_1} $ \\
    $i_4:$ & $h_4 \succ h_5 \succ h_6 \succ \mathbf{h_7} \succ h_3 \succ h_2 \succ h_1$ 
    \end{tabular}
    \label{tab:example1}
\end{table}

It is a single-peaked instance with respect to the ordering $\rhd:= h_1 \rhd h_2 \rhd h_3 \rhd h_4 \rhd h_5 \rhd h_6 \rhd h_7$. The house $h_4$ is a shared peak, and $h_2$ is an individual peak. Notice that peak($i_1$) = $h_2$ and peak($i_2$) = peak($i_3$) = peak($i_4$) = $h_4$. Consequently, base($h_4$) = $\{i_2, i_3, i_4\}$. Note that $|span(h_4)| = |\{h_4,h_5,h_6\}|= 3$, as these are the top houses identically ranked by all the agents in $base(h_4)$. Also, $|span(h_1)| = 0$. An important and helpful observation is the following. If at least two agents from $\{i_2, i_3, i_4\}$ were to be envy-free in any allocation, then not only the peak house $h_4$ would have to remain unallocated, but all the houses in $span(h_4)$ must also remain unallocated under any complete allocation. The allocation of $h_2$, $h_3$, $h_1$, and $h_7$ to the four agents, respectively, is the one minimizing $\numenvy$ with exactly one envious agent, namely, $i_3$.


\begin{figure}[t]
    \centering
    \resizebox{0.6\linewidth}{!}{%
        \begin{tikzpicture}
        \begin{axis}[
            width=10cm,
            height=7cm,
            axis x line=bottom,
            axis y line=none,
            xtick={1,2,3,4,5,6,7},
            xticklabels={$h_1$, $h_2$, $h_3$, $h_4$, $h_5$, $h_6$, $h_7$},
            ytick=\empty,
            tick style={line width=0.5pt},
            legend pos=north east,
            legend cell align={left},
            legend style={draw=none}, 
            grid=both,                
            grid style={gray!30}
        ]

        \addplot[RoyalBlue, thick] coordinates {(1,5) (2,6) (3,5) (4,4) (5,3) (6,2) (7,1)};
        \addlegendentry{$i_1$}

        \addplot[orange, thick] coordinates {(1,1) (2,2) (3,3) (4,6) (5,5) (6,4) (7,0)};
        \addlegendentry{$i_2$}

        \addplot[ForestGreen, thick, dash pattern=on 8pt off 4pt] coordinates {(1,2) (2,3) (3,4) (4,6) (5,5) (6,4) (7,2.4)};
        \addlegendentry{$i_3$}

        \addplot[BrickRed, thick, dash pattern=on 10pt off 3pt on 1pt off 3pt] coordinates {(1,0) (2,1) (3,2) (4,6) (5,5) (6,4) (7,3)};
        \addlegendentry{$i_4$}

        \end{axis}
        \end{tikzpicture}
    }
    \caption{Single-Peaked preferences}
    \label{fig:spdefinitions}
\end{figure}
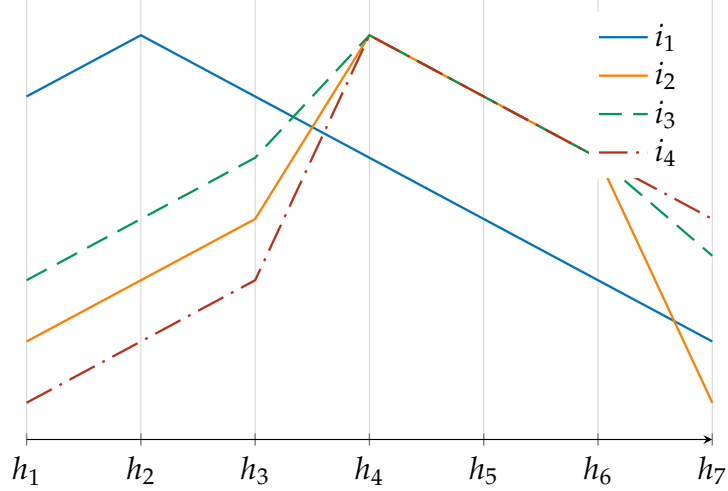

\end{example}

We now present a series of structural results. If a shared peak $h$ is assigned to one agent, it leads to envy among the other agents, with at least $|base(h)| - 1$ envious agents. On the other hand, we show by the following claim that even if $h$ is not assigned, at least $|base(h)|-2$ agents are bound to be envious under any allocation.

\begin{lemma}
\label{claim:atmosttwo} Let $h$ be a shared peak. Then, at most $2$ agents from the set $base(h)$ can be envy-free under any allocation. 
\end{lemma}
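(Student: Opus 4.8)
The plan is to fix the shared peak as $h = h_p$ in the single-peaked ordering $h_1 \rhd \cdots \rhd h_m$ and split the analysis according to whether $h$ is allocated. The one structural fact I will lean on throughout is that single-peakedness forces every agent $i \in base(h)$ to rank houses monotonically on each side of its peak $h$: $h_{p-1} \succ_i h_{p-2} \succ_i \cdots$ to the left and $h_{p+1} \succ_i h_{p+2} \succ_i \cdots$ to the right, with $h$ itself strictly on top. Thus, for such an agent, any house lying strictly between $h$ and its assigned house (on the same side) is strictly preferred to what it receives.

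First I would dispose of the easy case where $h$ is allocated, say to agent $j$. Since an allocation is injective, every agent $i \in base(h)\setminus\{j\}$ receives a house different from its peak, and therefore strictly prefers $A(j)=h$ to $A(i)$; hence every such $i$ envies $j$. So at most the single agent $j$ (and only if $j \in base(h)$) can be envy-free, which already gives the bound in this case.

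The heart of the argument is the case where $h$ is unallocated. Here I would consider the maximal block of consecutive unallocated houses in the ordering $\rhd$ that contains $h$; write it as $h_a, \ldots, h_b$ with $a \le p \le b$, so that the boundary houses $h_{a-1}$ and $h_{b+1}$, whenever they exist, are allocated (by maximality). I claim every envy-free agent $i \in base(h)$ is assigned either $h_{a-1}$ or $h_{b+1}$. Indeed, $A(i)$ is allocated, hence lies strictly outside the block; say it lies to the right, at position $q > b \ge p$. By the monotonicity above, each of $h_{p+1}, \ldots, h_{q-1}$ is strictly preferred by $i$ to $A(i)$, so envy-freeness forces all of them to be unallocated. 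But $h_{b+1}$ is allocated, so this is possible only if $q = b+1$, i.e.\ $A(i) = h_{b+1}$; the symmetric argument on the left yields $A(i) = h_{a-1}$. Since $A$ is injective and $\{h_{a-1}, h_{b+1}\}$ has at most two elements, at most two agents of $base(h)$ can be envy-free.

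The step I expect to require the most care is precisely this last claim: that an envy-free agent of $base(h)$ lands exactly on a boundary house of the unallocated block, rather than further out. This is where single-peakedness is essential, since the ranking of houses on a fixed side of the peak is completely determined (closer-to-peak is strictly better), so any allocated house strictly between $h$ and $A(i)$ on that side would be preferred by $i$ and create envy. I would also record the boundary subtleties: if the block reaches an end of $\rhd$ (so one of $h_{a-1}, h_{b+1}$ does not exist), the assigned house is forced to the surviving side, which only tightens the count; and the block cannot be all of $H$, since $|base(h)| \ge 2$ guarantees that agents, and hence allocated houses, exist.
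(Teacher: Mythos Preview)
Your proof is correct and follows a somewhat different route from the paper's. Both arguments split on whether $h$ is allocated and handle the allocated case identically. For the unallocated case, the paper argues by pigeonhole on agents: given any three agents in $base(h)$, two of them receive houses on the same side of $h$ in the axis $\rhd$, and single-peakedness forces the one farther from $h$ to envy the one closer, so at most one envy-free agent can land on each side. You instead argue positionally: you identify the maximal unallocated block $\{h_a,\ldots,h_b\}$ containing $h$ and show that any envy-free agent of $base(h)$ must be assigned precisely a boundary house $h_{a-1}$ or $h_{b+1}$, of which there are at most two. Your version extracts a little more structure (it pins down \emph{where} the envy-free agents must sit, which is in the spirit of the later span-based arguments in the paper), whereas the paper's pigeonhole step is slightly shorter and avoids introducing the block. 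Both rest on the same single-peakedness consequence that, for $i\in base(h)$, houses strictly between $h$ and $A(i)$ on the same side are all preferred to $A(i)$.
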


\begin{proof} Consider an allocation $A$. If the house $h$ is allocated wastefully, then all the $base(h)$ agents are envious, no matter which house they receive in $
A$. If $h$ is allocated non-wastefully to an agent, say $i$, then $i$ is always envy-free in any completion of this allocation, as she receives her first-ranked house. But, all other $|base(h)|-1$ agents are envious of  $i$. If house $h$ is not allocated in $A$, then we prove at most two agents in $base(h)$ are envy-free. Let a single-peaked axis for the preferences be denoted by $\vartriangleright= h_1\vartriangleright\dots\vartriangleright h_m$. Consider three agents $i_1, i_2$, and $i_3$ from $base(h)$. Then, at least $2$ of these agents are allocated to houses from the interval either $[h_1, h)$ or $(h, h_m]$ on the axis $\vartriangleright$. WLOG, we assume that $A(i_1) = h_j$ and $A(i_2) = h_l$ such that $\{h_j, h_l\} \in [h_1, h)$ and $l<j$. Then, since $h$ is a peak for both $i_1$ and $i_2$, by the structure of the rankings, it holds that both agents $i_1$ and $i_2$ have the (partial) ranking $h_1 \prec h_l \prec h_i \prec h$. Then, agent $i_2$ envies $i_1$. Thus, at most one envy-free agent can be assigned to each interval $[h_1, h)$ and $(h, h_m]$. Therefore, at most two agents can be envy-free from the set $base(h)$. Furthermore, the  houses allocated to the two agents lies on either side of $h$ in a single peaked axis.
\end{proof}

We now proceed to show another interesting structural claim that helps us to allocate the individual peaks non-wastefully. 

\begin{restatable}{lemma}{individualpeaks}
\label{lemma:individualpeaks}There exists an allocation with the minimum number of envious agents where all individual peaks are allocated, and they are allocated non-wastefully.
\end{restatable}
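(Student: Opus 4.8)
The plan is to run an exchange argument anchored by a potential function. Among all allocations attaining the minimum $\numenvy$, I would fix one, call it $A^*$, that additionally maximizes $\phi(A):=|\{i\in N: A(i)=peak(i)\}|$, the number of agents holding their own peak. I claim that in such an $A^*$ every individual peak is allocated to its unique owner; since an agent holding its peak is trivially envy-free and distinct individual peaks are distinct houses with distinct owners, this yields the lemma. Suppose not, and let $h=peak(i)$ be an individual peak with $A^*(i)\neq h$; I split into two cases according to whether $h$ is held by another agent or is unallocated.

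First suppose $h$ is held by some $j\neq i$. Since $base(h)=\{i\}$ we have $peak(j)\neq h$, so $j$ does not hold its peak. I swap the two houses, giving $h$ to $i$ and $A^*(i)$ to $j$. For any third agent $k$, the pair $\{i,j\}$ holds the same set of houses $\{h,A^*(i)\}$ before and after, so $k$ envies someone in $\{i,j\}$ iff the $\succ_k$-better of these two houses beats $A^*(k)$, a condition invariant under the swap; $k$'s envy towards everyone else is untouched, so $k$'s envy status does not change. Agent $i$ was envious (it envied $j$, who held $i$'s top house) and becomes envy-free, while $j$'s status can worsen by at most one. Hence $\numenvy$ does not increase, so the new allocation is still optimal, yet $\phi$ strictly increased (agent $i$ now holds its peak and nobody who held a peak lost it), contradicting the choice of $A^*$. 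Thus no individual peak is wastefully allocated.

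It remains to treat unallocated individual peaks. Let $U$ be the set of individual-peak owners whose peak is unallocated in $A^*$ (by the previous paragraph this is now the only way an individual peak can fail to reach its owner). I form $A'$ by simultaneously moving every $i\in U$ onto $peak(i)$; this is well defined because these peaks are pairwise distinct and all unallocated, and it raises $\phi$ by $|U|>0$. The crux is to show $\numenvy(A')\le\numenvy(A^*)$, which again contradicts maximality and forces $U=\varnothing$. Writing the change as $\numenvy(A')-\numenvy(A^*)=|F|-|E_U|$, where $E_U\subseteq U$ are the owners that were envious in $A^*$ (all now envy-free) and $F$ are the \emph{flippers}, i.e.\ agents outside $U$ that were envy-free in $A^*$ but newly envy some $b\in U$ because $peak(b)\succ_a A^*(a)$, it suffices to exhibit an injection $F\hookrightarrow E_U$. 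Here single-peakedness is the essential tool: for any agent the set of houses it ranks above its own allocation is an interval of $\rhd$ containing its peak. For a flipper $a$ this interval is entirely unallocated in $A^*$ and contains both $peak(a)$ and $peak(b)$; in particular $a$'s own peak is unallocated. Tracing these nested unallocated intervals, together with the fact that every agent must receive some house, lets one charge each flipper to a distinct individual-peak owner that was forced to be envious in $A^*$, giving $|F|\le|E_U|$.

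The routine part is the wasteful case, where the symmetric two-house swap makes the bookkeeping immediate. The main obstacle is the unallocated case: a single ``fill the peak'' move can genuinely create new envy, since an agent may strictly prefer the now-occupied peak to its own house, so one is forced to move all under-served owners at once and control the flippers collectively. Establishing the injection $F\hookrightarrow E_U$, i.e.\ that every unit of newly created envy is paid for by an owner whose envy is removed, is the technical heart of the argument, and it is precisely where the interval (unimodality) structure of single-peaked profiles is indispensable.
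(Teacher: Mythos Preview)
Your Case 1 (wasteful allocation) is clean and correct; the swap preserves the multiset of allocated houses for third parties, so only $i$ and $j$ can change status, and the $\phi$-increase gives the contradiction.

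Case 2, however, has a genuine gap. You assert the injection $F\hookrightarrow E_U$ but do not prove it, and your one-sentence sketch (``tracing these nested unallocated intervals \dots lets one charge each flipper to a distinct individual-peak owner that was forced to be envious'') is not an argument. Worse, the injection is simply false in the generality you state it. Take $\rhd = h_1\rhd\cdots\rhd h_6$, agent $b$ with individual peak $h_3$ and preference $h_3\succ h_2\succ h_4\succ h_1\succ h_5\succ h_6$, and agents $a_1,a_2$ with shared peak $h_4$, preferences $h_4\succ h_3\succ h_5\succ h_2\succ h_6\succ h_1$ and $h_4\succ h_5\succ h_3\succ h_6\succ h_2\succ h_1$. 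The allocation $A^*=(b\mapsto h_1,\ a_1\mapsto h_2,\ a_2\mapsto h_6)$ is optimal with $\numenvy=1$; here $U=\{b\}$, $E_U=\{b\}$. Moving $b$ to $h_3$ makes both $a_1$ and $a_2$ newly envious, so $|F|=2>1=|E_U|$ and $\numenvy$ strictly increases. Your injection argument makes no use of the max-$\phi$ hypothesis, so nothing in it excludes this configuration; the argument as written cannot be correct.

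The paper handles the unallocated case quite differently: it treats one individual peak $h$ at a time and splits according to whether $h$ lies in $\mathrm{span}(h')$ for some shared peak $h'$, and then according to whether the owner is already envy-free. In the hardest sub-case (owner envious, $h$ outside every span) it shows that at most two specific agents $i_1,i_2$ can become newly envious, pinpoints them on either side of $h$, argues that their peaks are currently unallocated and not resolved shared peaks, and then performs a \emph{composite} reallocation sending $i_1,i_2$ to their own peaks as well. This three-agent move is what keeps $\numenvy$ from rising. Your simultaneous move of all of $U$ does not incorporate these auxiliary reallocations of shared-peak agents, which is exactly why the counting breaks. If you want to salvage the potential-function approach, you would need either to fold such auxiliary moves into $A'$ or to prove (using max-$\phi$ explicitly) that the bad configuration above cannot occur in a $\phi$-maximal optimum; neither is done here.
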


\begin{proof}  Let house $h$ be an individual peak. Suppose that $h$ remains unallocated under an allocation $A^*$. Consider the following cases:

\begin{enumerate}
\item $h \in \text{span}(h_i)$ for some shared peak $h_i$. Then $h$ remains unallocated and at most two agents from the set of agents peak$(h_i)$, say $i_1$ and $i_2$, are envy-free under $A^*$ (this forces span($h_i$) and hence, $h$ to remain unallocated under $A^*$ ). Now consider the reallocation where $a$ gets $h$ and $i_1$ gets $h_i$ (where $i_1 \in base(h_i)$). Then both $i$ and $i_1$ are envy-free under this reallocation. The number of envy-free agents under this reallocation remains the same as in $A^*$. Indeed, at most two out of $\{i, i_1, i_2\}$ can be made envy-free under any allocation (by \Cref{claim:atmosttwo}). This settles our claim in this case.

\item $h \notin \text{span}(h_i)$ for any other peak $h_i$. First, suppose that $a$ is an envy-free agent under $A^*$. This implies that every house that $i$ ranks better than $A^*(i)$ (including $h$) remains unallocated under $A^*$. We claim that on the reallocation of $h$ to $i$, no new envious agent is created. Suppose not. Say $i'$ is an agent who was previously envy-free but becomes envious on the allocation of $h$. Since $i'$ was envy-free previously, we can say that $A^*(i') >_{i'} A^*(i)$. If $A^*(i')$ lies to the left of $A^*(i)$, then by structure of the valuations, $A^*(i') >_{i'} h$ and therefore, $i'$ can't be envious of the allocation of $h$. On the other hand, if $A^*(i')$ lies to the right of $A^*(i)$, then it must be that $A^*(i')$ also lies to the right of $h$. Else, $a$ will prefer $A^*(i')$ more than $A^*(i)$, leading her to be envious. Therefore, it must be the case that $A^*(i')$ is to the right of $h$
 and hence by structure of the valuations, $A^*(i) <_{i'} h <_{i'} A^*(i')$. Therefore, $i'$ can't be envious of the allocation of $h$.

Now suppose that $i$ was envious under $A^*$.
Reallocating $h$ to $i$ makes her envy-free. We will first argue that there are at most two agents, say $i_1$ and $i_2$ who become newly envious of the allocation of $h$. Suppose the peaks of $i_1$ and $i_2$ lie to the left and right side of $h$ respectively. Since both $i_1$ and $i_2$ have become newly envious, so $A^*(i_1)$ and $A^*(i_2)$ can not be their respective peak houses.  Since $i_1$ and $i_2$ are envy-free under $A^*$, therefore, $peak(i_1) >_{i_1} h >_{i_1} A^*(i_1) >_{i_1} A^*(i)$ and $peak(i_2) >_{i_2} h >_{i_2} A^*(i_2) >_{i_2} A^*(i)$. And all houses between  $A^*(i_1)$ and $peak(i_1)$ are unallocated. Similarly, all houses between $A^*(i_2)$ and $peak(i_2)$ are unallocated. Since $h$ is a common house that lies both  between $A^*(i_1)$ and peak($i_1$) and between  $A^*(i_2)$ and peak($i_2$), we have that all the houses between $A^*(i_1)$ and $A^*(i_2)$ are unallocated (see \Cref{fig:lemma2}). Now, consider any other agent $i_3$ who is envy-free under $A^*$. We claim that she can't be envious of the allocation of $h$. First suppose that $A^*(i_3)$ lies to the left of $A^*(i_1)$. Then it must be the case that $peak(i_3)$ also lies to the left of $A^*(i_1)$ because $i_3$ is envy-free under $A^*$. This means that if $i_3$ is envy-free of the allocation of $A^*(i_1)$, she remains envy-free of the allocation of $h$ as well (by the structure of the valuations). Second, if $A^*(i_3)$ lies to the right of $A^*(i_2)$. Then it must be the case that $peak(i_3)$ also lies to the right of $A^*(i_2)$ which means that if $i_3$ is envy-free of the allocation of $i_2$, she remains envy-free of the allocation of $h$ as well (again by the structure of the valuations).

So now we have that $i_1$ and $i_2$ are the only two agents that can potentially become envious of the allocation of $h$. Note that $peak(i_1)$ and/or $peak(i_2)$ can not be a resolved shared peak under $A^*$. Otherwise, $h$ must be in one of the spans, which contradicts the assumption of this case. Therefore, in the optimal allocation $A^*$, at most one agent from $base(peak(i_1))$ and at most one agent from $base(peak(i_2))$ is envy-free.

\begin{figure}[htbp]
    \centering
    \begin{tikzpicture}[scale=1]

        \draw[->] (-0.6,0) -- (7.2,0);

        \foreach \x in {-0.6, 0, 1, 2, 3.25, 4.5, 6.2, 7}
            \draw[dashed,gray!30,thin] (\x,0) -- (\x,3.9);

        \draw[BrickRed, thick] 
          (-0.5,0.1) -- (0,0.8) -- (2,3.3) -- (3.25,3.9) -- (4.5,3.3) -- (6.5,0.5);

        \draw[ForestGreen, thick, dashed]
            (-0.5,1.2) -- (1,2.5) -- (2,3.3) -- (3.25,3) -- (6.7, 1.0);

        \draw[cyan, dotted, ultra thick]
            (-0.5,0.2) -- (1,1.2) -- (3.25,3) -- (4.5,3.3) -- (6.7,1.5);

        \foreach \x/\y in {0/0.8, 1/2.5, 2/3.3, 4.5/3.3, 6.2/1.9}
            \filldraw[black] (\x,\y) circle (1.3pt);

        \node[below, yshift=-2pt] at (-0.6,0) { };
        \node[below, yshift=-2pt] at (0,0)     {$A^{\star}(i)$};
        \node[below, yshift=-2pt] at (1,0)     {$A^{\star}(i_1)$};
        \node[below, yshift=-2pt] at (2.3,0)     {$\mathrm{peak}(i_1)$};
        \node[below, yshift=-2pt] at (3.25,0)  {$h$};
        \node[below, yshift=-2pt] at (4.5,0)   {$\mathrm{peak}(i_2)$};
        \node[below, yshift=-2pt] at (6.2,0)   {$A^{\star}(i_2)$};
        \node[below, yshift=-2pt] at (7,0)   { };

        \node[BrickRed, above, yshift=4pt] at (3.25,3.9) {$A(i)$};
        \node[BrickRed, above left, xshift=-14pt, yshift=4pt] at (2.5,3.3) {$A(i_1)$};
        \node[BrickRed, above right, xshift=10pt, yshift=4pt] at (4.2,3.3) {$A(i_2)$};

        \node[black, left, xshift=2pt] at (0.8,2.6) {$A^{\star}(i_1)$};
        \node[black, right, xshift=2pt] at (6.2,2.0) {$A^{\star}(i_2)$};
        \node[black, left,  xshift=-4pt] at (0,0.8) {$A^{\star}(i)$};

        \begin{scope}[shift={(-.7,4.3)}]
            \draw[BrickRed, thick] (0.2,-0.2) -- (0.8,-0.2);
            \node[right] at (0.9,-0.2) {$i$};
            \draw[ForestGreen, thick, dashed] (0.2,-0.6) -- (0.8,-0.6);
            \node[right] at (0.9,-0.6) {$i_1$};
            \draw[cyan, dotted, ultra thick] (0.2,-1.0) -- (0.8,-1.0);
            \node[right] at (0.9,-1.0) {$i_2$};
        \end{scope}

    \end{tikzpicture}
    \caption{A schematic of Case 2 in the proof of \Cref{lemma:individualpeaks}.}
    \label{fig:lemma2}
\end{figure}
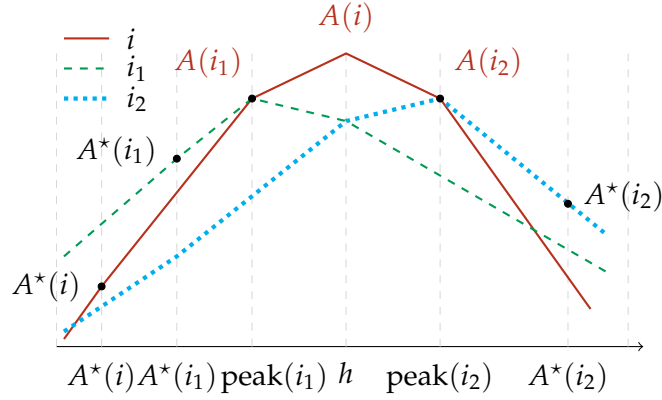

Based on this, we can now propose the following reallocation: $h$ to $i$, $peak(i_1)$ to $i_1$ and $peak(i_2)$ to $i_2$. It is easy to see that $\{i, i_1, i_2\}$ become newly envy-free as each of them now gets her peak house. Also, no other agent becomes newly envious of this reallocation. Indeed, if there is such an agent, say $i'$, then $A^*(i')$ must lie to either left of $A^*(i_1)$ or to the right of $A^*(i_2)$. Since they are envy-free under $A^*$, their respective peaks must also be on the same side as their allocated houses. This means that if they are envy-free of the allocation of $A^*(i_1)$ and $A^*(i_2)$ respectively, they do not become envious of the allocation of $peak(i_1), ~peak(i_2)$ and $h$, again by the structure of the valuations. Therefore, $\{i, i_1, i_2\}$ are newly envy-free in the reallocation, without creating any other envious agents. But under $A^*$, only $\{i_1, i_2\}$ were envy-free. This implies that $A^*$ was not optimal to begin with. This settles our claim.
\end{enumerate}

Now suppose that the individual peak, say $h$ is allocated wastefully to some agent $i^\prime$ under $A^*$. Let $i$ be the unique agent in the set $base(h)$. Then, $i$ is definitely envious. If $i^\prime$ is also an envious agent, then we can
re-allocate $h$ to $i$, which reduces the number of envious agents and contradicts the optimality of $A^*$. Therefore, $i^\prime$ must be an envy-free agent. Then, all the houses that $i^\prime$ values more than $h$ must have remained unallocated. In particular, the peak house of $i^\prime$, say $h^\prime$, ($h^\prime \neq h$) must have been unallocated. If $h^\prime$ was a shared peak, then it must have been a resolved shared peak (since $h^\prime$ remains unallocated) and hence at most two agents would be envy-free from $base(h^\prime)$. Notice that $i^\prime$ is one of them and, say $i^{\prime\prime}$ is the other. Then, reallocating $h'$ to $i'$ and $h$ to $i$ converts $i$ to an envy-free agent and makes $i^{\prime\prime}$ envious, and does not generate any new envy. Otherwise, if $h^\prime$ was an individual peak, then again the reallocation of $h$ to $i$ and $h^\prime$ to $i^\prime$ gives us our desired allocation. This settles our claim.
\end{proof}

Let the number of individual and shared peaks be $p_I$ and $p_S$ respectively. Then any allocation can have at least $p_I + p_S$ envy-free agents, just by allocating the peaks non-wastefully and completing the allocation in an arbitrary manner. Moreover, by \Cref{claim:atmosttwo}, no allocation can have more than $p_I + 2 \cdot p_S$ envy-free agents. This establishes the following result.

 \begin{algorithm}[t]
\caption{Minimize $\numenvy$ for Single-Peaked Preferences}
\label{alg:numberenvy}
\begin{algorithmic}[1]
\Require $(N, H, \succ)$ and a single-peak axis $\vartriangleright$
\Ensure Allocation $A$ that minimizes $\numenvy$

\noindent
{\it\textbf{Base cases}:}
\For{$h \in p_I$}
\do $A(base(h)) = h$ 

\Comment{Allocate $p_I$ non-wastefully.} 
\EndFor
\For{$h_j, h_l ~\text{s.t.}~ h_j \in span(h_l) ~\text{and}~ h_l \in span(h_j)$} 
 \do $A(i) = h_j$,  $A(i^\prime) = h_l$ s.t. $i \in base(h_j)$, $i^\prime\in base(h_l)$ 
  \EndFor

\For{$h_j, h_l ~\text{s.t.}~ h_j \in span(h_l) ~\text{but}~ h_l \notin span(h_j)$} 
\do $A(i) = h_l ~\text{for some}~ i \in base(h_l)$ 
 \EndFor

\noindent
{\it\textbf{Greedy resolve}:}
\State $S\leftarrow$ Set of remaining unallocated shared peaks.

\State Order the houses in $S$ as $h \preceq h'$ if $|span(h)| \leq |span(h')|$
Say, $\{h_{z_1}, h_{z_2}, \ldots h_{z_{|S|}}\}$ is the ordering.

\For{$j \in [S]$}
$m^\prime, n^\prime$ = number of unallocated houses \& agents under $A$

\If{$m^\prime - |span(h_{z_j})| \geq  n^\prime$}
Resolve $h_{z_j}$ 
\& $U \leftarrow span(h_{z_j})$
\Else   
~Allocate $\{h_{z_j}, h_{{z_j}+1}, \ldots h_{z_S} \}$ non-wastefully.
\EndIf
\EndFor

\State Order the remaining agents and let each agent choose its highest ranked house among the remaining houses, except from $U$.
\State Output $A$.
\end{algorithmic}
\end{algorithm}

\begin{lemma}
\label{lem:numEF}
Let $|EF(A^\star)|$ be the number of envy-free agents under any optimal allocation $A^\star$. Then,
$p_I + p_S \leq |EF(A^\star)| \leq  p_I +2 \cdot p_S$.
\end{lemma}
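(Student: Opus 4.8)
The plan is to prove the two inequalities separately, since together they sandwich $|EF(A^\star)|$. For the lower bound $p_I + p_S \le |EF(A^\star)|$, I would exhibit one concrete allocation achieving at least $p_I+p_S$ envy-free agents and then invoke optimality of $A^\star$. Concretely, for every peak house $h$ (individual or shared) I would pick one agent from $base(h)$ and assign $h$ to her. Because distinct peaks are distinct houses and each agent has a unique peak, this partial assignment matches $p_I+p_S$ distinct agents to $p_I+p_S$ distinct houses, so it is well defined. Each such agent receives her first-ranked house and is therefore envy-free in \emph{every} completion of the assignment. Completing it arbitrarily into a full allocation thus yields at least $p_I+p_S$ envy-free agents; since $A^\star$ maximizes the number of envy-free agents (equivalently, minimizes $\numenvy$), the bound follows.

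For the upper bound $|EF(A^\star)| \le p_I + 2 p_S$, the idea is to partition the agents by their peak house and bound the envy-free agents within each block. Every agent $i$ has a unique $peak(i)$, so $N = \bigsqcup_{h} base(h)$ where the union ranges over the peak houses. For an individual peak $h$ we have $|base(h)|=1$, which trivially contributes at most one envy-free agent; summing over the $p_I$ individual peaks gives at most $p_I$. For a shared peak $h$, \Cref{claim:atmosttwo} bounds the number of envy-free agents in $base(h)$ by $2$ under any allocation, in particular under $A^\star$; summing over the $p_S$ shared peaks gives at most $2 p_S$. Adding the two contributions yields $|EF(A^\star)| \le p_I + 2 p_S$.

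The main subtlety to get right is the counting in the upper bound: I must ensure that every envy-free agent is charged to exactly one block and that none escapes the count. This holds because each agent lies in the base of precisely one house (its own peak), so the blocks are disjoint and cover $N$; crucially, the per-block bounds from \Cref{claim:atmosttwo} apply whether or not an agent actually receives her peak, so no case analysis on the structure of $A^\star$ is needed. A secondary point worth stating explicitly is that the lower-bound construction is merely a feasibility witness used for comparison against $A^\star$, not an optimal allocation in its own right, so no further argument about its global optimality is required.
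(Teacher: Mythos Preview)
Your proposal is correct and follows essentially the same approach as the paper: the lower bound via non-wasteful allocation of all peaks (the paper's one-line justification preceding the lemma), and the upper bound via the partition of $N$ into the sets $base(h)$ combined with \Cref{claim:atmosttwo}. Your write-up is simply more explicit about the partition covering $N$ and about the construction being a feasibility witness, but the underlying argument is identical.
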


The following is a generalization of \Cref{claim:atmosttwo}.

\begin{restatable}{lemma}{overlappingspans}
\label{lem:atmostk+1}
    Let $\{h_1, h_2, \ldots h_k\}$ be the set of $k$ shared peaks such that $span(h_j) ~\cap~ span(h_l) \neq \emptyset$ for any $j, l \in [k]$. Then at least $k$ and at most $k+1$ agents from the set $\bigcup_{j \in [k]} base(h_j)$ are envy-free under any optimal allocation.
\end{restatable}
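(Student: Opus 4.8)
The plan is to prove \Cref{lem:atmostk+1} by combining a lower bound and an upper bound, treating the union of overlapping spans as a single ``cluster'' of shared peaks that behaves much like a single shared peak did in \Cref{claim:atmosttwo}. The lower bound that at least $k$ agents are envy-free is the easy direction: for each shared peak $h_j$ we can allocate $h_j$ non-wastefully to one agent in $base(h_j)$, and since the $h_j$ are distinct peaks these $k$ allocations use $k$ distinct houses and make $k$ distinct agents (one per peak) receive their top-ranked house, hence envy-free. So the substance of the lemma is the upper bound: \emph{at most} $k+1$ agents from $\bigcup_{j} base(h_j)$ can be simultaneously envy-free.

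For the upper bound I would fix an optimal allocation $A^\star$ and argue by contradiction, supposing at least $k+2$ agents from the union are envy-free. The key geometric observation, inherited from the proof of \Cref{claim:atmosttwo}, is that on the single-peaked axis $\rhd$ an envy-free agent whose peak is $h_j$ must be assigned a house strictly on one side of $h_j$ (left or right), and two such agents sharing the same peak and lying on the same side contradict envy-freeness (the inner one envies the outer one). The overlap hypothesis $span(h_j)\cap span(h_l)\neq\emptyset$ forces the relevant peaks to be ``close'' on the axis: because the spans pairwise intersect, the union $\bigcup_j span(h_j)$ is an \emph{interval} on $\rhd$ (a standard Helly-type fact for intervals on a line — pairwise-intersecting intervals have a common point, and their union is an interval), and every house in this interval must remain unallocated whenever two or more agents per peak are to be kept envy-free. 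I would argue that any envy-free agent from the union must receive a house outside this common interval, lying either entirely to its left or entirely to its right.

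The counting then proceeds as follows. Partition the envy-free agents of the union into those allocated to the left of the common interval and those allocated to the right. Within the ``left'' group, order the agents by the position of their assigned houses; I claim they must have \emph{distinct} peaks and moreover their peaks must be nested/ordered consistently with their houses, for otherwise two of them form an envying pair exactly as in \Cref{claim:atmosttwo}. This yields at most $k$ envy-free agents on the left (one per peak) and symmetrically at most $k$ on the right, which is too weak on its own. The sharper bound comes from noticing that the two agents receiving their actual peak houses cannot both exist unless those peaks are resolved non-wastefully, and a careful accounting of which peaks can simultaneously ``spend'' an envy-free agent on each side shows the totals on the two sides cannot both reach their maxima: the interval structure forces all but at most one of the $k$ peaks to contribute a single envy-free agent, with exactly one peak permitted to contribute two (one on each side), giving $k+1$ in total.

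The main obstacle I anticipate is making the ``at most $k+1$'' counting rigorous rather than just ``at most $2k$''. The naive per-peak argument of \Cref{claim:atmosttwo} gives two envy-free agents per peak and hence $2k$; squeezing this down to $k+1$ requires exploiting the \emph{overlap} of spans globally rather than peak-by-peak. Concretely, the hard part is to show that the houses actually assigned to the envy-free agents on, say, the left side, together with the forced-unallocated common interval, leave room for essentially only one envy-free agent per distinct peak across the whole union, with a single extra agent permitted because only the two extreme flanks (far left and far right of the whole cluster) can each host an agent sitting at a genuine peak. I would formalize this by a careful left-to-right sweep, tracking for each candidate envy-free agent both its assigned house and its peak, and showing that introducing a $(k+2)$-nd envy-free agent forces either two agents with a common peak on the same side (an envying pair) or an allocated house inside the forced-empty common interval (contradicting that the spans must stay unallocated), in either case a contradiction.
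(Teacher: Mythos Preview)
Your lower bound matches the paper's exactly. For the upper bound, however, the paper's argument is substantially simpler than your proposed sweep, and it bypasses the difficulty you correctly flag (squeezing $2k$ down to $k+1$) in one stroke.

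The paper does not use Helly, a union interval, or a left-to-right sweep. Instead it argues locally from a single resolved span. Suppose some peak, say $h_1$, has two envy-free agents in $base(h_1)$ under $A^\star$. By \Cref{claim:atmosttwo}, this forces all of $span(h_1)$ to be unallocated, and the two envy-free agents receive houses $h_1^1, h_1^2$ lying immediately to the left and right of $span(h_1)$ (their $|span(h_1)|+1$-ranked houses). Now take any other peak $h_i$. Since $span(h_i)\cap span(h_1)\neq\emptyset$ and both are intervals, $span(h_i)$ must extend past $span(h_1)$ on at least one side, so it contains $h_1^1$ or $h_1^2$. But that house is \emph{allocated}, so $span(h_i)$ is not fully unallocated, hence $h_i$ cannot be resolved and $base(h_i)$ contributes at most one envy-free agent. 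Summing gives $2+(k-1)=k+1$. If no span is resolved at all, every base contributes at most one envy-free agent and the bound is $k$.

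Your plan could perhaps be pushed through, but two of its intermediate claims are shaky as stated: first, ``any envy-free agent from the union must receive a house outside this common interval'' is false when an agent simply receives her own peak $h_j$ (which lies inside the union of spans); second, you conflate the Helly \emph{common point} with the \emph{union} interval, and it is the latter you actually use. More importantly, the sweep you sketch never isolates the single mechanism that makes the bound tight --- namely that the two boundary houses allocated when one span is resolved physically sit inside every other overlapping span and therefore block them. Once you see that, the global counting becomes unnecessary.
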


\begin{proof}
Consider a non-wasteful allocation of the $k$ shared peaks among $k$ agents in the set $\bigcup_{i \in t} base(h_i)$. Clearly, these $k$ agents are envy-free in any complete allocation, as each of them receives their favorite house. This allocation makes at least $k$ agents envy-free. 
Now consider an allocation where at least one span is resolved, say span($h_1$). This means that $h_1$ and span($h_1$) remain unallocated and consequently, $2$ agents from $base(h_1)$ are made envy-free by the allocation of the houses, say $h_1^1$ and $h_1^2$. Now consider any other overlapping span, say span($h_i$). Then $span(h_i)  \cap span(h_1) \neq \emptyset$. Then, by the structure of the rankings, either $h_1^1$ or $h_1^2$ must belong to the span($h_i$). This implies that once span($h_1$) is resolved (that is, $h_1^1$ and $h_1^2$ are allocated), then span($h_i$) can not be resolved (at most one agent from peak$(h_i)$ can be made envy-free). Since the choice of $h_i$ was arbitrary, this holds for every other overlapping span with span($h_1$). Therefore, at most $k+1$ agents can be envy-free from the set $\bigcup_{i \in t} base(h_i)$. 
\end{proof}

A shared peak $h$ is said to be \emph{resolved} under an allocation $A$ if $span(h)$ remains unallocated and as a result exactly two agents from the set $base(h)$ become envy-free under $A$, specifically, the agents from $base(h)$ who get their $span(h)+1$ ranked house (\Cref{claim:atmosttwo}).


\paragraph{Overview of \Cref{alg:numberenvy}.} In the light of \Cref{lemma:individualpeaks}, we first allocate all the individual peaks non-wastefully. For a pair of shared peaks $h_j$ and $h_l$ such that $h_j \in span(h_l)$ and $h_l \in span(h_j)$, we allocate $h_j$ and $h_l$ non-wastefully. Otherwise if $h_j \in span(h_l)$ but $h_l \notin span(h_j)$, then
in the light of \Cref{lem:atmostk+1}, at most $3$ agents from $base(h_j) \cup base(h_l)$ can be envy-free, and to that end, we allocate $h_l$ non-wastefully. Now what remains are the shared peaks, possibly with overlapping spans. We resolve these remaining peaks in a greedy manner, by choosing a peak with the minimum span size in each step. In each step, we resolve a peak and set some unassigned houses as unavailable. We resolve the peaks as long as the number of unallocated agents is less than the number of available unallocated houses.
Finally, we complete the allocation by letting the remaining unallocated agents choose an available unallocated house, one by one. We are now ready to present the proof idea of the main result.

\begin{proof}[Proof Idea of \Cref{thm:singlepeaks}] 
Let $EF(A)$ denote the set of envy-free agents in an allocation $A$.
Let $A$ be the output of \Cref{alg:numberenvy} and $A^*$ be an optimal allocation.  Clearly, $|EF(A^*)| \geq |EF(A)|$. To prove the correctness of \Cref{alg:numberenvy}, we show that $|EF(A^*)| = |EF(A)|$. To this end, we show that $|EF(A^*) \setminus EF(A)| = |EF(A) \setminus EF(A^*)|$ by analyzing cases based on if a peak is in the span of another.
We defer that to the appendix.
\end{proof}

It is relevant to note that, unlike Algorithm \ref{alg:minEnvylocal}, Algorithm \ref{alg:numberenvy} does not extend to other envy measures like $\tenvy$. This is because the structural claim of Lemma \ref{claim:atmosttwo} is specific to $\numenvy$. For examples, consider Example \ref{ex:spdefinitions}. The allocation $A$ such that $A(i_1) = h_2, A(i_2)=h_4, A(i_3)=h_5, A(i_4)=h_7$ minimizes $\tenvy$ which entails allocating houses from $span(h_4)$. The highlighted allocation where $span(h_4)$ remains unallocated is suboptimal for minimizing $\tenvy$.

\subsection{Single-Dipped Preferences}
We now present out results for single-dipped preferences.
Analogous to single-peaked case, under single-dipped preferences, there is an ordering $\rhd$ on the houses such that for any agent $i$, there is a least preferred house $h$, called a \emph{$dip(i)$}, and she prefers the houses better as she moves away from $h$ in either direction with respect to the ordering $\rhd$. Formally, 

 \begin{definition}[Single-Dipped Preferences]
A preference ranking $\succ_i$ is \emph{single-dipped} with
respect to an ordering~$\rhd$ of houses $H$ if for every pair of houses $h,h' \in H$, $h \rhd h' \rhd dip(i)$ or $h \rhd h' \rhd dip(i)$, implies $h \prec_i h'$. A preference profile $\succ$ is single-dipped if there exists an ordering $\rhd$ over $H$ such that $\succ_i$ is single-dipped with respect to $\rhd$ for every $i \in N$.
\end{definition}

We now present the main result for such preferences.

\begin{restatable}{theorem}{singledip}
\label{thm:single-dipped} Given a single-dipped instance $\mathcal{I} = (N, H, \succ, \rhd)$ of house allocation, minimizing the number of envious agents admits a linear time algorithm with runtime $O(m)$.
\end{restatable}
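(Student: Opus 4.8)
The plan is to exploit the structural contrast between single-dipped and single-peaked preferences: under single-dipped preferences, the most preferred houses of the agents are the two \emph{extreme} houses $h_1$ and $h_m$ on the axis $\rhd$ (since preferences increase as one moves away from the dip in either direction, the global maximum of every agent's ranking is one of the two endpoints). This is a severe restriction compared to the single-peaked case, where peaks can be spread across the axis. First I would make this observation precise: for every agent $i$, $peak(i) \in \{h_1, h_m\}$. Consequently the set $N$ partitions (up to ties at the endpoints) into two groups according to which endpoint is ranked first, and $h_1, h_m$ are the only candidate ``shared peaks'' in the sense of the single-peaked analysis.

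Next I would argue that at most two agents can ever be envy-free, mirroring \Cref{claim:atmosttwo} but now applied globally rather than per-peak: any agent not receiving $h_1$ or $h_m$ envies whoever (if anyone) holds the endpoint house it prefers, and more strongly, once the two endpoint houses are handed out, every other agent strictly prefers at least one of them to its own allotment. I would then show that an envy-free agent must be allocated an endpoint house, so the number of envy-free agents is at most two, and that two is in fact achievable generically by giving $h_1$ to an agent whose top choice is $h_1$ and $h_m$ to an agent whose top choice is $h_m$ (when both groups are nonempty), completing the allocation arbitrarily. This reduces the optimization to a tiny case analysis on whether one or both endpoint houses have an agent peaking there, and whether allocating one endpoint forces envy on the other side.

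With the combinatorial characterization in hand, the algorithm becomes essentially a constant-size decision: inspect the top-ranked house of each agent to decide group membership, check the two (or fewer) ways of assigning the endpoints, and select the assignment maximizing the number of envy-free agents; all else is completed greedily. The linear $O(m)$ runtime claim then follows because we only need a single pass to read off each agent's most-preferred endpoint and to complete the allocation, with no sorting or span computation required (unlike the single-peaked algorithm, which needs $O(m^2)$ to handle overlapping spans).

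The main obstacle I anticipate is handling ties and the boundary cases cleanly: when only one endpoint has an agent peaking at it, or when every agent peaks at the same endpoint, one must verify that exactly one (rather than two) envy-free agents is the true optimum, and argue that no clever non-endpoint allocation beats the endpoint strategy. Establishing that envy-freeness genuinely \emph{requires} holding an endpoint house—carefully ruling out the degenerate possibility that some interior allocation avoids envy because all better houses happen to be unallocated—is the delicate step, and I would prove it by the same ``unallocated-prefix'' argument used in \Cref{lemma:individualpeaks}, showing the symmetric increase structure of single-dipped rankings forces any envy-free agent's allocation to dominate all houses on its preferred side, which only the endpoint can do once the complete allocation fills the remaining slots.
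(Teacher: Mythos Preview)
Your overall strategy is right in spirit---the single-dipped structure does force every agent's top choice to be an endpoint, and the ``at most two envy-free agents'' bound (the paper's \Cref{claim:atmost2}) is indeed the key structural fact. But your plan contains a genuine error in the boundary case you yourself flag as delicate.

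You write that an envy-free agent must be allocated an endpoint house, and that when every agent peaks at the same endpoint ``exactly one (rather than two) envy-free agents is the true optimum.'' Both claims are false. Consider two agents and four houses on the axis $h_1 \rhd h_2 \rhd h_3 \rhd h_4$, with agent~1's dip at $h_4$ (ranking $h_1 \succ h_2 \succ h_3 \succ h_4$) and agent~2's dip at $h_3$ (ranking $h_1 \succ h_4 \succ h_2 \succ h_3$). Both agents peak at $h_1$. Leave $h_1$ unallocated, give $h_2$ to agent~1 and $h_4$ to agent~2: both are envy-free, and agent~1 holds an interior house. So the ``degenerate possibility that some interior allocation avoids envy because all better houses happen to be unallocated'' is not something to be ruled out---it is precisely what the optimal algorithm must exploit.

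The paper handles this via the $span$ of the common peak $h$: the set of houses identically ranked at the top by all of $base(h)$. When $|S_1|=1$, the algorithm checks whether $m - |span(h)| \ge n$; if so, it leaves $span(h)$ unallocated and gives two agents their rank-$(|span(h)|+1)$ houses (which are distinct by definition of span), achieving two envy-free agents. Only when $m - |span(h)| < n$ is one envy-free agent the true optimum. Your proposal omits this span-based branch entirely, so as written it would return a suboptimal allocation on instances like the one above.
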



\begin{algorithm}[t]
\caption{Minimize $\numenvy$ for Single-Dipped Preferences}
\label{alg:numberenvySD}
\begin{algorithmic}[1]
\Require $\{N, H, \succ, \vartriangleright \}$
\Ensure An allocation $A$ that minimizes the number of envious agents
\State $S_1 := \{h \in [m]: rank_i(h) = 1~\text{for}~i \in [n]\}$
\If{$|S_1| > 1$}
\State $A(i_1) = h_1$ s.t. $h_1 \in S_1 ~\&~i_1 \in base(h_1)$
\State $A(i_2) = h_2$ s.t. $h_2 \in S_1 ~\&~i_2 \in base(h_2), h_2 \neq h_1$
\State Order the remaining agents and let each agent choose its highest ranked house among the remaining houses.
\Else~$|S_1|=1$, say $h \in S_1$
\If{$m - span(h) \geq n$}
\State $S_{span(h)+1} := \{h': rank_i(h') = |span(h)|+1~\text{f.s.}~i \in [n]\}$
\State Let $h_1, h_2 \in S_{span(h)+1}, h_1 \neq h_2$
\State $A(i_1) = h_1$ such that $rank_{i_1}(h_1) = |span(h)|+1$
\State $A(i_2) = h_2$ such that  $rank_{i_2}(h_2) = |span(h)|+1$
\State $U \leftarrow span(h)$ 
\State Repeat Step 5 on the houses in $M \setminus U$
\Else~$m - |span(h)| < n$
\State $A(i) = h$ for some $i \in base(h)$
\State Repeat Step $5$
\EndIf
\EndIf
\State Output $A$
\end{algorithmic}
\end{algorithm}

We first present the following interesting structural claim.

\begin{restatable}{lemma}{atmosttwodipped}
\label{claim:atmost2}
When the preferences are single-dipped, at most two agents can be envy-free under any complete allocation.
\end{restatable}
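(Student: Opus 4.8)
The plan is to exploit the global structure of single-dipped preferences: every agent's most-preferred houses are the two \emph{extreme} houses $h_1$ and $h_m$ on the ordering $\rhd$, since preference increases monotonically as one moves away from the dip in either direction. Concretely, for any agent $i$, the top-ranked house is whichever of $h_1, h_m$ is farther from $dip(i)$, and in general the first-ranked house of every agent is one of the two endpoints of the axis. This is the key asymmetry with the single-peaked case, where peaks can be spread across the axis.

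First I would argue that in any complete allocation $A$, an envy-free agent must receive one of a very small set of ``good'' houses. Suppose agent $i$ is envy-free under $A$. Then $A(i) \succeq_i A(j)$ for every other agent $j$, so in particular $A(i)$ must be ranked by $i$ above every other \emph{allocated} house. Since $m > n$, at least one house is unallocated; but the crucial point is to track the two endpoints $h_1$ and $h_m$. I would show that if $i$ is envy-free, then $A(i)$ together with the allocation forces $i$'s preferred endpoint to be essentially ``reserved'' for $i$.

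The main step is to show that two distinct envy-free agents $i$ and $i'$ must be anchored at the two \emph{different} endpoints of the axis. Consider the house $A(i)$ received by an envy-free agent $i$. If $A(i) \neq h_1$ and $A(i) \neq h_m$, then both endpoints are ranked strictly above $A(i)$ by $i$ (each endpoint is farther from $dip(i)$ than $A(i)$ is, or at least weakly so by single-dippedness), so for $i$ to be envy-free both $h_1$ and $h_m$ must be unallocated. Using this, I would establish that at most one envy-free agent can be ``anchored'' on each side: if $i$ and $i'$ are both envy-free with, say, $A(i)$ and $A(i')$ both lying on the same side of (or equal to) one endpoint, then one of them strictly prefers the other's house (because along a single-dipped axis, on one side of the dip preference is monotone), producing envy --- a contradiction. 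This mirrors the interval argument in the proof of \Cref{claim:atmosttwo}, but here the relevant ``peak'' location is a shared pair of endpoints rather than many distinct peaks. Hence at most two agents, one associated with each endpoint, can be simultaneously envy-free.

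The hard part will be handling the case analysis cleanly: I must separately treat whether the endpoints $h_1, h_m$ are themselves allocated or not, and rule out the degenerate possibility that three agents could be envy-free by receiving houses clustered near one endpoint. I expect the cleanest route is a direct contradiction argument: assume three envy-free agents $i_1, i_2, i_3$, observe that by pigeonhole at least two of their allocated houses lie weakly on the same side of the (common) dip direction toward one endpoint, invoke single-dippedness to show the one receiving the house closer to the dip envies the other, and conclude. This yields the bound of two and completes the proof of \Cref{claim:atmost2}.
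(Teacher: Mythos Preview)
Your overall strategy---assume three envy-free agents and derive a contradiction via single-dipped monotonicity---is sound and is exactly the route the paper takes. However, the specific pigeonhole mechanism you propose does not work. You write that two of the three allocated houses ``lie weakly on the same side of the (common) dip direction'' and that then ``the one receiving the house closer to the dip envies the other.'' But there is no common dip: each agent has her own dip, and two agents who even share the \emph{same} endpoint as their top choice need not envy one another. For instance, on the axis $h_1 \rhd \cdots \rhd h_{10}$, take agent $i$ with $dip(i)=h_5$ receiving $h_3$, and agent $i'$ with $dip(i')=h_2$ receiving $h_4$; both can have peak $h_{10}$, yet $h_3 \succ_i h_4$ (both left of $h_5$, and $h_3$ is farther left) while $h_4 \succ_{i'} h_3$ (both right of $h_2$, and $h_4$ is farther right), so neither envies the other. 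Thus ``two anchored at the same endpoint $\Rightarrow$ one envies the other'' is false, and your pigeonhole step collapses.

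The fix---and this is what the paper does---is to pivot on the \emph{middle} of the three allocated houses rather than on any shared reference point. Order the three allocated houses as $A(i_1) \rhd A(i_2) \rhd A(i_3)$ along the axis and look at $i_2$. If $dip(i_2)$ lies weakly left of $A(i_2)$, then $A(i_2)$ and $A(i_3)$ are both weakly right of $dip(i_2)$ with $A(i_3)$ strictly farther, so $A(i_3) \succ_{i_2} A(i_2)$ and $i_2$ envies $i_3$. Symmetrically, if $dip(i_2)$ is strictly right of $A(i_2)$, then $i_2$ envies $i_1$. Either way $i_2$ is not envy-free, contradiction. The paper phrases this by taking the leftmost and rightmost allocated houses $h_j,h_l$ and casing on where a third agent's dip falls relative to $h_j$ and $h_l$, which is the same argument from the other end. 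Your endpoint discussion in the middle paragraphs is largely unnecessary once this is in place.
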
 

\begin{proof}
Suppose $\{h_1, h_2, \ldots h_m\}$ is the ordering of the houses with respect to which the preferences are single-dipped. Notice that for every agent, either $h_1$ or $h_m$ is the first-ranked house. If both these houses are allocated under an allocation $A$, say to agents $i_1$ and $i_2$, such that $rank_{i_1}(h_1)=1$ and
 $rank_{i_2}(h_m)=1$, then it is easy to see that both $i_1$ and $i_2$ are envy-free. Notice that these are the only envy-free agents since any other agent $i$ is already envious of the allocation of either $h_1$ or $h_m$. This settles our claim in this case.
 Now suppose $h_j$ is the first house in the ordering that is allocated (to $i_1$) and $h_l$ is the last one, allocated to $i_2$, such that both $i_1$ and $i_2$ are envy-free. Consider any other agent $i$. If the dip of $i$ lies to the left of $h_j$, then since $A(i) \in (h_j, h_l)$, $i$ is envious of the allocation of $h_l$. If the dip of $i$ lies to between $h_j$ and $h_l$, then again $A(i) \in (h_j, h_l)$, and $i$ would be envious of both $h_j$ and $h_l$. Lastly, if the dip of $i$ lies to the right of $h_l$, then $i$ is envious of the allocation of $h_j$. This settles the claim.
\end{proof}

\paragraph{Overview of \Cref{alg:numberenvySD}.} Based on \Cref{claim:atmost2}, \Cref{alg:numberenvySD} (in the appendix) works as follows. Consider the set $S_1$ of all the houses which are ranked first by any agent. If there are two distinct houses $h_1$ and $h_2$ in $S_1$, then the algorithm allocates these two houses to the agents who like them the most. We then order the remaining agents in an arbitrary manner and let each agent choose its highest ranked house among the remaining houses. Notice that the agents who receive $h_1$ and $h_2$ are indeed envy-free, and all the remaining agents are envious. \Cref{claim:atmost2} ensures that this is the best we can hope for. Otherwise, if $|S_1| = 1$ then everyone likes the same house, say $h$ as their first ranked house. Now consider the $span(h)$. If $m-|span(h)| \geq n$, then $span(h)$ remains unallocated. Construct the set $S_{span(h)+1}$ that contains all the houses ranked at $span(h)+1$ by any agent. This set must contain at least two distinct houses, say $h_1$ and $h_2$. Allocate $h_1$ and $h_2$ to agents $i_1$ and $i_2$ such that they rank $h_1$ and $h_2$ respectively at $|span(h)|+1$. This gives us two envy-free agents (since $span(h)$ is unallocated). The remaining agents are then ordered in an arbitrary manner and allocated their best available house from $m - |span(h)|$ houses. Otherwise, if $m-|span(h)| < n$, then we allocate $h$ non-wastefully and allocate the remaining $n-1$ agents their best available house.

We are now ready to prove the main result.

\begin{proof}[Proof of \Cref{thm:single-dipped}] We show that \Cref{alg:numberenvySD} correctly outputs an allocation $A$ with minimum $\numenvy$. If EF$(A) = 2$, then we are done by \Cref{claim:atmost2}. 
Else, EF$(A) = 1$. This implies that $m-span(h)<n$. Suppose, for contradiction, there is a complete allocation $A^*$ such that EF$(A^*) = 2$. Then, none of the houses from $span(h)$ could have been allocated under $A^*$. Indeed, all the agents have identical ranking for the houses in $span(h)$, and allocating any of them creates $n-1$ envious agents. Therefore, $span(h)$ must remain unallocated. But then, we have that $A^*$ is not a complete allocation since $m - |span(h)| < n$. This contradicts our assumption.
\end{proof}

When there are ties at the dip, we can have more than $2$ envy-free agents, but we can still minimize $\numenvy$ in polynomial time.

\begin{lemma} When the preferences are single-dipped with ties at the dip, either all the  $n$ agents are envy-free or at most $2$ agents are envy-free under any complete allocation.
\end{lemma}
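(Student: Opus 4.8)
The plan is to generalize the argument of \Cref{claim:atmost2} by isolating the effect of the tied dip. Fix the common single-dipped axis $h_1 \rhd \dots \rhd h_m$ and let $D$ denote the contiguous block of least-preferred houses that are tied at the dip. Every house outside $D$ is strictly preferred to every house of $D$ by every agent, and on each side of $D$ the preferences remain strictly monotone in the distance from $D$, exactly as in the strict case. The dichotomy I would prove is that a complete allocation $A$ either places every allocated house inside $D$ or places at least one allocated house strictly outside $D$, and I will show the first situation yields all $n$ envy-free agents while the second yields at most two.

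First I would handle the easy direction. If every house used by $A$ lies in $D$, then every agent receives one of its (tied) least-preferred houses, so no agent strictly prefers any allocated house to its own; hence $\numenvy(A)=0$ and all $n$ agents are envy-free. Note that this can occur only when $n \le |D|$, which is precisely the regime in which the tied dip lets us exceed the bound of two given by \Cref{claim:atmost2}.

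Next, suppose some allocated house lies outside $D$. Any agent holding a house of $D$ then strictly prefers, hence envies, the holder of that outside house, so every envy-free agent must hold a house outside $D$, i.e. a house on one of the two monotone arms. Here I would reuse the structure of \Cref{claim:atmost2}: on the left arm preferences strictly increase toward $h_1$ and on the right arm toward $h_m$, so among all allocated arm-houses exactly the leftmost and the rightmost are ``locally best'' for every agent on their respective sides. Consequently, an agent holding any arm-house that is neither the leftmost nor the rightmost allocated arm-house is dominated, within its own arm, by a strictly better allocated house and is therefore envious. Only the holders of the two extreme allocated arm-houses can avoid envy, giving at most two envy-free agents. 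Combining the two cases establishes the claimed dichotomy.

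The main obstacle is the second case, and specifically making the cross-arm comparison airtight: an agent holding the leftmost allocated arm-house is free of same-side envy, but whether it envies the rightmost allocated arm-house depends on how that agent interleaves the two arms, which is not fixed by the axis. I expect to argue that at most one agent per side can simultaneously remain envy-free, mirroring the ``one envy-free agent per interval'' step of \Cref{claim:atmost2}; carefully ruling out a third envy-free agent that is created by the tied dip rather than by the arms is the delicate point that separates this statement from the strict case.
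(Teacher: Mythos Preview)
Your proposal is correct and follows essentially the paper's approach: both hinge on whether only tied-dip houses are allocated (all agents envy-free) versus at least one arm house being allocated (at most two envy-free, via the monotonicity-on-each-arm argument of \Cref{claim:atmost2}); if anything, your allocation-level case split is a cleaner match for the statement than the paper's instance-level split on $|D|$ versus $n$. Your closing worry about cross-arm interleaving is unnecessary for this lemma: once you have shown that every non-extreme arm-house holder envies the holder of the extreme on its own arm, only the two extreme holders remain as envy-free candidates, and the lemma asks only for \emph{at most} two---whether one of those two envies the other affects tightness, not the upper bound.
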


\begin{proof} If there are $n$ houses that are all tied and ranked last by all the agents, then arbitrarily allocating these $n$ houses creates $n$ envy-free agents. Else, if there are less than $n$, say $n-1$ houses that are all tied and ranked last by all the agents, then even if all of them are allocated among any $n-1$ agents, all such $n-1$ agents will envy the other agent who gets a house outside of the ties.
Therefore, if there are less than $n$ houses in the ties, then by \Cref{claim:atmost2}, there can be at most $2$ envy-free agents.
\end{proof}

\begin{theorem} Given an instance $\mathcal{I} = (N, H, \succ, \rhd)$ of house allocation with single-dipped preferences with ties at the dip, minimizing the number of envious agents admits a polynomial time algorithm.
\end{theorem}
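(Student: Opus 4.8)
The plan is to reduce the ties-at-the-dip case to the structural dichotomy just established, and then show both branches are decidable in polynomial time. Recall the preceding lemma proves that under single-dipped preferences with ties at the dip, any complete allocation has either all $n$ agents envy-free or at most $2$ envy-free agents. The algorithm therefore only needs to decide which of these two regimes is achievable, and output an allocation realizing the better one.

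\begin{proof}[Proof Idea] First I would test whether the all-envy-free regime is attainable. By the proof of the preceding lemma, $n$ agents can be made envy-free only when there exist at least $n$ houses that are all tied at the dip and ranked last by every agent; in that case allocating any $n$ of these tied houses to the $n$ agents yields an envy-free allocation. Checking this condition requires only identifying the set of houses that sit at the common dip (the tied bottom of every agent's ranking) and counting whether this set has size at least $n$, which is clearly polynomial (indeed linear in the input size). If this succeeds, we output such an allocation and halt, having achieved the optimum $\numenvy = 0$.

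If the all-envy-free test fails, then by the lemma at most $2$ agents can be envy-free, and the task becomes computing an allocation with exactly $2$ envy-free agents if one exists, else settling for $1$. Here I would reuse the machinery of \Cref{alg:numberenvySD} and the structural reasoning of \Cref{claim:atmost2}: the two envy-free agents must receive the two ``extreme'' first-ranked houses (the houses ranked first by some agent, at the ends of the axis $\rhd$), with the remaining agents completing the allocation arbitrarily. The only new subtlety is that ties at the dip enlarge the set of houses that all agents rank identically at the bottom; when the number of non-tied houses is too small relative to $n$, exactly as in the proof of \Cref{thm:single-dipped}, only one envy-free agent is possible, and we allocate the common first-ranked house non-wastefully and complete greedily. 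Deciding between the two-envy-free and one-envy-free outcomes again reduces to comparing $m - |span(h)|$ against $n$, a constant-time check once $span(h)$ is computed.

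The main obstacle I anticipate is handling the interaction between ties at the dip and the span structure underlying the single-dipped analysis: with ties, the dip is no longer a single house, so the notions of $span(h)$, the extreme first-ranked houses, and the ``identically ranked bottom block'' must be carefully redefined so that \Cref{claim:atmost2} and its proof transfer verbatim. Once these definitions are pinned down, the correctness argument mirrors that of \Cref{thm:single-dipped} almost exactly, and the polynomial (in fact near-linear) running time follows because every step is a single scan over the houses and agents. I would therefore spend the bulk of the write-up verifying that the tie-adjusted structural claim still forces all envy-free agents into the two extremes, and that no allocation can do better than the regime selected by the two simple tests above.
\end{proof}
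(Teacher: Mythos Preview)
Your proposal is correct and follows essentially the same two-branch approach as the paper: first test whether at least $n$ houses are tied at the common dip (yielding an envy-free allocation), and otherwise fall back on the machinery of \Cref{thm:single-dipped} and \Cref{alg:numberenvySD}. The paper's own proof is a single sentence that does exactly this.

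One minor remark: the obstacle you anticipate is less serious than you suggest. Because the ties occur only at the \emph{dip}, each agent's top-ranked house is still uniquely one of the two extremes $h_1$ or $h_m$, and the notion of $span(h)$ (which concerns identically ranked \emph{top} houses) is unaffected; hence \Cref{claim:atmost2} and its proof transfer without any redefinition. You need not spend the bulk of the write-up on this point.
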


\begin{proof} 
If there aren't enough houses (at least $n$) in the ties, then we proceed similarly as in the proof of \Cref{thm:single-dipped} to arrive at an allocation with two envy-free agents. This settles our claim.
\end{proof}


\subsection{Fairness and Efficiency}
\label{sec:PO}

In this section, we study the compatibility of Pareto optimality with minimizing $\numenvy$ and show the following result.

\begin{restatable}{theorem}{singlepeakedPO}
\label{thm:singlepeakedPO}
    Given a single-peaked/dipped instance $\mathcal{I} = (N, H, \succ, \rhd)$ of house allocation, we can decide in polynomial time whether there exists an allocation that minimizes $\numenvy$ and is Pareto optimal.
\end{restatable}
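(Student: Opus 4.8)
The plan is to reduce the decision to comparing two quantities that are both computable in polynomial time: the maximum number of envy-free agents achievable, i.e.\ $n-\min\numenvy$ (obtained from \Cref{alg:numberenvy} in the single-peaked case and \Cref{alg:numberenvySD} in the single-dipped case), and the number $P$ of \emph{distinct peak houses} (for single-peaked preferences $P=p_I+p_S$; for single-dipped preferences $P=|S_1|\in\{1,2\}$). The crux is one structural lemma that holds for arbitrary strict preferences: \emph{in any Pareto optimal allocation, every envy-free agent is assigned her peak house}. To prove it, suppose $i$ is envy-free under a Pareto optimal allocation $A$ but $A(i)\neq peak(i)$. Then there is a house $g$ with $g\succ_i A(i)$, and since $i$ is envy-free no \emph{allocated} house is preferred by $i$ to $A(i)$; hence $g$ is unallocated (such houses exist as $m>n$). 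Reassigning $g$ to $i$ and releasing $A(i)$ strictly improves $i$ while leaving every other agent's house unchanged, contradicting Pareto optimality.

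Two consequences follow immediately. First, since each peak house can be held by at most one agent, any Pareto optimal allocation has at most $P$ envy-free agents, so $|EF(A^*)|\le P$ whenever $A^*$ is Pareto optimal. Second, assigning the $P$ distinct peak houses non-wastefully to $P$ distinct base agents already produces $P$ envy-free agents, so $n-\min\numenvy\ge P$ in general (this also recovers the lower bound of \Cref{lem:numEF}). Combining the two, a Pareto optimal allocation minimizing $\numenvy$ can exist \emph{only if} $n-\min\numenvy=P$ exactly; and since we always have $n-\min\numenvy\ge P$, the existence question is equivalent to the test ``is the computed optimum equal to $P$?''.

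For the forward (YES) direction, when $n-\min\numenvy=P$ I would exhibit an explicit witness: assign each peak house to one of its base agents (every individual peak to its unique agent, every shared peak to an arbitrary base agent), and complete the allocation by serial dictatorship over the remaining agents and houses. Serial dictatorship is Pareto optimal, and the pre-assigned peak holders cannot take part in any Pareto-improving reallocation, as each already holds her top house and any change would make her strictly worse off; hence the full allocation is Pareto optimal. Its $P$ peak holders are envy-free, so $|EF|\ge P$, and since $P$ is the optimum no completion can exceed it, forcing $|EF|=P$ and therefore minimum $\numenvy$. The decision procedure is thus: run \Cref{alg:numberenvy}/\Cref{alg:numberenvySD} to obtain the optimum, count $P$, and answer YES iff the two agree; every step is polynomial.

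The step I expect to be the main obstacle is making the Pareto optimality of the peak-plus-serial-dictatorship witness fully rigorous: one must argue that fixing the peak holders and running serial dictatorship on the residual instance yields an allocation that is \emph{globally} Pareto optimal, not merely Pareto optimal on the residual houses, by verifying that no Pareto-improving chain or cycle can route through a peak holder. The remaining bookkeeping --- that the optimum is at least $P$, and that the completion does not spuriously create extra envy-free agents (which would contradict optimality) --- is routine.
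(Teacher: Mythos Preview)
Your proposal is correct and takes a cleaner route than the paper. The paper proceeds operationally: it runs \Cref{alg:numberenvy} (resp.\ \Cref{alg:numberenvySD}), eliminates envy cycles from the output, and then invokes the full Pareto-optimality characterization of \Cref{lem:PO} (no envy cycles \emph{and} every house ranked above an agent's own is allocated) to argue that the resulting allocation is either PO, or else some resolved span witnesses $|EF|>p_I+p_S$, in which case no PO allocation can match the optimum. Your argument instead isolates the single structural fact that does all the work---in any PO allocation an envy-free agent must hold her peak---and from it derives the numeric criterion $n-\min\numenvy=P$ directly, without touching envy cycles or the algorithms' internals. The paper's route has the mild advantage of producing the witness by post-processing the min-$\numenvy$ output; your peak-plus-serial-dictatorship witness is independent of the algorithm and its global Pareto optimality is indeed straightforward (any Pareto improvement must fix every peak holder, hence restricts to a Pareto improvement on the residual instance, contradicting serial dictatorship). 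The ``main obstacle'' you flag is therefore not really one.
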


The following example illustrates that an envy-free allocation may not be Pareto optimal.
\begin{example}
    Consider the following instance with three agents and four houses. The highlighted allocation $A$ is envy-free since the contentious house $h_1$ remains unallocated. But it is not Pareto optimal since allocating $h_1$ to any of the agents strictly improves their value without making anyone else worse off.

\begin{table}[!ht]
    \centering
    \begin{tabular}{ll}
    $i_1$: & $h_1 \succ \mathbf{h_2} \succ h_3 \succ h_4$ \\
    $i_2$: & $h_1 \succ \mathbf{h_3} \succ h_2  \succ h_4$ \\
    $i_3$: & $h_1 \succ \mathbf{h_4} \succ h_2 \succ h_3$ \\
    \end{tabular}
    \label{tab:example}
\end{table}
    
\end{example}

We have that a Pareto optimal (PO) allocation may not be compatible with minimizing envy in the system. We next show that we can decide in polynomial time whether there exists an allocation that minimizes $\numenvy$ and is PO. Given an allocation $A$, the envy graph $G$ of $A$ is a directed graph with vertices representing the agents and a directed edge from agent $i$ to agent $j$ if $i$ envies $j$ under $A$. Note that if there is a directed cycle in $G$, then the cyclic exchange of bundles does not create any new envious agents but improves the allocation of the agents on the cycle. We first show the characteristics of a PO allocation.

\begin{restatable}{lemma}{POcharacteristics}
\label{lem:PO} Given an instance $\mathcal{I} = (N, H, \succ, \rhd)$ of house allocation, an allocation $A$ is Pareto optimal if and only if (1) there is no directed cycle in the envy graph of $A$, and (2) each house $h$ such that $rank_i(h) < rank_i(A(i))$ are allocated under $A$ for any $i \in [n]$.
\end{restatable}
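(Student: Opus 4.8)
The plan is to prove both implications directly: for the forward direction I exhibit explicit Pareto-improving reallocations whenever a condition fails, and for the reverse direction I run a combinatorial cycle-extraction argument on the set of agents whose assignment changes.

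For the forward direction I argue by contraposition, so I assume $A$ is Pareto optimal and derive (1) and (2). Suppose (1) fails, i.e. the envy graph of $A$ contains a directed cycle $i_1 \to i_2 \to \cdots \to i_k \to i_1$ with $k \geq 2$ (length at least two since an agent never envies itself), where each $i_\ell$ envies $i_{\ell+1}$. I reallocate cyclically by setting $A'(i_\ell) = A(i_{\ell+1})$ with indices taken modulo $k$, leaving every other agent fixed. Because $i_\ell$ envies $i_{\ell+1}$ we have $A(i_{\ell+1}) \succ_{i_\ell} A(i_\ell)$, so each agent on the cycle strictly improves while no one else is affected; hence $A'$ Pareto dominates $A$, a contradiction. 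Suppose instead (2) fails, so some house $h$ with $h \succ_i A(i)$ is unallocated under $A$ for some agent $i$. Then assigning $A'(i) = h$ and keeping all other agents fixed is a valid (still injective) allocation in which $i$ strictly improves and no one is worse off, again contradicting Pareto optimality.

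For the reverse direction I assume (1) and (2) and suppose, for contradiction, that some allocation $A'$ Pareto dominates $A$. Since preferences are strict and no agent is worse off, the set $B := \{ i : A'(i) \neq A(i)\}$ of agents whose house changes coincides with the set of strictly-improving agents, and it is nonempty. The key claim is that $B$ merely reshuffles its own houses. For each $i \in B$ we have $A'(i) \succ_i A(i)$, so if $A'(i)$ were unallocated under $A$ it would violate (2); hence $A'(i) = A(j)$ for some agent $j$, and injectivity of $A'$ forces $A'(j) \neq A(j)$, i.e.\ $j \in B$. Thus $\sigma(i) := A^{-1}(A'(i))$ is a well-defined map $B \to B$; it is injective because $A$ and $A'$ are, hence a permutation of the finite set $B$, and it is fixed-point-free since $\sigma(i)=i$ would give $A'(i)=A(i)$. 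Any fixed-point-free permutation contains a cycle $i_1 \to \sigma(i_1) \to \cdots \to i_1$ of length at least two, and along it $A(\sigma(i_\ell)) = A'(i_\ell) \succ_{i_\ell} A(i_\ell)$ shows $i_\ell$ envies $\sigma(i_\ell)$. This is a directed cycle in the envy graph, contradicting (1), which completes the proof.

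The main obstacle is the reverse direction, specifically establishing that the improving agents can only trade among themselves. This is exactly where condition (2) is indispensable: it rules out an improving agent claiming a strictly preferred house from the unallocated pool, which is what forces $\sigma$ to map $B$ into $B$ and thereby produces the envy cycle. Once $\sigma$ is recognized as a fixed-point-free self-map of a finite set, the existence of a cycle and hence the contradiction with (1) is immediate. I would also record at the outset that strictness of the preferences is used to identify $B$ with the strictly-improving agents; under ties one would instead take a dominating $A'$ minimizing $|A~\Delta~A'|$ to recover the same closed-reshuffling structure.
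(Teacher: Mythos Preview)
Your proof is correct and follows essentially the same approach as the paper: both forward directions exhibit the same explicit Pareto improvements when (1) or (2) fails, and both reverse directions argue that a Pareto-dominating allocation forces the improving agents to trade houses among themselves, yielding an envy cycle. Your reverse direction is in fact more carefully written than the paper's, which only sketches the chaining argument; your permutation $\sigma$ on $B$ makes the cycle extraction fully explicit.
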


\begin{proof}
    Suppose $A$ is Pareto optimal. If there is an envy cycle in the envy graph of $A$, then resolving the envy cycle leads to a Pareto improvement. This contradicts that $A$ was Pareto optimal. Further, if there is an unallocated house $h$ such that $rank_i(h) < rank_i(A(i))$, then allocating $h$ to $i$ is a Pareto improvement. This again contradicts that $A$ was Pareto optimal. 
    
    On the other hand, suppose the two properties hold. If $A$ is not Pareto optimal, then there is an agent, say $i_1$, who can be made better off, without making any other agent worse off. Agent $i_1$ can be made better off only by assigning a house $h$ such that $rank_{i_1}(h) < rank_{i_1}(A(i_1))$. Such a house $h$ must have been allocated under $A$, say to agent $i_2$ (by property $2$). Since there are no envy cycles, it must be that $i_2$ prefers $h$ over $A(i_1)$. If $i_2$ gets $A(i_1)$, then it is worse-off. Otherwise, if $i_2$ ends up getting a better-ranked house which is again allocated under $A$, then since there are no envy cycles, at least one agent must become worse off in the process. Therefore, we have that $A$ is Pareto optimal.
\end{proof}

We now present the proof of Theorem \ref{thm:singlepeakedPO}.


\begin{proof}
Let $A$ be the output of \Cref{alg:numberenvy} that minimizes $\numenvy$. 
If $A$ is Pareto optimal, then we are done. Suppose $A$ is not Pareto optimal. Then we claim that there is no allocation for $\mathcal{I}$ which minimizes $\numenvy$ and is Pareto optimal as well. Suppose the number of envy-free agents under $A$ is $k$. Since $A$ is not Pareto optimal, then by \Cref{lem:PO}, either there is a directed cycle $C$ in the envy graph of $A$ or there is a house $h$ such that $rank_i(h) < rank_i(A(i))$ remains unallocated under $A$. 
In the former case, we obtain an allocation $A'$ such that envy graph of $A'$ no longer contains $C$. To that end, we do a cyclic exchange of bundles of agents in $C$. That is, 
given a directed cycle $C := \{i_1, i_2, \ldots i_c\}$, we obtain $A'$ as $A'(i) = A(i)~\forall~i \notin C$ and $A'(i_1) = A(i_2), A'(i_2)=A(i_3), \ldots A'(i_c) = A(i_1)$. Note that since the set of allocated houses remains the same, the number of envious agents does not increase. \\
If there are no envy cycles in the envy graph of $A$ and $A$ is not Pareto optimal, then there is a house $h$ such that $rank_i(h) < rank_i(A(i))$ remains unallocated under $A$. Then $h$ can not be an individual peak for any agent, else it would have been allocated non-wastefully by  \Cref{alg:numberenvy}. Otherwise, suppose $h$ is a shared peak or belongs to $span(h')$ such that $h'$ is a shared peak. Since it remains unallocated, either $span(h)$ or $span(h')$ must have been resolved by \Cref{alg:numberenvy}. This leads to two envy-free agents corresponding to $span(h)$ or $span(h')$ (by \Cref{claim:atmosttwo}). Therefore, we have $p_I+p_S < k$ (by \Cref{lem:numEF}).
In any Pareto optimal allocation, all the peaks must be assigned (by \Cref{lem:PO}), but then, at most $p_I+p_S$ would be envy-free, hence violating minimum $\numenvy$. Therefore, no PO allocation can minimize $\numenvy$ in $\mathcal{I}$. This settles the claim.

We now argue that deciding the existence of $\numenvy$ and PO allocation admits a polynomial time algorithm for single-dipped preferences as well. The argument is analogous to the single-peaked case. Indeed, consider the allocation $A$ as output by \Cref{alg:numberenvySD}. If it is PO, we are done. Otherwise, if there is any directed cycle in the envy graph of $A$, it can be analogously resolved by a cyclic exchange of bundles without making any agent newly envious. Finally, if there is a house $h$ such that $rank_i(h) < rank_i(A(i))$ and $h$ remains unallocated under $A$, then $h \in span(h')$. Also, $h$ remains unallocated implies that $span(h)$ remains unallocated. So, at least $2$ agents are envy-free under $A$. Under any PO allocation $A'$, $h$ must be allocated and consequently, all the agents that do not receive $h$ will be envious of whoever gets $h$. This implies that at most $1$ agent is envy-free under any PO allocation. Hence, none of the PO allocations minimize $\numenvy$. The theorem stands proved. 
\end{proof}

\section{Experiments}


We experimentally evaluated our algorithms with synthetic house allocation instances. 
%
We construct instances with $n = 6$ agents and $6 \leq m \leq 11$ houses. Every agent values a house between an integer between $0$ and $10$ chosen uniformly at random. The results are averaged over 100 instances for each $(n, m)$ pair\footnote{The codes can be accessed at \url{https://github.com/anonymous1203/FairSocieties}}. 
For each instance, our algorithm (\Cref{alg:minEnvylocal}) is initialized with a welfare-maximizing allocation. We compare the welfare of the allocations obtained as the numbers of reallocations aimed at minimizing $\numenvy$ increases $1 \leq q \leq n$. 
%

\begin{figure*}[t]
    \begin{minipage}{0.48\textwidth}
        \includegraphics[scale=0.25]{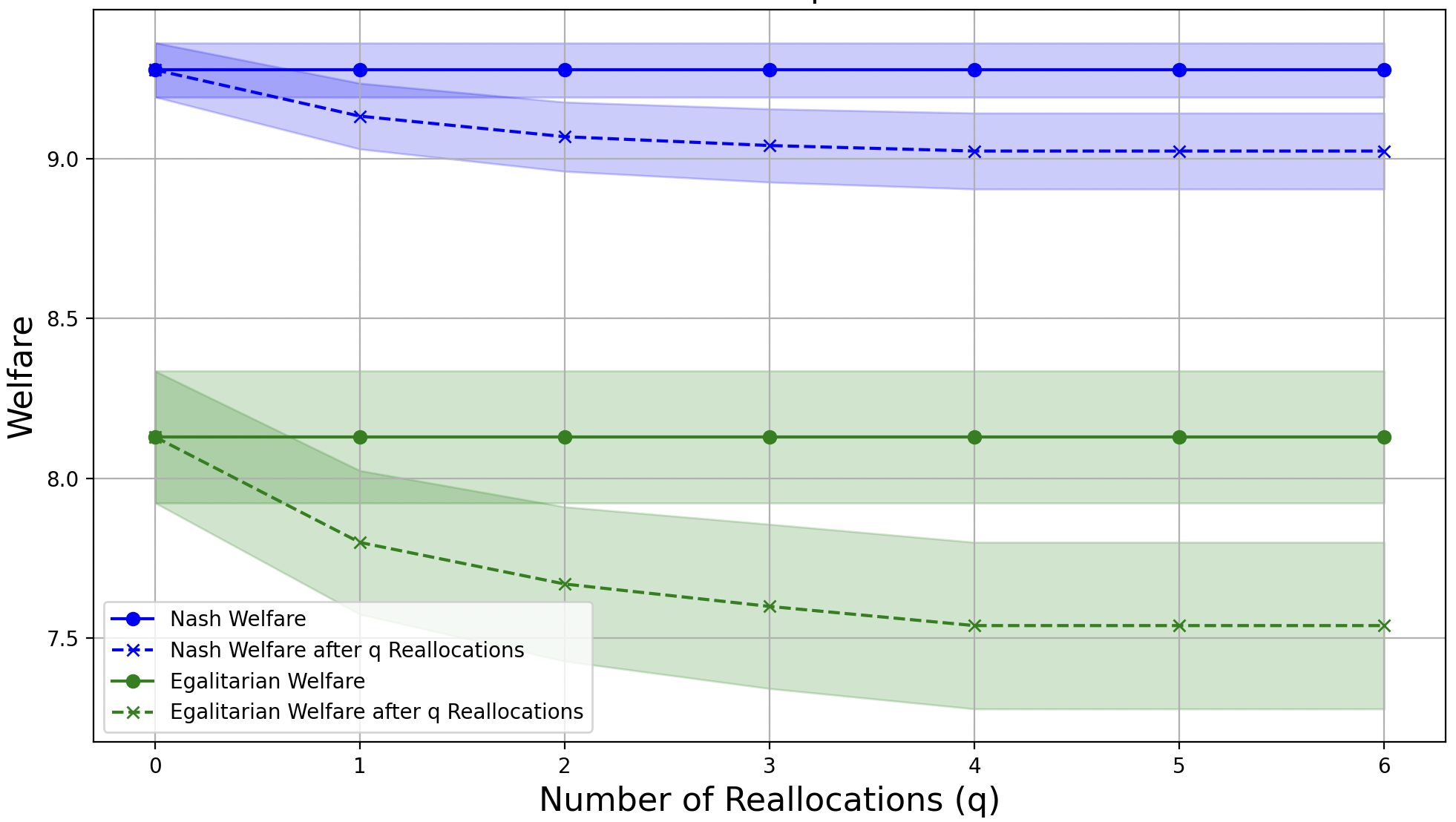}
        \caption{Welfare loss incurred starting from a Nash (blue) and Egalitarian (green) welfare-maximizing allocation and performing at most $q$ reallocations, where $1 \leq q \leq n$.}\label{fig:welfare_nash_and_egal} 
    \end{minipage}\hfill
    \begin{minipage}{0.45\textwidth}
        \includegraphics[scale=0.25]{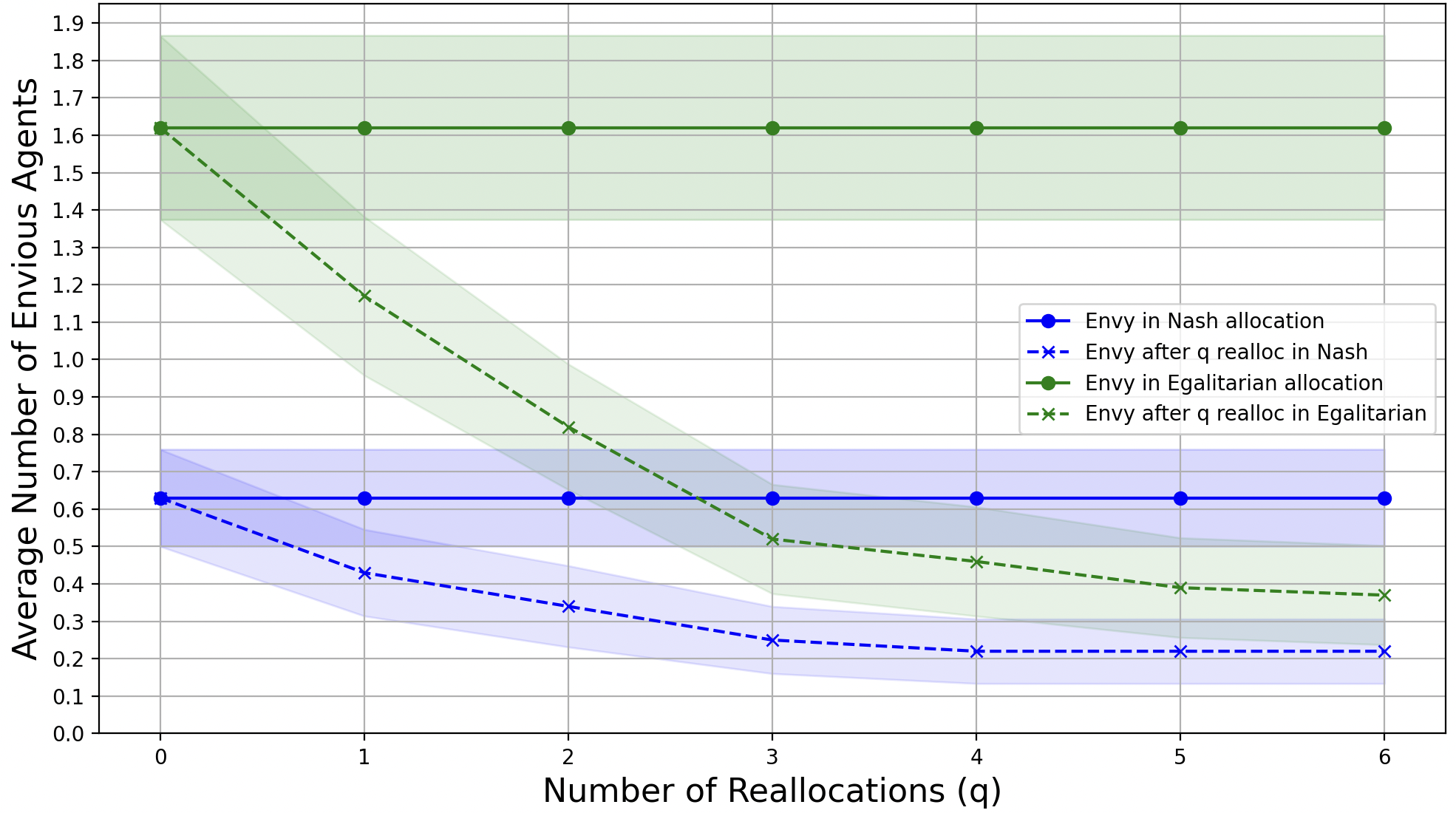}
        \caption{The drop in $\numenvy$ starting from Nash (blue) and Egalitarian (green) welfare-maximizing allocations and performing at most $q$ reallocations, where $1 \leq q \leq n$.}\label{fig:nash_and_egal}
    \end{minipage}
\end{figure*}

The symmetric difference between welfare maximizing allocation and the one that minimizes $\numenvy$ gives us the desired alternating paths and cycles. We then use a dynamic program to select the optimal set of $q$ components that guide the reallocations. 

Figure \ref{fig:welfare_nash_and_egal} shows that the welfare loss as we increase the number of reallocations starting from a Nash and an Egalitarian welfare-maximizing allocation. 
Figure \ref{fig:nash_and_egal} shows the reduction in $\numenvy$ as we increase the number of reallocations. Both plots suggest that welfare loss and the drop in $\numenvy$ starts flattening after $3$ reallocations. That is, a small number of reallocations, specifically around $3$, are sufficient to significantly reduce envy while maintaining high welfare. Thus, starting from a welfare-maximizing allocation, just a few targeted changes can move the system close to a minimum envy state without substantial sacrifice in welfare.
We also visualize how an allocation with a fixed number of reallocations $q$ compares in terms of welfare with the one that minimizes $\numenvy$ and $\tenvy$ and the one that maximizes the welfare (see Figure \ref{fig:fourfigs} and \ref{fig:fourfigs_total_envy} in the appendix). We defer the additional experiments, including those with single-peaked/dipped preferences, to the appendix.

\section{Conclusion} 
We present a general framework that enables tractable computations for finding fairer solutions in house allocations. Given the known hardness and inapproximability of minimizing envy in this context (for instance, hardness of minimizing maximum envy even when the preferences are binary, maximum envy is one, and the maximum degree in the preference graph is constant \citep{Madathil2025}), unless P=NP, we cannot get a tractable algorithm parameterized by maximum envy $k$ and $d$, even for binary preferences, making a framework like ours essential to achieve any tractable solution.  
The properties that are necessary for our algorithm are: the connectivity of minimal improvement sets, and being able to compute the decrease in envy ($r_C$) due to an improvement set. Thus, the algorithm may extend for weighted envy~\citep{dai2024weighted} but would not directly work for concepts like local envy~\citep{HPSVV23,BCGHLMW2019} where the envy depends on factors other than the structure of the preferences. We also show that minimizing $\numenvy$, which is known to be hard for general rankings, becomes tractable on single-peaked/dipped preferences. Extending these algorithms for such rankings with ties and for other measures of envy is an interesting direction.

\section*{Acknowledgements}

HH acknowledges the support from the National Science Foundation (NSF) through CAREER Award IIS-2144413 and Award IIS-2107173. SR is supported by the Start-Up project grant of Indian Statistical Institute.
AS acknowledges the support from Walmart Center for Tech Excellence (CSR WMGT-23-0001).

\bibliographystyle{plainnat}
\bibliography{references.bib}
\clearpage

\clearpage

\appendix

\section{Preliminaries}
\label{sec:cardinal_prelims}
We implement our algorithms on house allocation instances with cardinal preferences. Here, we define such instances and the associated welfare measures.

\paragraph{Cardinal Preferences.}
An instance of the house allocation problem with cardinal preferences is denoted by $\langle N, H, v \rangle$, where $N:= \{1, 2, \ldots n\}$ is the set of $n \in \mathbb{N}$ agents, $H:= \{h_1,h_2,\dots,h_m\}$ is the set of $m \in \mathbb{N}$ houses, and $V:= \{v_{1}, v_{2}, \ldots, v_{n}\}$ is the \emph{valuation profile}. Each $v_i: H \rightarrow \mathbb{Z}$ indicates agent $i$'s value for house $h \in H$. An agent is said to derive a utility of $v_i(A(i)$ under allocation $A$.

\paragraph{Fairness Notions.}
    Given an allocation $A$, we say that agent $i$ \emph{envies} agent $j$ if it values $j's$ house more than its own house. That is, if $v_{i}(A(j)) > v_i (A(i))$. The \textit{amount} of this pairwise envy is defined as $\envy_{i,j} (A) := \max\{v_i (A(j)) - v_i (A(i)), 0\}$ for cardinal preferences.

\paragraph{Welfare Notions.} The welfare of an allocation $A$ is defined as some aggregation of individual agents' utilities  that they derive from their allocated houses under $A$. Specifically, Utilitarian welfare $UW(A)$ of an allocation $A$ is the sum of the individual agents' utilities: $UW(A) = \sum_{i \in N} v_i(A(i)$; Nash welfare ($NW(A)$) is defined as the geometric mean of the individual agents' utilities: $NW(A) = (\prod_{i \in N} v_i(A(i)))^{\frac{1}{n}}$ and Egalitarian welfare ($EW(A)$) is the minimum utility of any agent under $A$: $EW(A) = \min_{i \in N} v_i(A_i)$.

\section{Missing Proofs from \Cref{sec:fpt}}




\Sisfeasible*

\begin{proof}[Proof of \Cref{thm:fpt_enviousagents}~(Continued)]
 We now argue that there is a solution to the knapsack problem if and only if there is a desired allocation $A$.  
Let the set $\mathcal{C}' \subseteq \mathcal{C}$ be solution to the knapsack problem. 
Using condition (1)-(3) of feasibility, we have that each component in $\mathcal{C}'$ contains one or (possibly) more $\hat{A}$-alternating path(s)/cycle(s).
Let $X$ denotes the set of $\hat{A}$-alternating paths and cycles in $\mathcal{C}'$.  
Thus, we convert the allocation $\hat{A}$ to $ A := \hat{A} \oplus X$. Observe that each feasible component is a nice set by condition (4). Thus, we get that $\numenvy(A) = \numenvy(\hat{A})- \sum_{C \in \mathcal{C}'}r_C$. Therefore, $\numenvy(A) = \numenvy(\hat{A})-k$. Additionally,  we get that $|A~\Delta~\hat{A}| \leq q$ since we have $\sum_{C \in \mathcal{C}'} n_C \leq q$. Hence, we have a yes-instance of the house allocation problem.

On the other hand, suppose we have a yes-instance of the house allocation problem. That is, we have an allocation $A$ such that $\numenvy(A) = \numenvy(\hat{A})-k$ and $|A~\Delta~\hat{A}| \leq q$. Then, consider the symmetric difference $A~\Delta~\hat{A}$, which contains $\hat{A}$-alternating paths and cycles $T_1, \ldots, T_{s'}$  in $G$. 
Moreover, there exists a collection of nice sets $S_1, S_2, \dots S_\ell$ such that $\{T_1, \ldots, T_{s'}\} = S_1~ \dot\cup ~S_2~ \dot\cup ~\dots~ \dot\cup ~S_\ell$. Let $C_i$ denote the component induced on the vertices of $S_i$ in $G$, that is, $C_i= G[V(S_i)]$ for $i \in [\ell]$.
Note that in $\hat{A}$-alternating path/cycle $T_i$ for $i \in [s']$, we have that each agent, house, and edge is colored red by $\chi$ since $\chi$ is a good coloring. Moreover, by \Cref{lem:Sisfeasible}, we have that $C_i$ is a feasible component in $\mathcal{C}$ for each $i \in [\ell]$. 
 Then, we have $\sum_{i \in [\ell]} n_{C_i} \leq q$ and $\sum_{i \in[\ell]}r_{C_i} \geq k$ and hence, it is a yes-instance of the knapsack problem.
This completes the proof for the equivalence.
\end{proof}

 \subsection{Derandomization of \Cref{alg:minEnvylocal}}
 \label{sec:derandomization}
 To make \Cref{alg:minEnvylocal} deterministic, we first introduce the notion of an $n$-$p$-$q$-{\em lopsided universal} family. Given a universe $U$ and an integer $\ell$, we denote all the $\ell$-sized subsets of $U$ by ${U \choose \ell}$. We say that a  family 
$\mathcal{F}$ of sets over a universe $U$ with $\vert U\vert=n$, is an $n$-$p$-$q$-{\em lopsided universal} family if for every $A \in {U \choose p}$ and $B \in {U \setminus A \choose q}$, there is an $F \in \mathcal{F}$ such that $A \subseteq F$ and $B \cap F = \emptyset$.

\begin{lemma}[\cite{FominLPS16}]
\label{lem:lopsidedUniversal}
There is an algorithm that given $n,p,q\in {\mathbb N}$ constructs an  $n$-$p$-$q$-lopsided universal family $\mathcal{F}$ of cardinality ${p+q \choose p} \cdot 2^{o(p+q)}  \log n$ in time 
$\vert  \mathcal{F} \vert  n$. 
\end{lemma}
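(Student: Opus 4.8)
The plan is to prove this in two stages: first use the probabilistic method to certify that a family of the advertised cardinality exists, and then derandomize it efficiently by combining perfect hash families (splitters) with a recursive, universe-shrinking construction so as to obtain both the running time and the $\log n$ dependence.

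First I would establish existence by a random construction. Sample a set $F \subseteq U$ by placing each element of $U$ into $F$ independently with probability $p/(p+q)$. For a fixed pair $(A,B)$ with $A \in \binom{U}{p}$ and $B \in \binom{U \setminus A}{q}$, we have $\Pr[A \subseteq F \text{ and } B \cap F = \emptyset] = \left(\frac{p}{p+q}\right)^p\left(\frac{q}{p+q}\right)^q$. By Stirling's approximation this equals $\binom{p+q}{p}^{-1}$ up to a $\mathrm{poly}(p+q)$ factor, so it is at least $\binom{p+q}{p}^{-1}/\mathrm{poly}(p+q)$. Taking $t$ independent samples, the probability that a fixed pair is missed by all of them is at most $(1 - \binom{p+q}{p}^{-1}/\mathrm{poly}(p+q))^t$, and a union bound over the at most $n^{p+q}$ relevant pairs shows that $t = \binom{p+q}{p}\cdot \mathrm{poly}(p+q)\cdot(p+q)\log n$ samples leave no pair uncovered with positive probability. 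Since $\mathrm{poly}(p+q)$ and the extra $(p+q)$ factor are both $2^{o(p+q)}$, this already yields a (non-constructive) family of cardinality $\binom{p+q}{p}\cdot 2^{o(p+q)}\log n$.

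Second, to make the construction deterministic and fast, I would shrink the effective universe using a perfect hash family. By the splitter construction of Naor, Schulman, and Srinivasan one can compute a family $\mathcal{H}$ of functions $h : U \to [p+q]$ such that every $(p+q)$-subset of $U$ is mapped injectively by some $h \in \mathcal{H}$. For each $h \in \mathcal{H}$ and each $S \in \binom{[p+q]}{p}$ I would place $h^{-1}(S)$ into $\mathcal{F}$. Given any target pair $(A,B)$, choosing an $h$ that separates $A \cup B$ and taking $S = h(A)$ gives $A \subseteq h^{-1}(S)$ and $B \cap h^{-1}(S) = \emptyset$, so $\mathcal{F}$ is indeed $n$-$p$-$q$-lopsided universal, with $|\mathcal{F}| = |\mathcal{H}|\cdot\binom{p+q}{p}$ and construction time linear in $|\mathcal{F}|\cdot n$.

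The main obstacle is quantitative rather than structural: a single-shot perfect hash family has size proportional to $e^{p+q} = 2^{\Theta(p+q)}$, which would inflate the bound by a genuine exponential factor and spoil the advertised $\binom{p+q}{p}\cdot 2^{o(p+q)}$. To recover the clean bound I would replace the one-level hashing by the recursive splitter machinery: split $U$ into two halves, recursively build lopsided universal families for every way of distributing the $p$ selected and $q$ forbidden elements across the two halves, and take the pairwise unions. Solving the resulting recurrence with the Naor--Schulman--Srinivasan splitter guarantees at each level contributes only a $2^{o(p+q)}$ overhead on top of the combinatorially unavoidable $\binom{p+q}{p}$, while keeping the dependence on the universe size to a single $\log n$ factor. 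Tracking these parameters through the recursion and verifying that the per-level hashing cost telescopes into the stated $2^{o(p+q)}\log n$ bound is where the bulk of the technical effort lies.
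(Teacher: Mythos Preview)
The paper does not prove this lemma at all: it is stated with a citation to \cite{FominLPS16} and used as a black box in the derandomization of Algorithm~\ref{alg:minEnvylocal}. So there is no ``paper's own proof'' to compare against.

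That said, your sketch is a faithful outline of how the cited result is actually established. The probabilistic existence argument is correct, and you correctly identify that a single-level $(n,p+q)$-perfect hash family would cost $e^{p+q}(p+q)^{O(\log(p+q))}\log n$, which is $2^{\Theta(p+q)}$ rather than $2^{o(p+q)}$. The fix in \cite{FominLPS16} is indeed a recursive/compositional splitter construction in the Naor--Schulman--Srinivasan style: one first hashes $U$ down to a universe of size roughly $(p+q)^2$ using an $(n,p+q,(p+q)^2)$-splitter of size $(p+q)^{O(1)}\log n$ (this is where the single $\log n$ factor enters), and then on the small universe builds the lopsided family by a careful block decomposition whose overhead is $2^{O((p+q)/\log\log(p+q))} = 2^{o(p+q)}$. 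Your ``split $U$ into two halves and recurse over all distributions of $p$ and $q$'' description is not quite the actual mechanism---the recursion is over a partition into many blocks of polylogarithmic size, not a binary split---and a naive binary recursion would reintroduce a $2^{\Theta(p+q)}$ factor from summing $\binom{p}{p_1}\binom{q}{q_1}$ over all splits. So the high-level plan is right, but the specific recursive scheme you describe would not telescope to $2^{o(p+q)}$ without the finer block structure used in the source.
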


To de-randomize the algorithm, we replace the function $\chi$ in our algorithm, we will use two steps. Let $\hat{n}=n+m$ and $\hat{m}$ denote the number of edges in the preference graph $G$. First, we use an $\hat{n}+\hat{m}$-$3q+qd$-$2qd$-lopsided universal family $\mathcal{F}_1$  over the vertices and edges of the preference graph $G$, where $A$ will be the red and green vertices and edges and $B$ will be the blue edges as defined in good coloring. Then,  family $\mathcal{F}_1$ has cardinality ${3q(1+d) \choose 2qd} \cdot 2^{o(dq)}  \log (\hat{n}+\hat{m})$. Then, to separate the red and green edges, we use  $\hat{m}$-$q$-$qd$-lopsided universal family $\mathcal{F}_2$ of cardinality ${q+qd \choose q} \cdot 2^{o(dq)}  \log \hat{m}$, where $\mathcal{F}_2$ is a family over the edge set of $G$. For every set $F_1\in \mathcal{F}_1$, we create a function $f_1$ that colors the vertices and edges of $F$ as $1$, and colors all the other vertices and edges as $2$. Similarly, for every set $F_2\in \mathcal{F}_2$, we create a function $f_2$ that colors every edge of $F_2$ that is colored $1$ by $f_1$ as $1$, and colors all the other edges that is colored $1$ by $f_1$ as $2$. Now, for every pair of functions $(f_1,f_2)$, where $F_1\in \mathcal{F}_1$ and $F_2\in \mathcal{F}_2$, we run our algorithm described above. 
\Cref{lem:lopsidedUniversal} implies we will find a pair of functions $(f_1,f_2)$ corresponding to a good coloring. Hence, \Cref{thm:fpt_enviousagents} can be proved using the two universal families.

\begin{proof}[Proof of Theorem \ref{thm:tenvy_maxenvy}]
We argue that with suitable modification to the definition of $r_C$ in  \Cref{alg:minEnvylocal}, it outputs an allocation $A$ such that the total envy reduces by $k$, that is, $\tenvy(A) = \tenvy(\hat{A})-k$ and $|A~\Delta~\hat{A}| \leq q$. To that end, in the Step $8$ of \Cref{alg:minEnvylocal}, we define $r_C := \tenvy(\hat{A}) - \tenvy(\hat{A} \oplus C)$. Dependent subsets and improvement sets are defined analogously for total envy. Then, the results (Lemmas \ref{lem:S_connected}-\ref{lem:Sisfeasible}) hold in this context envy as well as they do not depend on the definition of $r_C$. Similarly, we get an algorithm that reallocates at most $q$ agents and minimizes $\maxenvy$ by $k$ by defining $r_C := \max_{i\in N} \envy_i(\hat{A}) - \max_{i\in N} \envy_i(\hat{A}\oplus C)$ in the Step $8$ of \Cref{alg:minEnvylocal}.
\end{proof}

\begin{remark}\label{rem:localsearchtime}
     We can further optimize our running time to construct a minimal improvement set efficiently using the randomized algorithm (using the good coloring). We can sample the houses to be reallocated to construct a minimal improvement set. At some point we have picked $t$ houses for some $t \in [m]$ and $(q(d+1)-t)/(m-t)$ drops below $1/9$. That is $t$ is the largest value such that $(q(d+1)-t)/(m-t)>1/9$. Then, we run \Cref{alg:minEnvylocal}, spending time $9^{q(d+1)}n^{O(1)}$.
     Therefore, our algorithm have picked a
random set $X$ of size $t$. Let $N_X$ be the set of agents allocated to $X$. We run \Cref{alg:minEnvylocal}, where we start with a partially colored graph $G$ such that $X \cup N_X$ and the set $Y$, neighbors of $X\cup N_X$, in $G$ are colored correctly, assuming $X \cup N_X \subseteq S$ (see the definiton of good coloring). Thus, in Line 2 of \Cref{alg:minEnvylocal}, the coloring function $\chi$ colors vertices $V(G) \setminus (X\cup N_X \cup Y)$ and edges that are not incident to $X \cup N_X$. The next steps remain unchanged. Thus, given the partial coloring, the algorithm takes time $9^{q(d+1) - 2td}n^{O(1)}$.

The set $X$ is a subset of a minimal improvement set with probability $pr = {q(d+1) \choose t}/{m \choose t}$. In order to get constant success probability,
we run the algorithm $1/pr$ times, taking time $9^{q(d+1) - t}n^{O(1)}$ for each run. We optimize the value of $t$ to minimize the run time.
\end{remark}

\section{Missing Proof from \Cref{sec:restricted}}



\singlepeaks*

\begin{proof} We will show the correctness of \Cref{alg:numberenvy}. Let $A$ be the output of \Cref{alg:numberenvy} and let $EF(A)$ denote the set of envy-free agents in the allocation $A$. Suppose $A^*$
 is the allocation that minimizes $\numenvy$. Clearly, $|EF(A^*)| \geq |EF(A)|$. The aim is to show that $|EF(A^*)| = |EF(A)|$. To this end, we will show that $|EF(A^*) \setminus EF(A)| = |EF(A) \setminus EF(A^*)|$.

 Suppose $i \in EF(A^*) \setminus EF(A)$. We will show that corresponding to $i$, there is a unique agent $i^\prime$ such that $i^\prime \in EF(A) \setminus EF(A^*)$. Note that the first-ranked house by $i$, say $h$ must be a shared peak, else $A(i) = h$ and $i \in EF(A)$, which is not the case. Consider the following cases: 
\begin{enumerate}
    \item $A^*(i) = h$. Since $i \notin EF(A)$, there is an agent $i'$ who gets a house $h'$ under $A$ such that $A(i') = h' \succ_i A(i)$.

    \item[1(a).] If $h' = h$, then, since $A(i') = h$, we have that the first ranked house of $i'$ is also $h$. (Since $A$ always allocates the peaks non-wastefully and $h$ is a peak for $a$). This implies that if $i$ is envy-free under $A^*$, then $i'$ must have been envious of $i$ under $A^*$. But, as $A(i') = h$, $i^\prime$ is envy-free under $A$. So, $i^\prime \in EF(A) \setminus EF(A^*)$.

        \item[1(b).] Else, if $h' \neq h$, and $h$ remains unallocated under $A$, then $h$ is definitely a resolved peak under $A$. Suppose there are $k$ overlapping spans with $span(h)$. Then, we must have exactly $k+1$ agents, who are envy-free corresponding to the $k$ overlapping spans (which is the only case when a shared peak is resolved).
        Note that $A^*$ also can have at most $k+1$ envy-free agents corresponding to the above overlapping spans (by $\Cref{lem:atmostk+1}$). Since $i \in EF(A^*) \setminus EF(A)$, we must have some agent $i^\prime$ among the above $k+1$ agents under $A$ such that $i^\prime \in EF(A) \setminus EF(A^*)$.  

  \item[2.] $A^*(i) \neq h$. This implies that $h$ remains unallocated under $A^*$. All houses that $i$ ranks better that $A^*(i)$ remain unallocated (as $i \in EF(A^*)$). Consider all the overlapping spans with span($h$). If there are $k$ of them, then there are at most $k+1$ envy-free agents corresponding to these spans under $A^*$.
    Since $i \notin EF(A)$, there is an agent $i'$ who gets a house $h'$ under $A$ such that $A(i') = h' >_i A(i)$.

    \item[2(a).] If $h' = h$, then the peak $h$ is allocated and hence, not resolved under $A$. We consider the various cases why $h$ was not resolved under $A$ and for each of them, find a corresponding agent who is envy-free under $A$ but is envious under $A^*$. 

Suppose $h$ was not resolved because it is a house that is considered in Case $2$. This implies that $h$ lies in the span($h_j$) for some shared peak $h_j$ and $h_j$ lies in the span of $h$. Then, there can be at most $2$ envy-free agents corresponding to the span($h$) and span($h_j$) in any allocation. Under $A$, there are exactly $2$ envy-free agents, say $\{i_1, i_2\}$, corresponding to the spans since both $h_j$ and $h$ are allocated non-wastefully to say, $i_1$ and $i_2$ respectively. Since $h_j$ is in the span($h$) and $h$ is in the span($h_j$), it is easy to see that under  $A^*$, at most $2$ of the three agents $\{i, i_1, i_2\}$ can be envy-free. Therefore, this implies that $i_2 \in EF(A) \setminus EF(A^*)$ and we are done for this case.

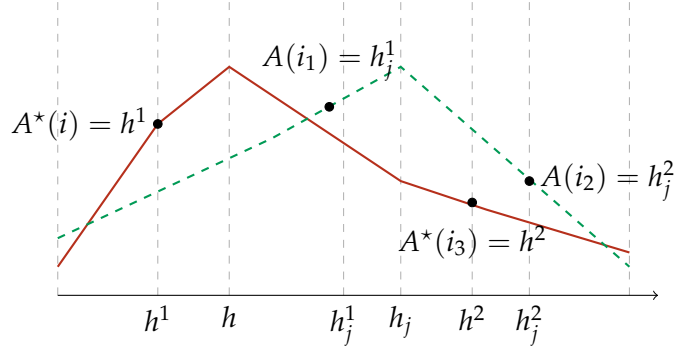
\begin{figure}[htbp]
\centering
\begin{tikzpicture}[scale=1.9]
\def\xlist{0, 0.7, 1.2, 2.0, 2.4, 2.9, 3.3, 4.0}
  \foreach \x in \xlist {
    \draw[gray!60, dashed] (\x,0) -- (\x,2.1);
  }

  \draw[->] (0,0) -- (4.2,0) node[right] {};

  \draw[BrickRed, thick]   (0,0.2) -- (0.7,1.2) -- (1.2,1.6) -- (2.4,0.8) -- (3.0,0.6) -- (4,0.3);
   \draw[ForestGreen, thick, dashed]   (0,0.4) -- (1.5,1.1) -- (2.4,1.6) -- (4,0.2);
  
  \filldraw (0.7,1.2) circle (0.03) node[left] {$A^\star(i)=h^1$};
\filldraw (2.9,0.65) circle (0.03) node[below=5pt] {$A^\star(i_3)=h^2$};
\filldraw (3.3,0.8) circle (0.03) node[right] {$A(i_2)=h^2_j$};
  \filldraw (1.9,1.32) circle (0.03) node[above=6pt] {$A(i_1)=h^1_j$};

  \node[below] at (0,0) {};
  \node[below] at (0.7,0) {$h^1$};
  \node[below] at (1.2,0) {$h$};
  \node[below] at (2,0) {$h^1_j$};
  \node[below] at (2.4,0) {$h_j$};
\node[below] at (2.9,0) {$h^2$};
\node[below] at (3.3,0) {$h^2_j$};

\end{tikzpicture}
\caption{A schematic of Case 2(a) in \Cref{thm:singlepeaks}}
\label{fig:thm5}
\end{figure}
    
Now suppose $h$ was not resolved because it is considered in Case $3$. This implies that there is a shared peak $h_j$ which lies in the span($h$) (so, resolving $h$ forces $h_j$ to remain unallocated). Now by \Cref{lem:atmostk+1}, we know that there can be at most $3$ envy-free agents corresponding to the span($h$) and span($h_j$) in any allocation. In particular, two agents, say $\{i_1, i_2\}$ could be envy-free if span($h_j$) was resolved and $i_3$ is envy-free who is the recipient of $h$ (which is allocated non-wastefully). If $h_j$ was resolved under $A$, then $A$ has exactly $3$ envy-free agents, namely $\{i_1, i_2, i_3\}$ (see \Cref{fig:thm5}). Now, at most $3$ agents from $\{i, i_1, i_2, i_3\}$ can be envy-free under $A^*$. But, as span($h$) is resolved under $A^*$ and $h_j \in span(h)$, so $h_j$ would not have been resolved and is also unallocated under $A^*$. This implies that only two agents, namely $i$ and $i_3$ are envy-free under $A^*$. If $h_j$ is resolved under $A$, then we have three agents $\{i_1, i_2, i_3\}$ envy-free under $A$, contradicting the optimality of $A^*$. And, if $h_j$ is not resolved under $A$, then we have two agents envy-free under $A$, namely $i_1$ (recipient of $h_j$) and $i_3$ (recipient of $h$) (WLOG).
Therefore, we get an agent $i_2$, such that $i_2 \in EF(A) \setminus EF(A^*)$.

    Otherwise, the only reason that $h$ was not resolved under $A$ was the fact that the number of unallocated houses at this point minus span($h$) would have been strictly less than the number of unallocated agents, say $n'$. Since $A$ resolves the spans in the increasing order of their sizes, let $\{h_{r_1}, h_{r_2}, \ldots h_{r_i}\}$ be the set of resolved spans and $\{h_{r_{i+1}} \ldots h \ldots h_{r_t}\}$ be the set of unresolved spans under $A$, as considered in Case $4$.
    Suppose $m'$ and $n'$ are the remaining houses and agents at the beginning of Case $4$. Note that we have $m' - span(h_j) - \sum_{i \in [r_i]} span(h_i)  < n'$ for all $j \in [r_{i+1}, r_t]$.  In particular, $m' - span(h) - \sum_{i \in [r_i]} span(h_i)  < n'$.

   We now argue that if span$(h)$ was resolved under $A^*$, then there must exist at least one span in the set of resolved spans under $A$, $\{h_{r_1}, h_{r_2}, \ldots h_{r_i}\}$, which is not resolved under $A^*$. If not, and all the resolved spans under $A$ are also resolved under $A^*$, then it must be that $m' - span(h) -  \sum_{i \in [r_i]} span(h_i)  \geq n'$. This contradicts the fact that $m' - span(h_{r_{i+1}}) - \sum_{i \in [r_i]} span(h_i) < n'$ since $span(h_{r_{i+1}}) < span(h)$ according to the ordering in Case $4$. Therefore, we have a span which is resolved in $A$ (say, $i_1$ and $i_2$ are two corresponding envy-free agents) but not in $A^*$. Therefore, at most one of $\{i_1, i_2\}$ is envy-free under $A^*$ and WLOG, we have $i_2 \in EF(A) \setminus EF(A^*)$.

 \item[2(b).]  Else, if $h' \neq h$, then the peak $h$ remains unallocated under $A$. Then $h$ is definitely a resolved peak. There are two envy-free agents $i_1$ and $i_2$ under $A$ who receive the resolved peaks and span($h$) remains unallocated. Now consider the set of agents $\{i, i_1, i_2\}$ under $A^*$. Since $i \in EF(A^*)$, at most one of $i_1$ and $i_2$, say $i_1$, can be envy-free under $A^*$ (by \Cref{claim:atmosttwo}). Therefore, $i_2 \in EF(A) \setminus EF(A^*)$.
 \end{enumerate}
 This settles the claim.
\end{proof}

\section{Additional Experiments}

\begin{figure*}[t]
    \centering
    \begin{minipage}[t]{0.48\textwidth}
        \centering
        \includegraphics[scale=0.25]{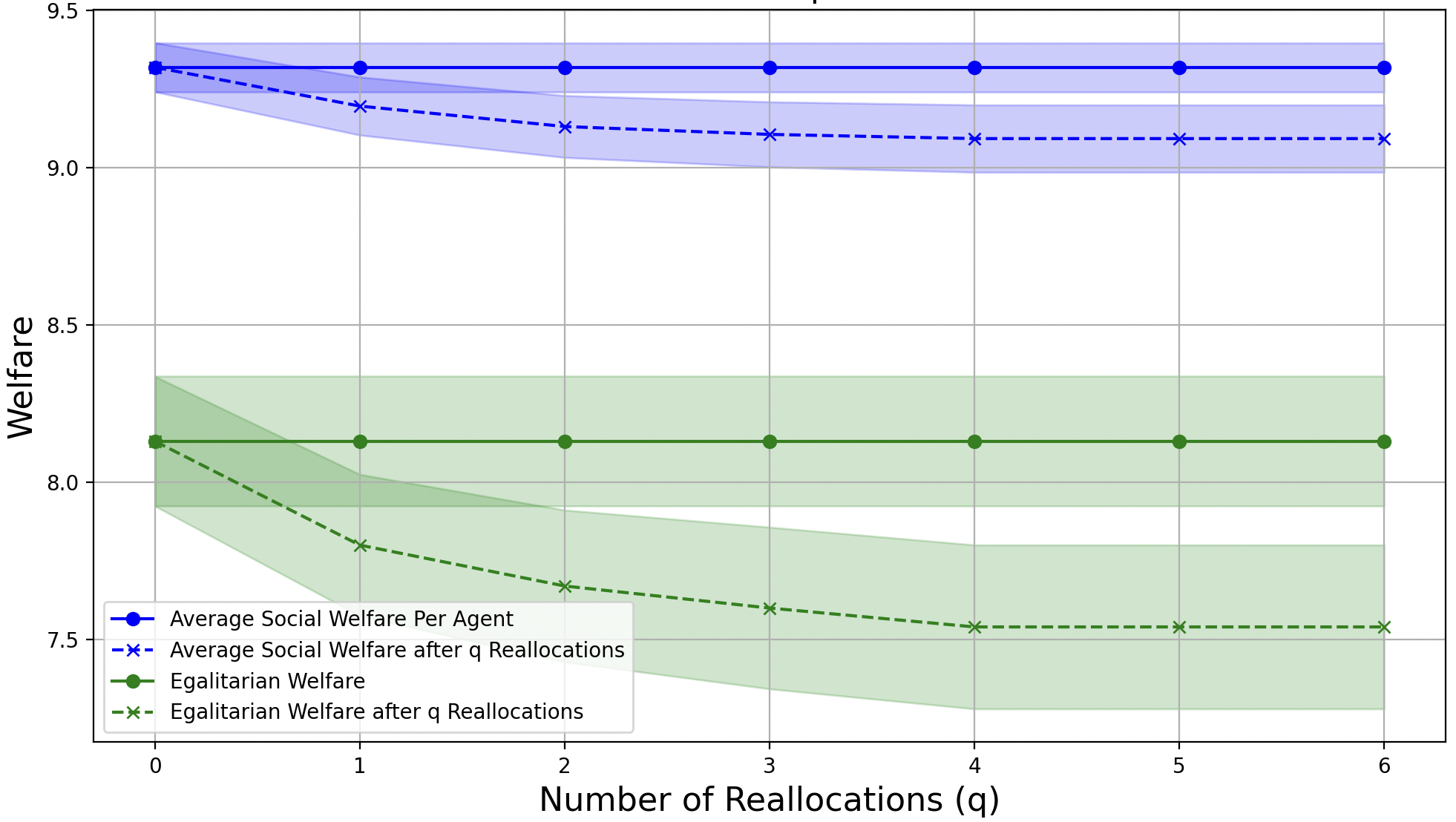}
        \caption{Welfare loss incurred starting from a Utilitarian (blue) and Egalitarian (green) welfare-maximizing allocation and performing at most $q$ reallocations, where $1 \leq q \leq n$.}
        \label{fig:welfare-loss}
    \end{minipage}%
    \hfill
    \begin{minipage}[t]{0.48\textwidth}
        \centering
        \includegraphics[scale=0.25]{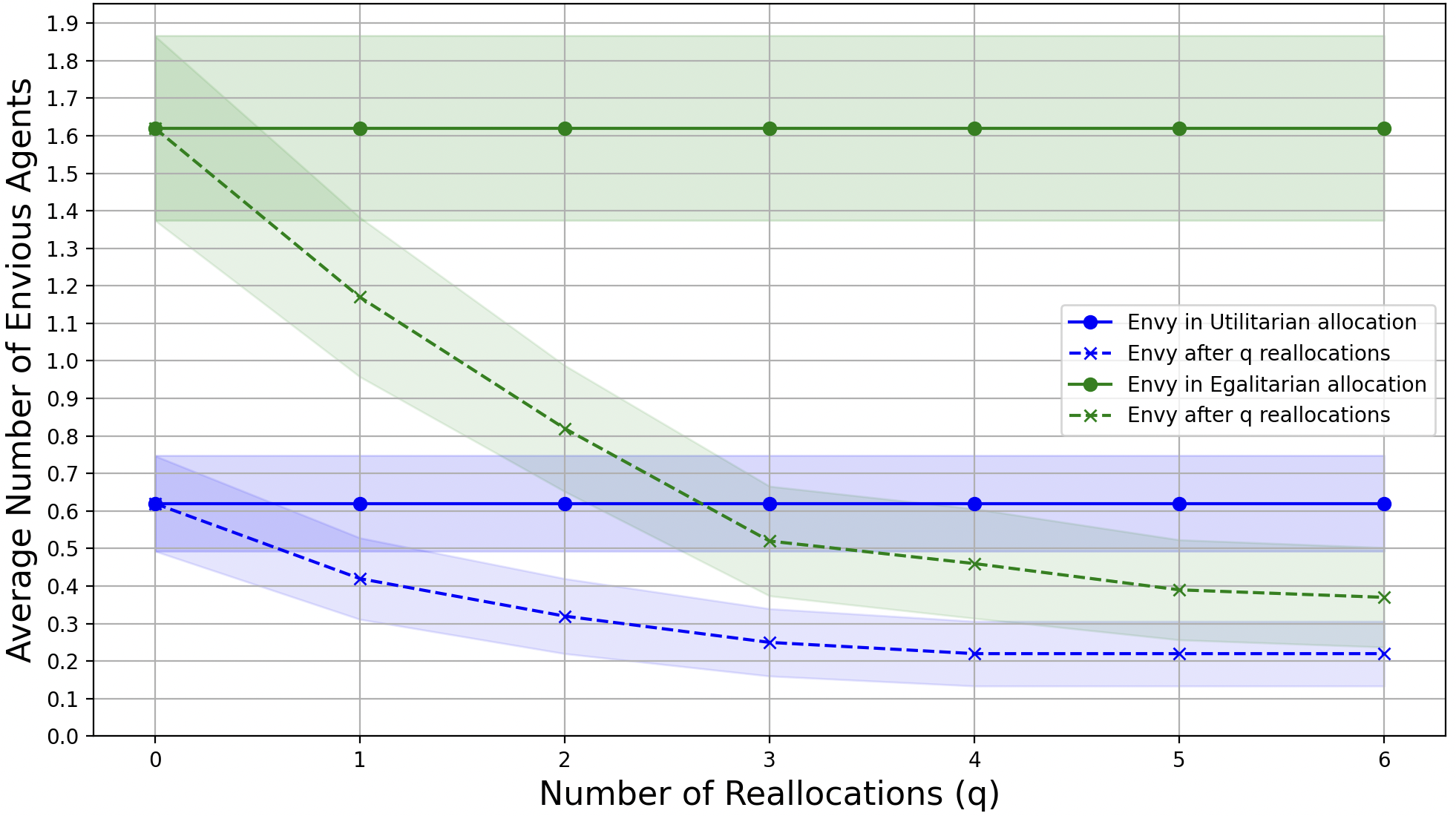}
        \caption{The drop in the number of envious agents starting from Utilitarian (blue) and Egalitarian (green) welfare-maximizing allocation and performing at most $q$ reallocations, where $1 \leq q \leq n$.}
        \label{fig:envy-drop}
    \end{minipage}
\end{figure*}

\paragraph{Bounded number of reallocations.} 
Figure \ref{fig:welfare-loss} depicts the welfare loss as we increase the number of reallocations, starting from a Utilitarian and Egalitarian welfare-maximizing allocation, while Figure \ref{fig:envy-drop} depicts the drop in $\numenvy$. 
We also visualize how an allocation with a fixed number of reallocations $q$ compares in terms of welfare with the one that minimizes $\numenvy$ and $\tenvy$ and the one that maximizes the welfare (see Figure \ref{fig:fourfigs} and \ref{fig:fourfigs_total_envy}).



\paragraph{Single-Peaked / Single-Dipped Preferences.}

We compare the welfare of the Utilitarian welfare-maximizing allocation and that of the allocation that minimizes the number of envious agents.
We ran Algorithm \ref{alg:numberenvy} and \ref{alg:numberenvySD} averaged over $1000$ house allocation instances with $n = 10$ and $10 \leq  m \leq 25$. We also computed welfare-maximizing allocations for each of these instances. Figure \ref{fig:singlepeaked} and \ref{fig:singledipped} depict the average welfare loss in the instances with cardinal preferences conforming to a single-peaked and single-dipped structure. Although the welfare loss is not substantial to begin with, as the number of houses increases, the optimal fair allocation minimizing $\numenvy$ starts achieving the best possible welfare.


\begin{figure*}[ht]
    \centering
    \begin{minipage}{0.48\textwidth}
        \centering
        \includegraphics[scale=0.3]{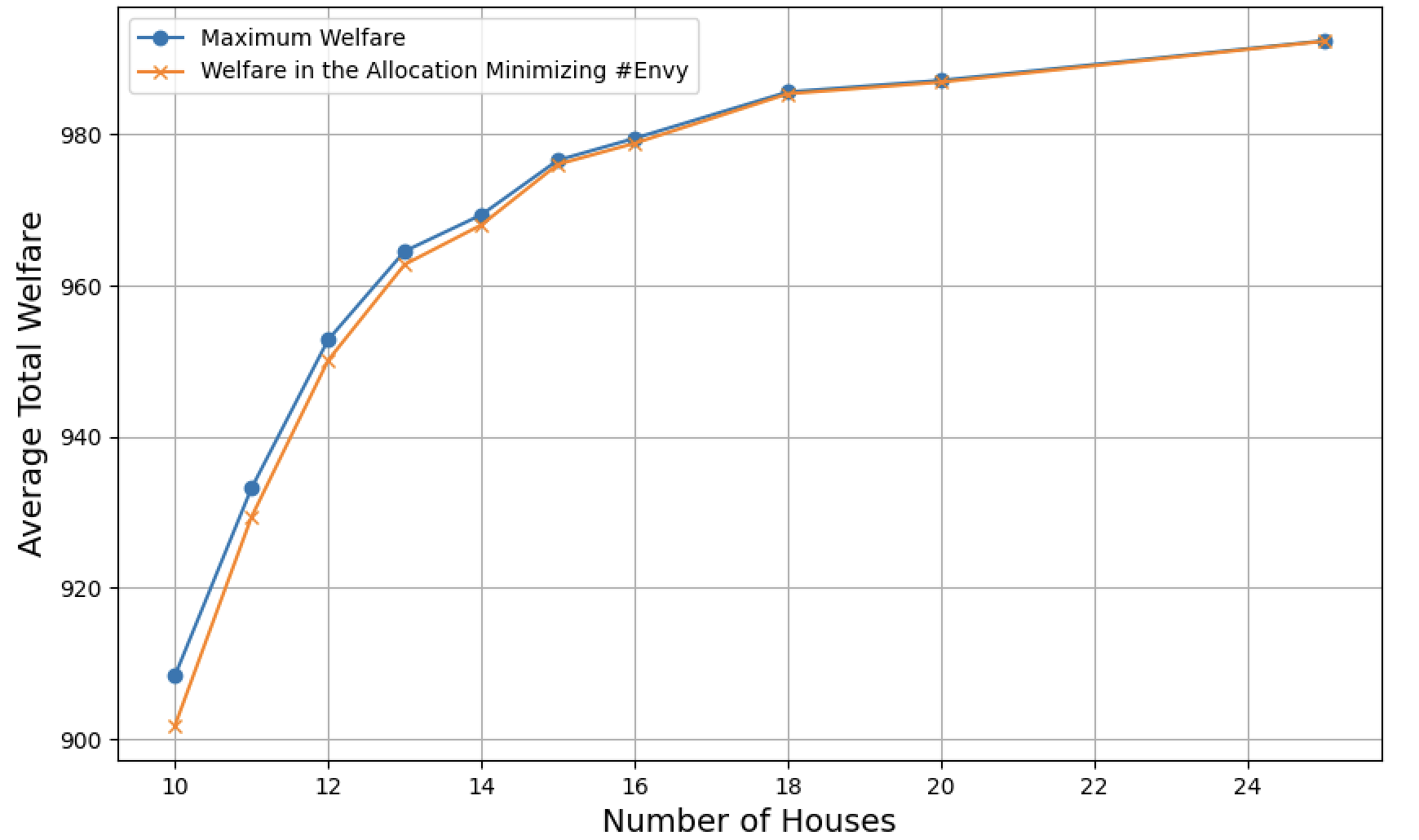}
        \caption{Welfare loss incurred by the allocation that minimizes $\numenvy$ on Single-Peaked Preferences.}
        \label{fig:singlepeaked}
    \end{minipage}\hfill
    \begin{minipage}{0.48\textwidth}
        \centering
        \includegraphics[scale=0.3]{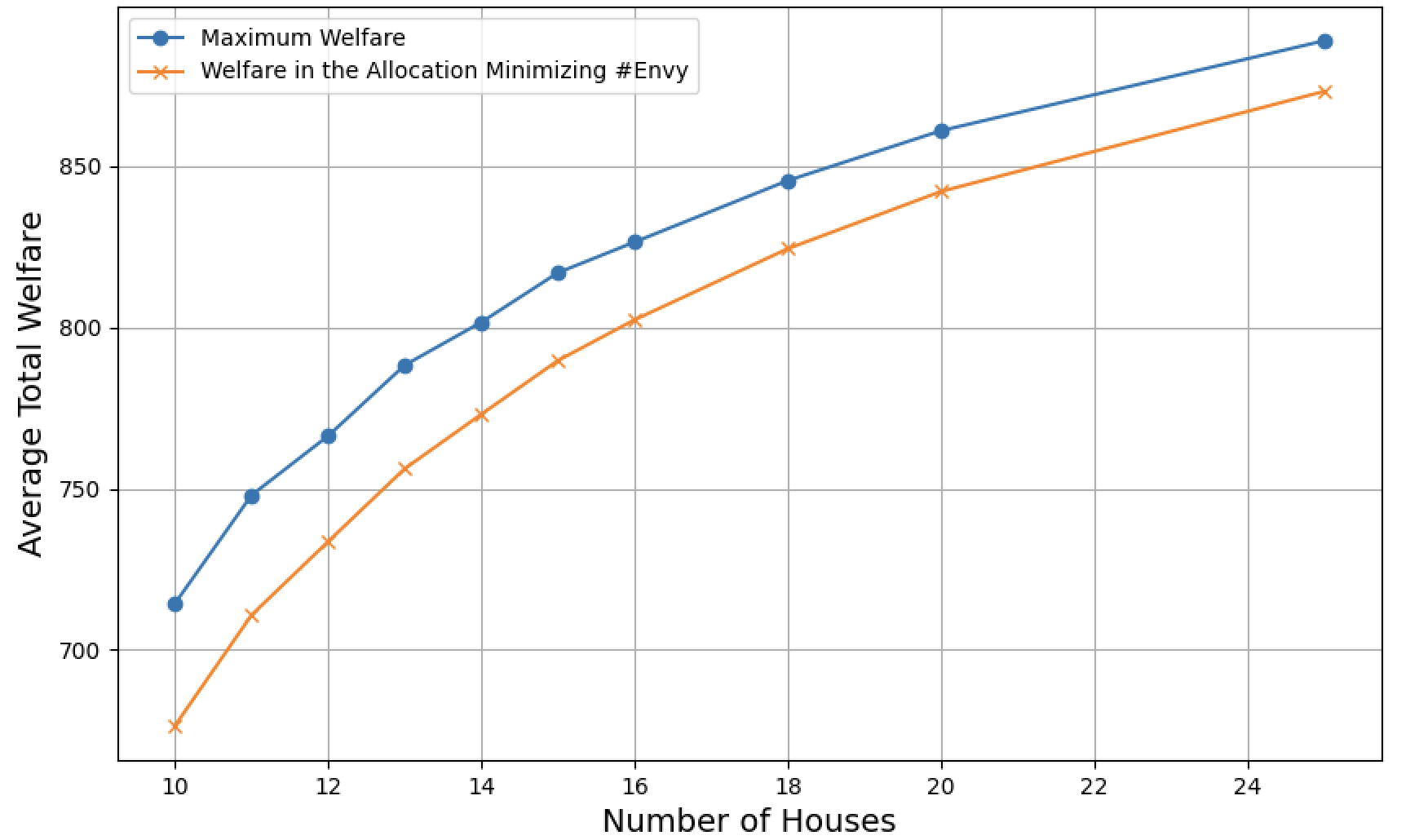}
        \caption{Welfare loss incurred by the allocation that minimizes $\numenvy$ on Single-Dipped Preferences.}
        \label{fig:singledipped}
    \end{minipage}
\end{figure*}

\begin{figure*}[h]
  \centering
  \begin{minipage}[t]{0.45\textwidth}
    \centering
    \includegraphics[width=\linewidth]{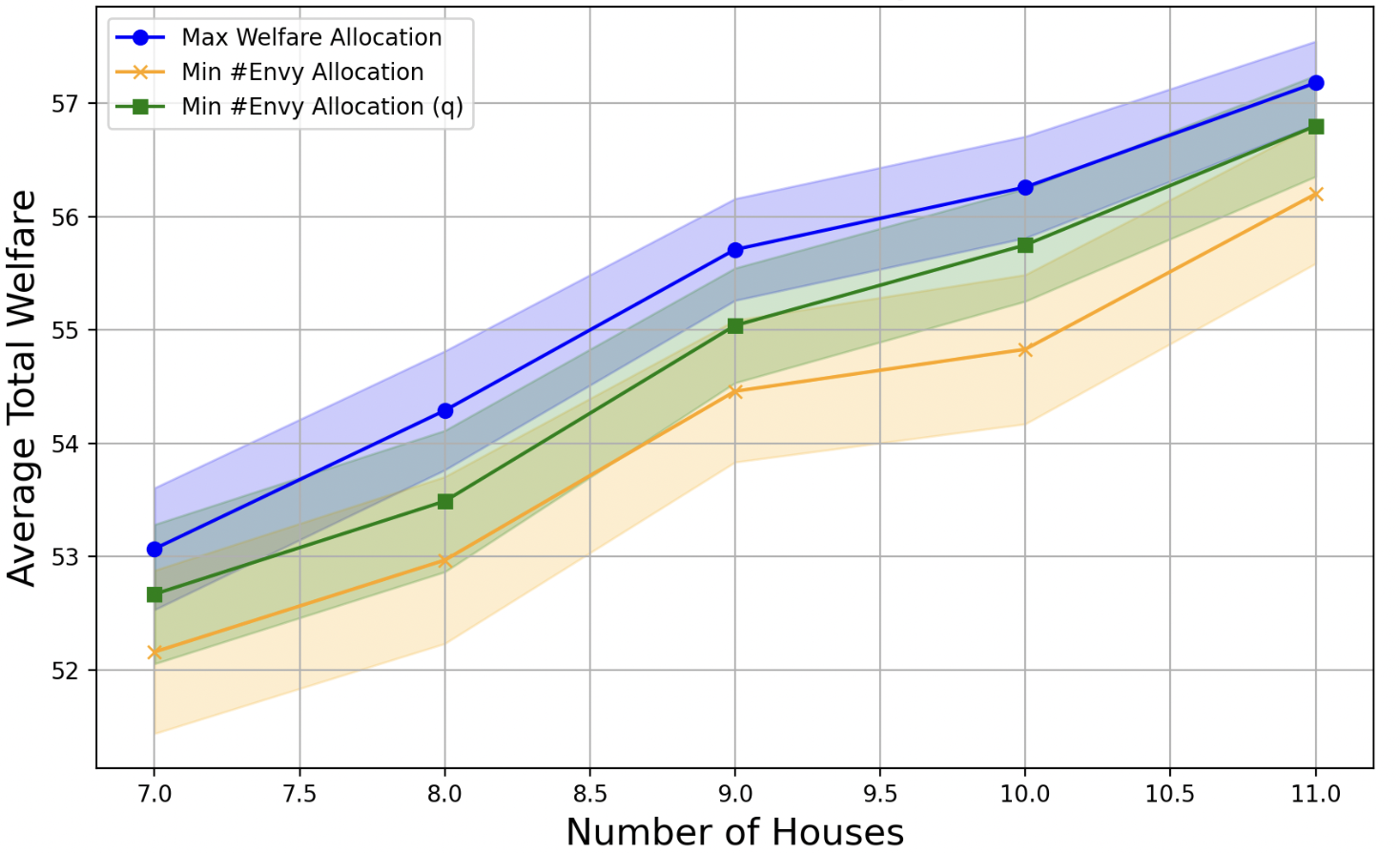}
    \caption*{$q= 1$}
  \end{minipage}
  \hfill
  \begin{minipage}[t]{0.45\textwidth}
    \centering
    \includegraphics[width=\linewidth]{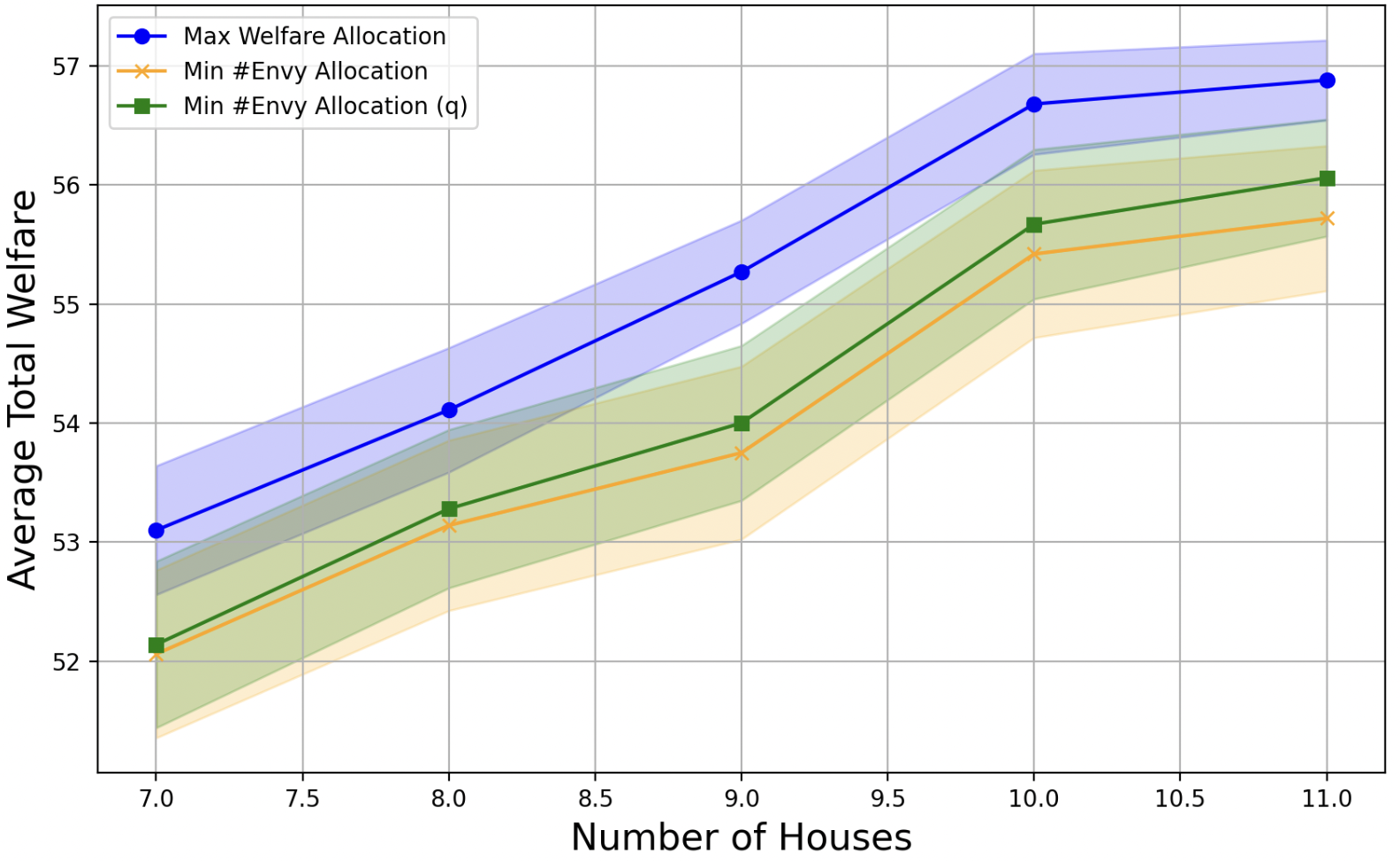}
    \caption*{$q= 3$}
  \end{minipage}
  
  \vspace{0.5cm}
  
  \begin{minipage}[t]{0.45\textwidth}
    \centering
    \includegraphics[width=\linewidth]{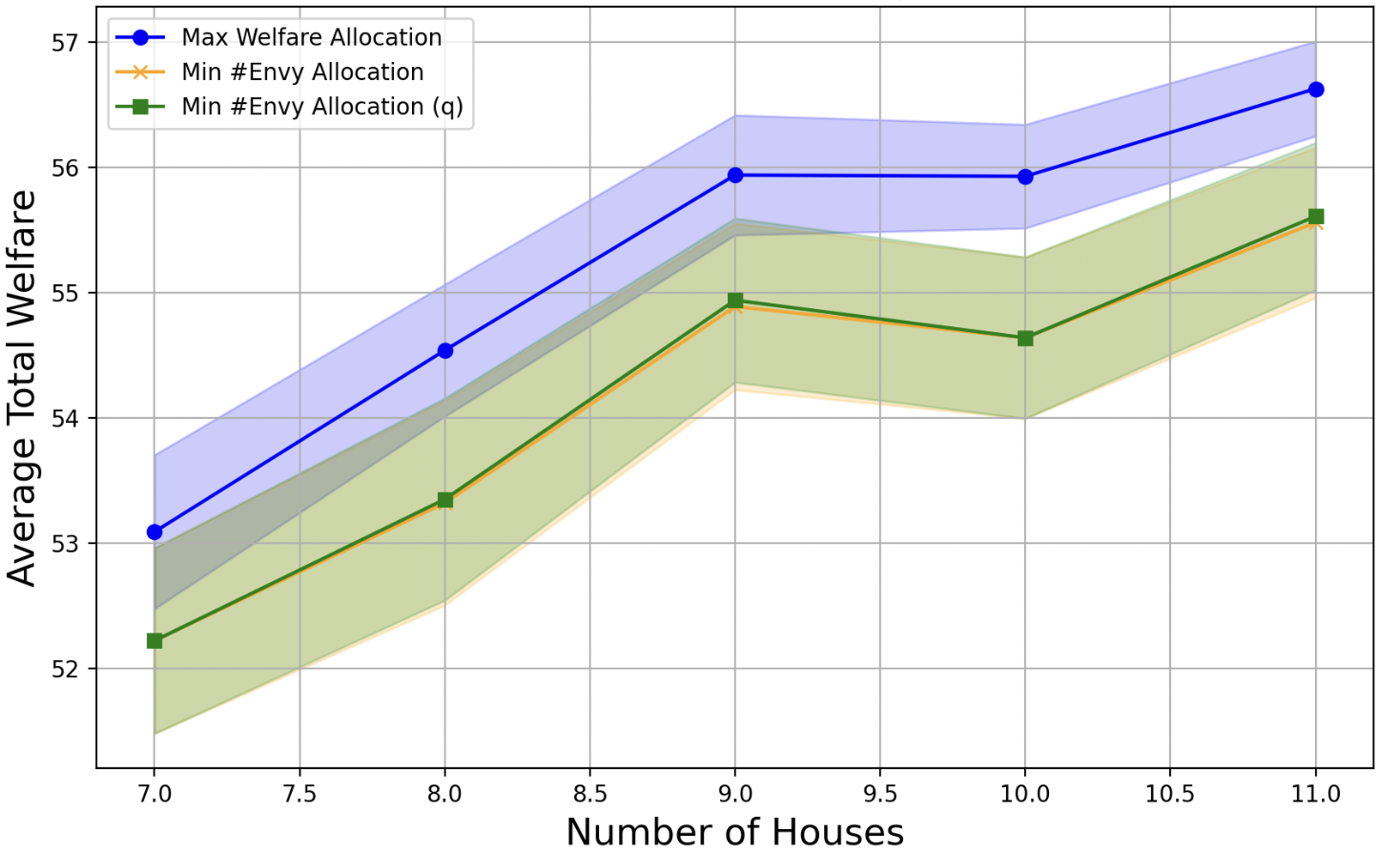}
    \caption*{$q=4$}
  \end{minipage}
  \hfill
  \begin{minipage}[t]{0.45\textwidth}
    \centering
    \includegraphics[width=\linewidth]{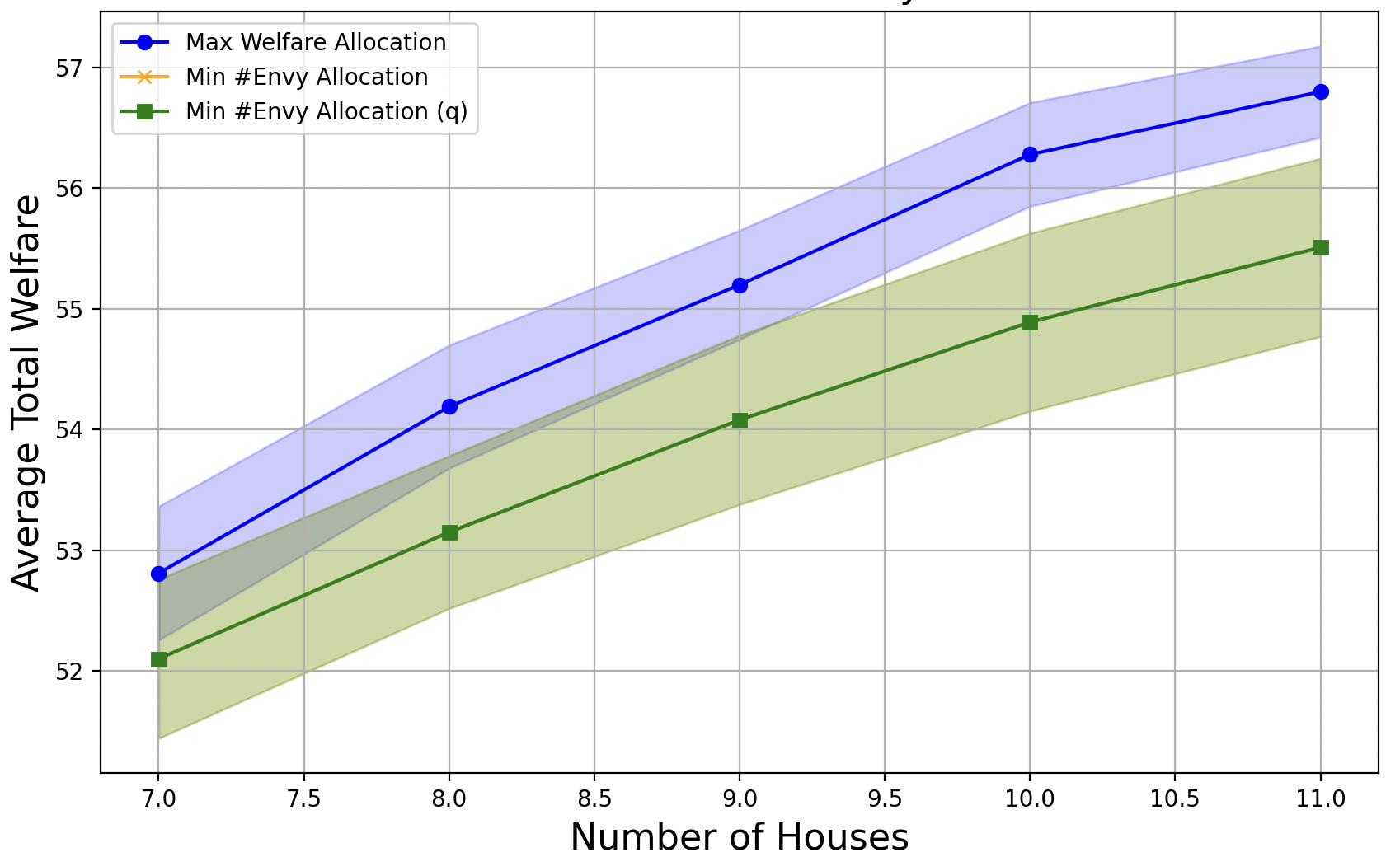}
    \caption*{$q= 6$}
  \end{minipage}
  
  \caption{Welfare loss incurred in allocations that minimize $\numenvy$ with at most $q$ reallocations.}
  \label{fig:fourfigs}
\end{figure*}

\begin{figure*}[h]
  \centering
  \begin{minipage}[t]{0.45\textwidth}
    \centering
    \includegraphics[width=\linewidth]{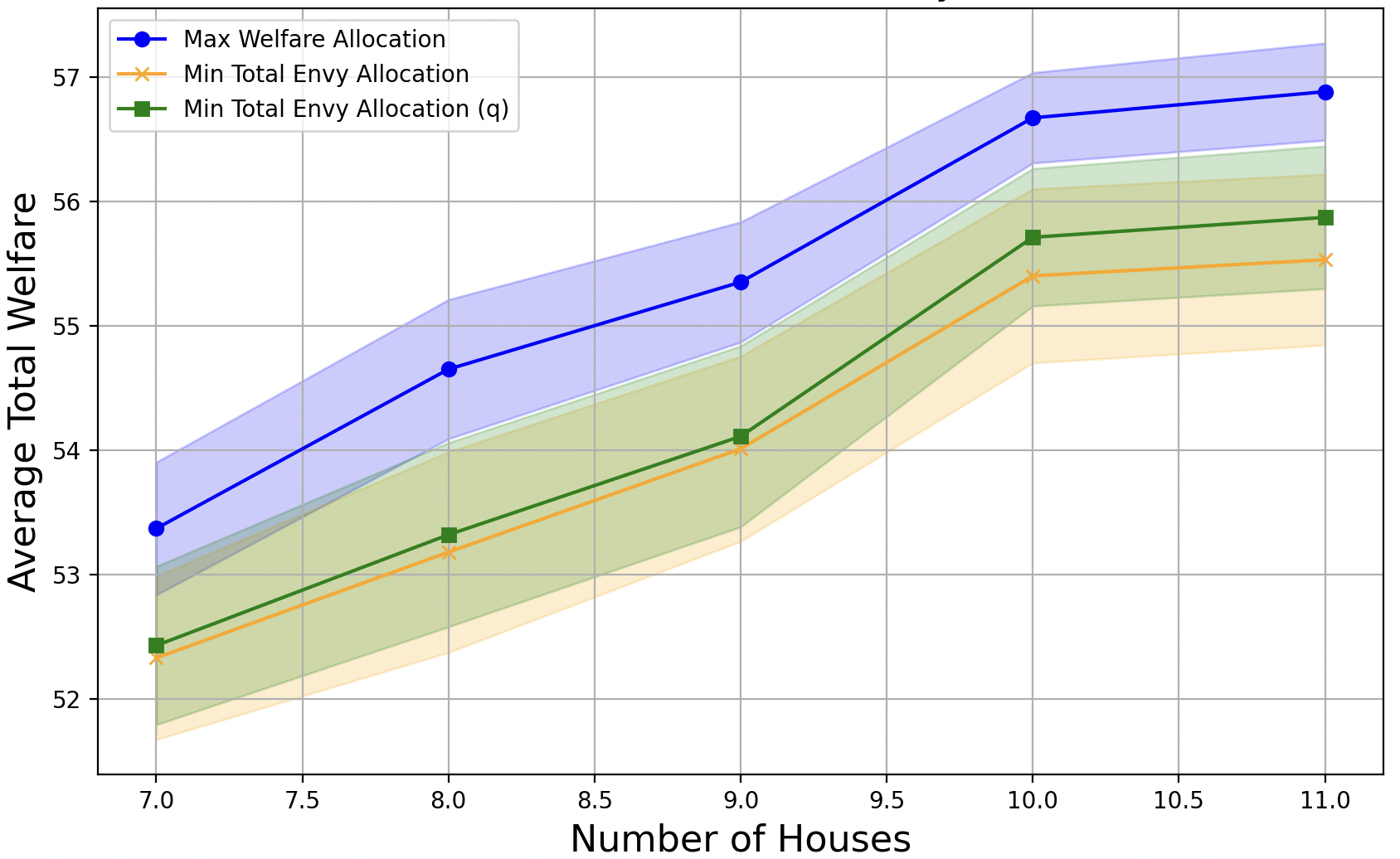}
    \caption*{$q= 1$}
  \end{minipage}
  \hfill
  \begin{minipage}[t]{0.45\textwidth}
    \centering
    \includegraphics[width=\linewidth]{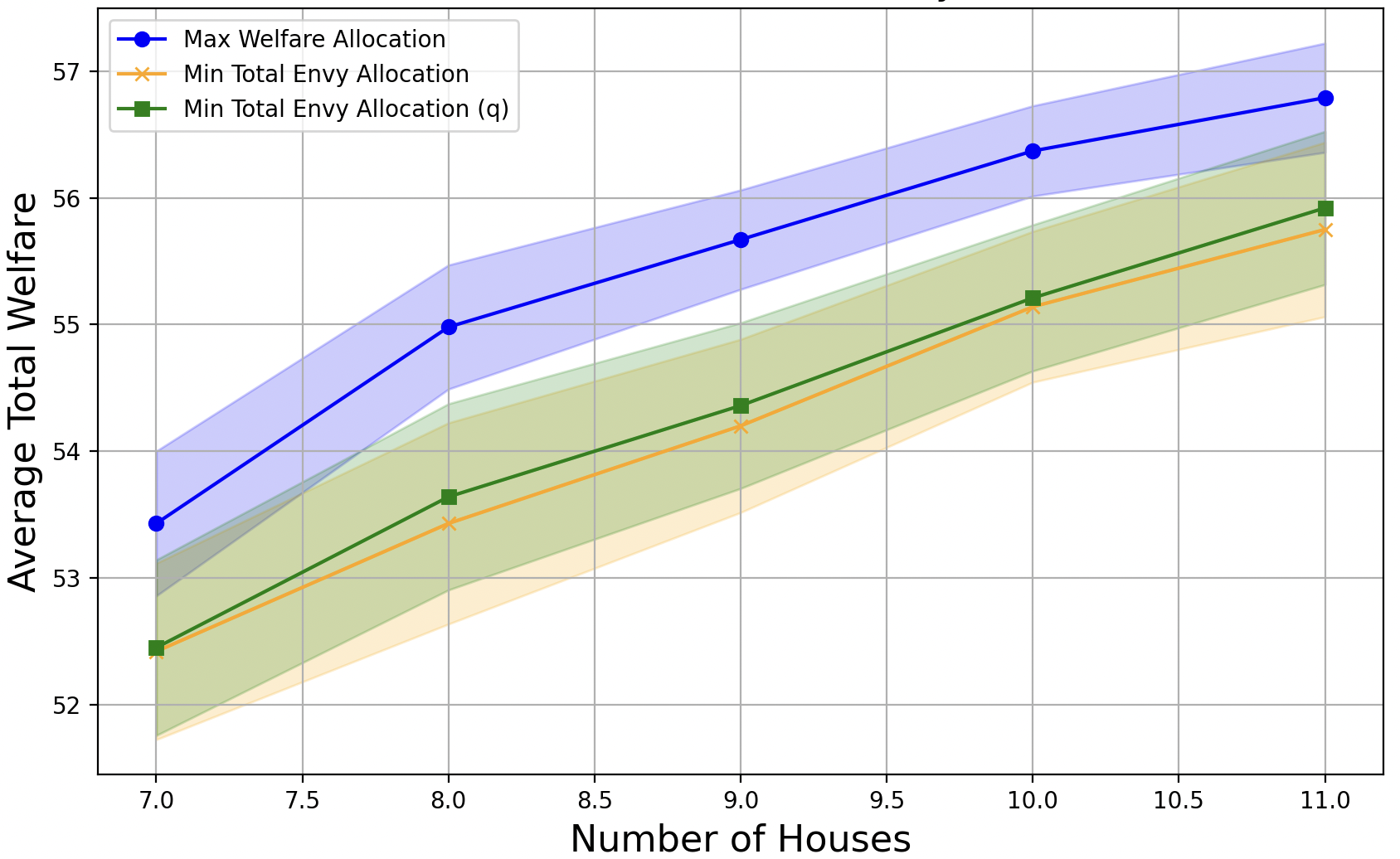}
    \caption*{$q= 3$}
  \end{minipage}
  
  \vspace{0.5cm}
  
  \begin{minipage}[t]{0.45\textwidth}
    \centering
    \includegraphics[width=\linewidth]{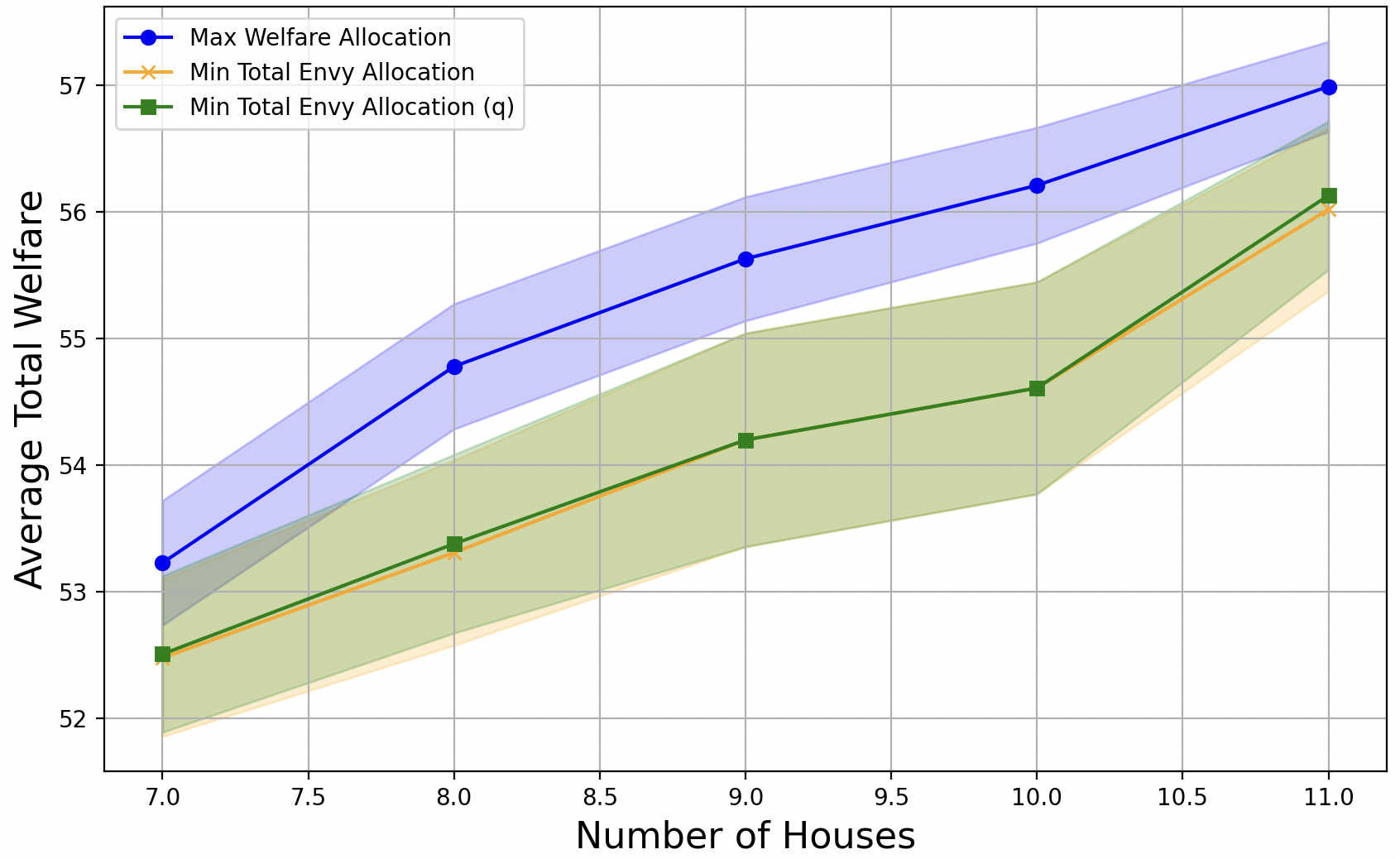}
    \caption*{$q=4$}
  \end{minipage}
  \hfill
  \begin{minipage}[t]{0.45\textwidth}
    \centering
    \includegraphics[width=\linewidth]{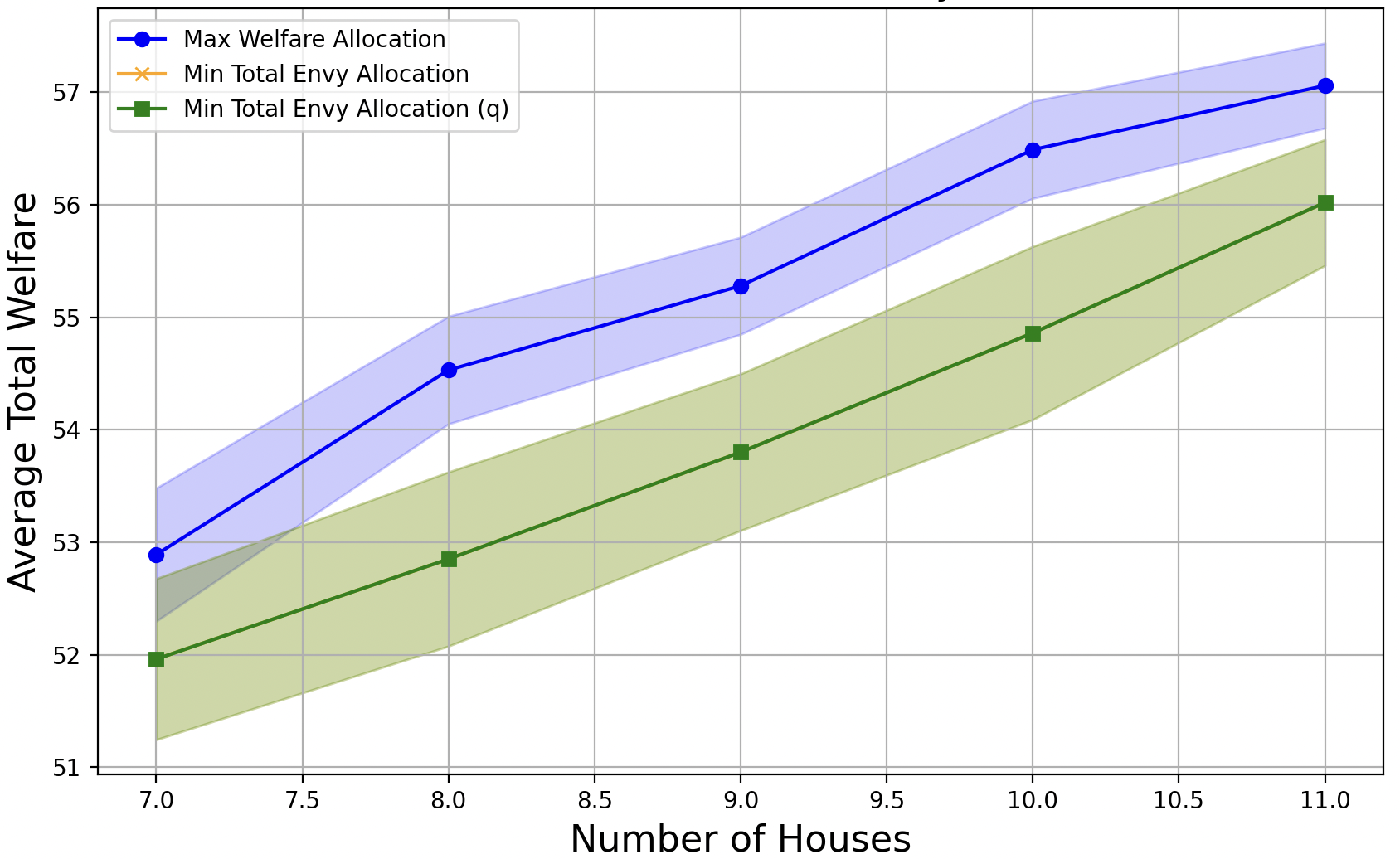}
    \caption*{$q= 6$}
  \end{minipage}
  
  \caption{Welfare loss incurred in allocations that minimize $\tenvy$ with at most $q$ reallocations.}
  \label{fig:fourfigs_total_envy}
\end{figure*}

\end{document}